\documentclass{amsart}
\usepackage{mathpazo,fullpage,amsmath,amsthm,amssymb,xcolor,graphicx}
\newcommand{\ee}{\mathrm{e}}
\newcommand{\ii}{\mathrm{i}}
\newcommand{\dd}{\mathrm{d}}
\newtheorem{rhp}{Riemann-Hilbert Problem}
\newtheorem{lemma}{Lemma}
\newtheorem{theorem}{Theorem}

\title{On the Algebraic Solutions of the Painlev\'e-III (D7) Equation}
\author{R. J. Buckingham \and P. D. Miller}
\thanks{R. J. Buckingham was supported by National Science Foundation grant DMS-2108019.  P. D. Miller was supported by National Science Foundation grant DMS-1812625.}
\address{Department of Mathematical Sciences, University of Cincinnati, P. O. Box 210025, Cincinnati, OH 45221}
\email{buckinrt@uc.edu}
\address{Department of Mathematics, University of Michigan, East Hall, 530 Church St., Ann Arbor, MI 48109}
\email{millerpd@umich.edu}

\date{\today}

\begin{document}
\maketitle
\begin{abstract}
The D7 degeneration of the Painlev\'e-III equation has solutions that are rational functions of $x^{1/3}$ for certain parameter values.  We apply the isomonodromy method to obtain a Riemann-Hilbert representation of these solutions.  We demonstrate the utility of this representation by analyzing rigorously the behavior of the solutions in the large parameter limit.
\end{abstract}

\section{Introduction}
The six Painlev\'e equations were discovered more than a century ago by Paul Painlev\'e in his classification of all second-order first-degree ordinary differential equations algebraic in the unknown function, rational in its first derivative, and analytic in the independent variable having the \emph{Painlev\'e property} that all solutions are meromorphic away from certain fixed singularities whose locations are fully determined from the equation itself.  It turned out that all such equations could be reduced either to previously known  (linear) equations or to one of the equations usually denoted PI, PII, PIII, PIV, PV, and PVI.  See \cite[Ch.\@ XIV]{Ince56}.  All of these equations except for PI involve one or more free parameters.
As a consequence of the Painlev\'e property, solutions of these equations may be regarded as new special functions, and they occur in many applications.
Although typical solutions of Painlev\'e equations are highly-transcendental functions (indeed, they are typically called \emph{Painlev\'e transcendents}), all of the equations except for PI also admit, for certain parameter values, solutions in terms of elementary functions or classical linear special functions (e.g., Airy, Bessel, etc.). The parameter values for these solutions are related by a certain finitely-generated group action, and the group acts on the solutions via B\"acklund transformations that preserve the functional character of the solution (rational, algebraic, etc.).  

From the very beginning, the Painlev\'e equations were recognized by R. Fuchs and Garnier as isomonodromic deformations of certain second-order linear equations (see also \cite{Okamoto86}).  It is convenient to work at the level of first-order systems instead, in which case each Painlev\'e equation can be recognized as the compatibility condition for a certain \emph{Lax pair} of linear equations for an auxiliary $2\times 2$ matrix-valued unknown $\boldsymbol{\Psi}$.  The inverse problem of constructing $\boldsymbol{\Psi}$ from its monodromy data can be usefully formulated as a matrix Riemann-Hilbert problem.  The aforementioned group actions reappear in this context as \emph{Schlesinger transformations}, linear gauge transformations acting on the matrix unknown that preserve the form of the Lax pair as well as the essential monodromy data, affecting only formal monodromy exponents at various singular points.  This means that the Riemann-Hilbert representation of the whole family of special solutions can be obtained once it is known for just one particular choice of parameters and solution, usually called the \emph{seed solution}.  Generally the monodromy data for a given solution cannot be obtained explicitly; however in the case of elementary-function solutions the direct problem for the Lax pair can frequently be solved in terms of classical special functions in which case the monodromy data can be found by applying known connection formul\ae.    The isomonodromy method has been successfully applied to rational solutions of 
\begin{itemize}
\item The PII equation \cite{BuckinghamM12,BuckinghamM14,BuckinghamM15,FlaschkaN80}.  Here if one uses the Jimbo-Miwa \cite{JimboM81} Lax pair for the PII equation, the direct problem for the seed solution is solved in terms of Airy functions.  There is another approach for analyzing directly the Yablonskii-Vorob'ev polynomials that can be used to construct the rational solutions based on a Hankel determinant representation \cite{BertolaB15}; this was shown to be equivalent to the isomonodromy approach in the alternate setting of the Flaschka-Newell Lax pair in \cite{MillerSh17}.
\item The PIII equation (nondegenerate D6 type) \cite{BothnerMS18,BothnerM19}.  The direct problem for the seed solution is solved in terms of confluent hypergeometric functions (Whittaker functions).
\item The PIV equation \cite{BuckinghamM20}.  In this problem there are two distinct families of rational solutions; the direct problem for the so-called generalized Hermite rational solutions is solved in terms of elementary functions while that for the so-called generalized Okamoto rational solutions is solved in terms of Airy functions.  The generalized Hermite solutions can be analyzed by means of a Hankel determinant representation similarly to the PII case; see \cite{Buckingham18}.  Yet another approach in this case exploits a connection with the spectral theory of quantum oscillators in quartic potentials; see \cite{MasoeroR18,MasoeroR21}.
\end{itemize}
The goal of this paper is to show that the isomonodromy method applies equally well to the algebraic solutions of the D7 degeneration of the PIII equation.  These are not rational solutions, although they are rational functions of the cube root of the independent variable. 

\subsection{PIII D7 equation and its algebraic solutions}
The Painlev\'e-III equation of D7 type has the form
\begin{equation}
u''=\frac{(u')^2}{u}-\frac{u'}{x}+\frac{\alpha u^2+\beta}{x}+\frac{\delta}{u},\quad u=u(x),\quad\delta\neq 0,
\label{eq:D7}
\end{equation}
which is a degenerate case of the general (D6) Painlev\'e-III equation
\begin{equation}
u''=\frac{(u')^2}{u}-\frac{u'}{x}+\frac{\alpha u^2+\beta}{x}+\gamma u^3+\frac{\delta}{u}
\end{equation}
in which $\gamma=0$.  According to \cite[\S 32.9(i)]{DLMF}, if $\alpha=1$, $\beta=2n$ with $n\in\mathbb{Z}$, and $\delta=-1$, the D7 equation \eqref{eq:D7} has a solution that is a rational function of $x^{1/3}$.  When $n=0$, that solution is $u(x)=x^{1/3}$.  By scaling $u=\sqrt{8}\widetilde{u}$ and $x=\sqrt{8}\widetilde{x}$ and dropping the tildes, we see that $u(x)=\tfrac{1}{2}x^{1/3}$ is a solution of \eqref{eq:D7} with $\alpha=8$, $\beta=0$, and $\delta=-1$, while for parameters $\alpha=8$,  $\beta=2n$, and $\delta=-1$ there is a solution of \eqref{eq:D7} that is rational in $x^{1/3}$.

If one substitutes into \eqref{eq:D7} with $\alpha=8$, $\beta=2n$, and $\delta=-1$ the formal expression $u=\tfrac{1}{2}x^{1/3}v(x^{1/3})$ where $v(\zeta)=1+v_1\zeta^{-1}+v_2\zeta^{-2}+\cdots$, then obtains a systematic recurrence to determine the coefficients $v_j$ in order that never requires division by zero.  Hence all coefficients are uniquely determined by the value of $n$ (and in fact the coefficients of all odd powers of $\zeta$ vanish), so the solution rational in $\zeta$ is unique for given $n$.  
We denote this solution by $u=u_n(x)$.  There is a sequence $\{R_n(\zeta)\}_{n=0}^\infty$ of \emph{Ohyama polynomials} \cite{Clarkson03,OhyamaKSO06} satisfying a simple recurrence relation such that if $n>0$, $u_n(x)=R_{n+1}(x^{1/3})R_{n-1}(x^{1/3})/R_n(x^{1/3})^2$.  This representation is consistent with the fact that all poles not at the origin of solutions of \eqref{eq:D7} are double poles.  As shown in \cite[Sec.\@ 3]{Clarkson03} and Figure \ref{fig:ohyama}, when displayed in the $\zeta=x^{1/3}$ plane, the poles and zeros of $u_n(x)$ for integers $n>0$ form an approximately crystalline pattern confined within two quasi-triangular regions forming a ``bow-tie'' shape.  One can also deduce from $u_0(x)=\frac{1}{2}x^{1/3}$ and the B\"acklund transformations \eqref{eq:Baecklund-up}--\eqref{eq:Baecklund-down} below (see also \cite[Eqn.\@ 3.4]{Clarkson03}) that 
\begin{equation}
\zeta\mapsto \zeta^*\implies u_n\mapsto u_n^*\quad\text{and}\quad \zeta\mapsto -\zeta\implies u_n\mapsto -u_n,\quad \text{for all }n\in\mathbb{Z}.
\label{eq:symmetries}
\end{equation}
\begin{figure}[h]
\begin{center}
\includegraphics[width=2in]{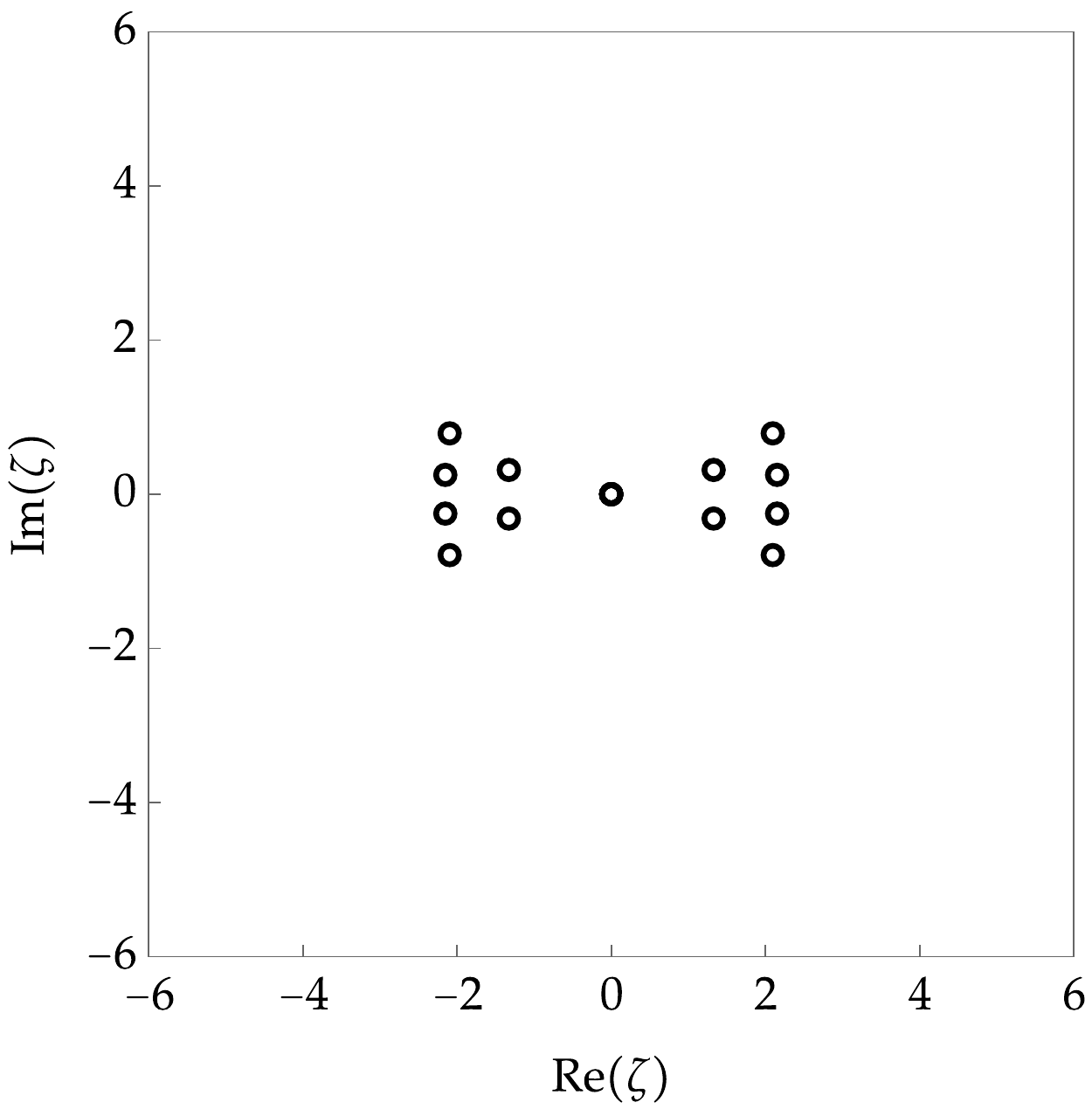}
\includegraphics[width=2in]{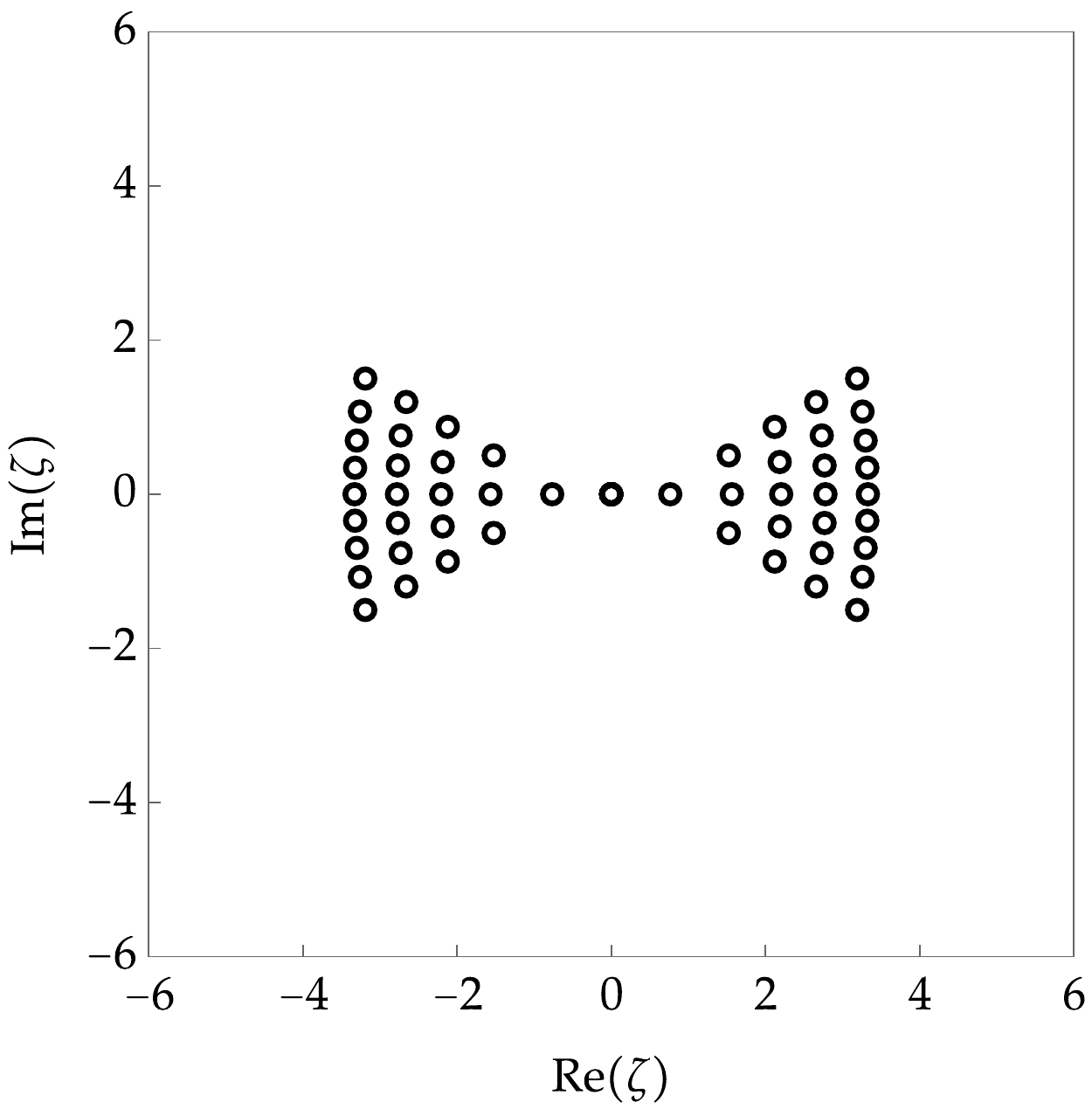}
\includegraphics[width=2in]{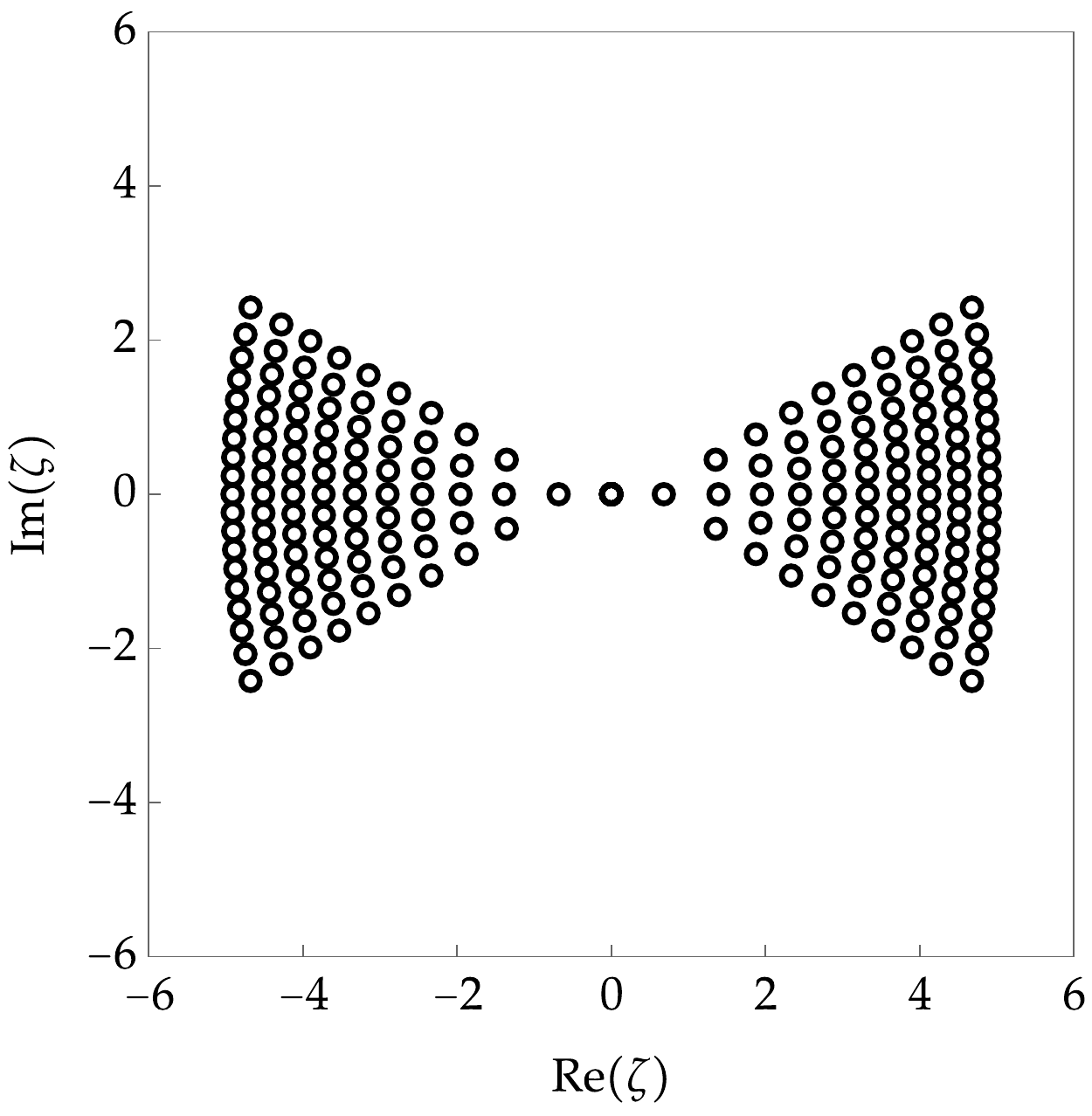}
\end{center}
\caption{Complex zeros of the Ohyama polynomials $R_n(\zeta)$.  Left:  $n=5$.  Center: $n=10$.  Right: $n=20$.}
\label{fig:ohyama}
\end{figure}

\subsection{Lax pair}
In \cite{KitaevV04}, the equation \eqref{eq:D7} is written in the form ($\alpha=-8\epsilon$, $\beta=2ab$, $\delta=b^2$)
\begin{equation}
u''=\frac{(u')^2}{u}-\frac{u'}{x}+\frac{-8\epsilon u^2+2ab}{x}+\frac{b^2}{u},\quad u=u(x).
\label{eq:D7KV}
\end{equation}
The authors of \cite{KitaevV04} introduce the following linear equations on an unknown $\boldsymbol{\Psi}=\boldsymbol{\Psi}(\lambda,x)$:
\begin{equation}
\boldsymbol{\Psi}_\lambda=\boldsymbol{\Lambda}(\lambda,x)\boldsymbol{\Psi}\quad\text{and}\quad
\boldsymbol{\Psi}_x=\mathbf{X}(\lambda,x)\boldsymbol{\Psi}
\label{eq:Lax}
\end{equation}
where the coefficient matrices are defined by 
\begin{equation}
\boldsymbol{\Lambda}(\lambda,x):=-\ii x\sigma_3 -\frac{\ii a}{2\lambda}\sigma_3-\frac{x}{\lambda}\mathbf{J}(x)+\frac{\ii x}{2\lambda^2}\mathbf{K}(x)\quad\text{and}\quad
\mathbf{X}(\lambda,x):=-\ii\lambda\sigma_3 +\frac{\ii a}{2x}\sigma_3 -\mathbf{J}(x)-\frac{\ii}{2\lambda}\mathbf{K}(x),
\label{eq:Lambda-X-matrices}
\end{equation}
with
\begin{equation}
\mathbf{J}(x):=\frac{\epsilon x}{4u(x)}\begin{bmatrix}0 & p(x)\\
q(x) & 0\end{bmatrix}\quad\text{and}\quad
\mathbf{K}(x):=\frac{u(x)}{x}\begin{bmatrix}\epsilon & \ee^{\ii\varphi(x)}\\-\ee^{-\ii\varphi(x)} & -\epsilon\end{bmatrix}.
\label{eq:M1M2}
\end{equation}
The compatibility condition of the overdetermined system \eqref{eq:Lax} is $\boldsymbol{\Lambda}_x-\mathbf{X}_\lambda + [\boldsymbol{\Lambda},\mathbf{X}]=\mathbf{0}$, which upon separating out the coefficients of $\lambda^{-1}$ and $\lambda^{-2}$ amounts to the following system of equations:
\begin{align}
\label{eq:first}
u'(x)-\frac{u(x)}{x}-\frac{x}{2}\left[p(x)\ee^{-\ii\varphi(x)}+q(x)\ee^{\ii\varphi(x)}\right]&=0,\\
\label{eq:second}
-\frac{\epsilon x^2 p'(x)}{4u(x)}+\frac{\epsilon x^2 p(x)u'(x)}{4u(x)^2} - \frac{\epsilon xp(x)}{2u(x)}+\frac{\ii \epsilon a x p(x)}{2u(x)}-2u(x)\ee^{\ii\varphi(x)}&=0,\\
\label{eq:p-explicit}
-\frac{1}{2}u(x)\varphi'(x)\ee^{\ii\varphi(x)}+\frac{1}{2}\ii u'(x)\ee^{\ii\varphi(x)}-\frac{\ii}{2x}u(x)\ee^{\ii\varphi(x)}-\frac{\ii\epsilon^2}{2}xp(x)&=0,\\
\label{eq:fourth}
-\frac{\epsilon x^2 q'(x)}{4u(x)}+\frac{\epsilon x^2 q(x)u'(x)}{4u(x)^2}-\frac{\epsilon x q(x)}{2u(x)}-\frac{\ii\epsilon a x q(x)}{2u(x)}-2u(x)\ee^{-\ii\varphi(x)}&=0,\\
\label{eq:q-explicit}
-\frac{1}{2}u(x)\varphi'(x)\ee^{-\ii\varphi(x)}-\frac{1}{2}\ii u'(x)\ee^{-\ii\varphi(x)}+\frac{\ii}{2x}u(x)\ee^{-\ii\varphi(x)} +\frac{\ii\epsilon^2}{2}xq(x)&=0.
\end{align}
We eliminate $p(x)$ and $q(x)$ explicitly using \eqref{eq:p-explicit} and \eqref{eq:q-explicit}:
\begin{equation}
\begin{split}
p(x)&=\frac{1}{\epsilon^2}\left[\frac{\ii u(x)\varphi'(x)\ee^{\ii\varphi(x)}}{x}+\frac{u'(x)\ee^{\ii\varphi(x)}}{x}-\frac{u(x)\ee^{\ii\varphi(x)}}{x^2}\right]=\frac{1}{\epsilon^2}\frac{\dd}{\dd x}\left(\frac{u(x)\ee^{\ii\varphi(x)}}{x}\right)\\
q(x)&=\frac{1}{\epsilon^2}\left[-\frac{\ii u(x)\varphi'(x)\ee^{-\ii\varphi(x)}}{x}+\frac{u'(x)\ee^{-\ii\varphi(x)}}{x}-\frac{u(x)\ee^{-\ii\varphi(x)}}{x^2}\right]=\frac{1}{\epsilon^2}\frac{\dd}{\dd x}\left(\frac{u(x)\ee^{-\ii\varphi(x)}}{x}\right).
\end{split}
\label{eq:pq}
\end{equation}
Then \eqref{eq:first} becomes
\begin{equation}
\left(1-\frac{1}{\epsilon^2}\right)\left(u'(x)-\frac{u(x)}{x}\right)=0,
\label{eq:epssquared}
\end{equation}
and the sum and difference of \eqref{eq:second} and \eqref{eq:fourth} become, respectively,
\begin{align}
\label{eq:penultimate}
-8\epsilon  u(x)^3-u(x)u'(x)+xu'(x)^2-2au(x)^2\varphi'(x)+xu(x)^2\varphi'(x)^2-xu(x)u''(x)&=0\\
\label{eq:integrable}
u(x)\varphi''(x)+ u'(x)\varphi'(x)-\frac{2a u'(x)}{x}+\frac{2a u(x)}{x^2}=0.
\end{align}
We note that \eqref{eq:integrable} can be written in the form
\begin{equation}
\frac{\dd}{\dd x}\left(u(x)\varphi'(x) -\frac{2a u(x)}{x}\right)=0\implies u(x)\varphi'(x)=\frac{2a u(x)}{x}+b
\label{eq:phiprime}
\end{equation}
where $b$ is an integration constant.  Using this to eliminate $u(x)\varphi'(x)$ from \eqref{eq:penultimate} gives the equation \eqref{eq:D7KV} on $u(x)$.  If we assume that $\epsilon^2=1$, then \eqref{eq:epssquared} places no further conditions on $u(x)$ (otherwise conditions on the parameters $\epsilon$, $a$, and $b$ are required so that \eqref{eq:D7KV} admits a solution of the form $u(x)=Ax$ for $A\neq 0$).  

Using $\epsilon^2=1$, we note that, in terms of the matrix elements of $\mathbf{J}(x)$ and $\mathbf{K}(x)$, the potentials are given by
\begin{equation}
\begin{split}
u(x)&=\epsilon x K_{11}(x)=-\epsilon x K_{22}(x)\\
\ee^{\ii\varphi(x)}&=\epsilon\frac{K_{12}(x)}{K_{11}(x)}=-\epsilon\frac{K_{12}(x)}{K_{22}(x)}\\
\ee^{-\ii\varphi(x)}&=-\epsilon\frac{K_{21}(x)}{K_{11}(x)} = \epsilon\frac{K_{21}(x)}{K_{22}(x)}\\
p(x)&=4J_{12}(x)K_{11}(x)=-4J_{12}(x)K_{22}(x)\\
q(x)&=4J_{21}(x)K_{11}(x)=-4J_{21}(x)K_{22}(x).
\end{split}
\label{eq:potentials-from-M12}
\end{equation}
The equivalence of the two expressions in each case is guaranteed from $\mathrm{tr}(\mathbf{K}(x))=0$, and the compatibility of the expressions for $\ee^{\pm\ii\varphi(x)}$ is implied by $\det(\mathbf{K}(x))=0$.  If we are given the matrices $\mathbf{J}(x)$ (off-diagonal) and $\mathbf{K}(x)$ (singular and nondiagonalizable), we cannot determine the value of $\epsilon=\pm 1$.  However, from \eqref{eq:potentials-from-M12} we can see that $\epsilon\mapsto -\epsilon$ changes the signs of $u(x)$ and $\ee^{\pm\ii\varphi(x)}$ but leaves $p(x)$ and $q(x)$ invariant.  It follows that the equations \eqref{eq:first}--\eqref{eq:q-explicit} are invariant under $\epsilon\mapsto -\epsilon$, and from \eqref{eq:phiprime} we see that the integration constant $b$ is proportional to $\epsilon$.  This is then consistent with the obvious symmetry of \eqref{eq:D7KV}:  $u\mapsto -u$, $\epsilon\mapsto -\epsilon$, $b\mapsto -b$.

Therefore, for the rest of this work, we assume without loss of generality that $\epsilon=-1$ in the parametrization of the matrices $\mathbf{J}(x)$ and $\mathbf{K}(x)$ in the Lax pair \eqref{eq:Lax}--\eqref{eq:M1M2}.  Then eliminating $u'(x)$ between \eqref{eq:p-explicit} and \eqref{eq:q-explicit} allows the integration constant $b$ to be expressed without derivatives as
\begin{equation}
b = -\frac{2au(x)}{x}-\frac{1}{2}\ii xp(x)\ee^{-\ii\varphi(x)} +\frac{1}{2}\ii x q(x)\ee^{\ii\varphi(x)}.
\label{eq:b-no-derivatives}
\end{equation}

\subsection{Outline of the paper}
In Section~\ref{sec:direct-monodromy-seed} we show how simultaneous solutions of the Lax pair equations for the simplest algebraic solution of \eqref{eq:D7KV}, $u_0(x)=\frac{1}{2}x^{1/3}$ for $\epsilon=-1$, $a=0$, and $b=\ii$, can be explicitly obtained in terms of Airy functions.  Then we build canonical bases of simultaneous solutions near the singular points at $\lambda=0$ and $\lambda=\infty$ and apply connection formul\ae\ to determine relations among them.  Next, in Section~\ref{sec:RHPs} we formulate the inverse monodromy problem for the rational solution $u_n(x)$ for $a=-\ii n$ and $n\in\mathbb{Z}$ as a Riemann-Hilbert problem.  We demonstrate that the problem is solved for $n=0$ by the seed Lax pair solutions constructed in Section~\ref{sec:direct-monodromy-seed}.  We then derive differential equations from its solution in general, recovering  the PIII (D7) equation in the form \eqref{eq:D7KV} for $\epsilon=-1$, $a=-\ii n$, and $b=\ii$.  Next we show that the solutions are related for consecutive values of $n$ by Schlesinger transformations and hence the solutions of \eqref{eq:D7KV} are related by B\"acklund transformations.  Since the latter preserve the algebraic character of the solution, this shows that the Riemann-Hilbert problem captures the algebraic solutions of \eqref{eq:D7KV}; we summarize this result in Theorem~\ref{thm:RHP-representation} below.  We conclude this section by making a convenient transformation of the Riemann-Hilbert problem that has the effect of simplifying the data for the problem.  Then, in Section~\ref{sec:application} we use the Riemann-Hilbert representation of the algebraic solution $u_n(x)$ to consider the asymptotic behavior of the solution for large $n$.  After some preliminary rescaling of the independent variable $x$ and the spectral variable $\lambda$ to balance exponents, we present for background some formal arguments applying similar scalings to the PIII (D7) equation itself.  This formal approach suggests two types of approximations:  a slowly-varying ``equilibrium'' approximation and an approximation based on the Weierstra\ss\ elliptic function.  Then, we return to the rescaled Riemann-Hilbert problem and introduce an appropriate $g$-function via a family of spectral curves.  The family of spectral curves mirrors the family of formally-approximate solutions of the PIII (D7) equation in a remarkable fashion.  Finally, we carry out a rigorous analysis of the function $u_n(x)$ for large $n$ and rescaled $x>0$ sufficiently large and recover one of the equilibrium solutions predicted by the formal theory (see Theorem~\ref{thm:positive-exterior} below).  With some extra steps, the method allows this result to be continued to the exterior of the ``bow-tie'' domain of the $x^{1/3}$ plane and hence makes rigorous some of the observations made in \cite{Clarkson03}.  See Theorem~\ref{thm:general-exterior} below.

\subsection{Notation}
Throughout our paper, square matrices are indicated by boldface capital letters with the exception of the identity $\mathbb{I}$ and the Pauli matrices:
\begin{equation}
\sigma_1:=\begin{bmatrix}0&1\\1&0\end{bmatrix},\quad\sigma_2:=\begin{bmatrix}0&-\ii\\\ii & 0\end{bmatrix},\quad\sigma_3:=\begin{bmatrix}1&0\\0&-1\end{bmatrix}.
\end{equation}
For a function (matrix or scalar-valued) analytic off an oriented contour arc, we use subscripts $+$ (resp., $-$) to indicate boundary values at a point of the arc from the left (resp., right).

\section{The direct monodromy problem for the seed solution}
\label{sec:direct-monodromy-seed}
\subsection{Lax pair for the seed solution}
To obtain the algebraic solutions of the Painlev\'e-III D7 equation in the form \eqref{eq:D7KV}, we assume that $\epsilon=-1$, $a=-\ii n$, and $b=\ii$ for $n\in\mathbb{Z}$.  In the case that $n=0$ we have the algebraic solution $u(x)=\tfrac{1}{2}x^{1/3}$.  We call this the \emph{seed solution} for the algebraic solution family, and $n=0$ is the \emph{seed parameter value}.  Taking $a=0$, $b=\ii$, and $u(x)=\tfrac{1}{2}x^{1/3}$ in \eqref{eq:phiprime} gives $\varphi'(x)=2\ii x^{-1/3}$ so that  $\varphi(x)=3\ii x^{2/3}+\varphi_0$ where $\varphi_0$ is an integration constant that we will take to be zero.  From \eqref{eq:pq} and $\epsilon=-1$ we then get 
\begin{equation}
p(x)=-\left(\frac{1}{x}+\frac{1}{3}x^{-5/3}\right)\ee^{-3x^{2/3}}\quad\text{and}\quad
q(x)=-\left(-\frac{1}{x}+\frac{1}{3}x^{-5/3}\right)\ee^{3x^{2/3}}.
\label{eq:pq0}
\end{equation}
It follows that for the seed solution, the matrices $\mathbf{J}(x)$ and $\mathbf{K}(x)$ in \eqref{eq:M1M2} are given by
\begin{equation}
\mathbf{J}(x)=\begin{bmatrix}0 & \left(\tfrac{1}{6}x^{-1}+\tfrac{1}{2}x^{-1/3}\right)\ee^{-3x^{2/3}}\\
\left(\tfrac{1}{6}x^{-1}-\tfrac{1}{2}x^{-1/3}\right)\ee^{3x^{2/3}}&0\end{bmatrix},\;
\mathbf{K}(x)=\frac{1}{2}x^{-2/3}\begin{bmatrix}-1 & \ee^{-3x^{2/3}}\\-\ee^{3x^{2/3}} & 1\end{bmatrix}.
\label{eq:M1M2-seed}
\end{equation}
We make a gauge transformation to remove the exponential factors on the off-diagonal elements of the coefficient matrices:  $\boldsymbol{\Psi}=\ee^{-3x^{2/3}\sigma_3/2}\boldsymbol{\Phi}$, which implies that also $\boldsymbol{\Psi}_x=\ee^{-3x^{2/3}\sigma_3/2}(\boldsymbol{\Phi}_x-x^{-1/3}\sigma_3\boldsymbol{\Phi})$.  In terms of $\boldsymbol{\Phi}$ then, the (compatible) Lax pair equations for the seed solution read
\begin{equation}
\boldsymbol{\Phi}_\lambda=\left(-\ii x\sigma_3-\frac{x}{\lambda}\begin{bmatrix}0 & \tfrac{1}{6}x^{-1}+\tfrac{1}{2}x^{-1/3}\\\tfrac{1}{6}x^{-1}-\tfrac{1}{2}x^{-1/3} & 0\end{bmatrix} +\frac{\ii x^{1/3}}{4\lambda^2}\begin{bmatrix}-1 & 1\\-1 & 1\end{bmatrix}\right)\boldsymbol{\Phi},
\label{eq:seedLaxlambda}
\end{equation}
\begin{equation}
\boldsymbol{\Phi}_x=\left(-\ii\lambda\sigma_3-\begin{bmatrix}-x^{-1/3} & \tfrac{1}{6}x^{-1}+\tfrac{1}{2}x^{-1/3}\\
\tfrac{1}{6}x^{-1}-\tfrac{1}{2}x^{-1/3} & x^{-1/3}\end{bmatrix}-\frac{\ii x^{-2/3}}{4\lambda}\begin{bmatrix}-1 & 1\\-1 & 1\end{bmatrix}\right)\boldsymbol{\Phi}.
\label{eq:seedLaxx}
\end{equation}
Our strategy to obtain a fundamental simultaneous solution matrix for \eqref{eq:seedLaxlambda}--\eqref{eq:seedLaxx} is to deal first with the equation \eqref{eq:seedLaxx}.  For this purpose, it is convenient to simplify the latter equation as much as possible.  We observe that multiples of the same diagonal coefficient matrix appear in the terms most singular at $\lambda=\infty$ in both equations \eqref{eq:seedLaxlambda}--\eqref{eq:seedLaxx},  and at the same time the coefficient matrix of the terms most singular at $\lambda=0$ is nilpotent (that this coefficient matrix is defective for all $x$ is what distinguishes the D7 Lax pair from the most general D6 Lax pair given by Jimbo and Miwa) which is a simplification to be retained.  Hence 
we will remove the leading terms at $\lambda=\infty$ from \eqref{eq:seedLaxx} by introducing the shearing transformation
\begin{equation}
\lambda=X^{-1}\Lambda,\; x=X\iff \Lambda=x\lambda,\; X=x.
\end{equation}
Then the partial derivatives with respect to $\lambda$ and $x$ transform as follows:
\begin{equation}
\frac{\partial}{\partial\Lambda}=\frac{1}{X}\frac{\partial}{\partial\lambda}\quad\text{and}\quad\frac{\partial}{\partial X}=\frac{\partial}{\partial x}-\frac{\Lambda}{X^2}\frac{\partial}{\partial\lambda}.
\end{equation}
In terms of the new variables $\Lambda$ and $X$, the compatible system \eqref{eq:seedLaxlambda}--\eqref{eq:seedLaxx} becomes
\begin{equation}
\boldsymbol{\Phi}_\Lambda = \left(-\ii \sigma_3-\frac{X}{\Lambda}\begin{bmatrix}0 & \tfrac{1}{6}X^{-1}+\tfrac{1}{2}X^{-1/3}\\
\tfrac{1}{6}X^{-1}-\tfrac{1}{2}X^{-1/3} & 0\end{bmatrix}+\frac{\ii X^{4/3}}{4\Lambda^2}\begin{bmatrix}-1 & 1\\-1 & 1\end{bmatrix}\right)\boldsymbol{\Phi}
\label{eq:seedLaxLambda}
\end{equation}
and
\begin{equation}
\boldsymbol{\Phi}_X=\left(X^{-1/3}\sigma_3-\frac{\ii X^{1/3}}{2\Lambda}\begin{bmatrix}-1&1\\-1&1\end{bmatrix}\right)\boldsymbol{\Phi}.
\label{eq:seedLaxX}
\end{equation}
(We note that the shearing transformation had the added benefit of removing the off-diagonal terms from the coefficient of $\lambda^0$.)
To solve \eqref{eq:seedLaxX} for fixed $\Lambda\in\mathbb{C}$, we make a change of independent variable:
\begin{equation}
Z=X^{1/3} \implies \frac{\dd}{\dd X} = \frac{1}{3}X^{-2/3}\frac{\dd}{\dd Z} =\frac{1}{3}Z^{-2}\frac{\dd}{\dd Z}.
\end{equation}
Then \eqref{eq:seedLaxX} becomes
\begin{equation}
\boldsymbol{\Phi}_Z=\left(3Z\sigma_3-\frac{3\ii Z^3}{2\Lambda}\begin{bmatrix}-1&1\\-1&1\end{bmatrix}\right)\boldsymbol{\Phi}.
\end{equation}
Introducing the constant gauge transformation
\begin{equation}
\boldsymbol{\Omega}:=\mathbf{G}\boldsymbol{\Phi},\quad \mathbf{G}:=\frac{1}{\sqrt{2}}\begin{bmatrix}1&-1\\1&1\end{bmatrix}
\label{eq:gauge-Omega-Phi}
\end{equation}
and noting that 
\begin{equation}
\mathbf{G}\sigma_3\mathbf{G}^{-1}=\sigma_1\quad\text{and}\quad
\mathbf{G}\begin{bmatrix}-1&1\\-1&1\end{bmatrix}\mathbf{G}^{-1}=\begin{bmatrix}0&0\\-2&0\end{bmatrix},
\label{eq:conjugation-identities-I}
\end{equation}
we get a purely off-diagonal system
\begin{equation}
\boldsymbol{\Omega}_Z=\left(3Z\sigma_1-\frac{3\ii Z^3}{2\Lambda}\begin{bmatrix}0&0\\-2&0\end{bmatrix}\right)\boldsymbol{\Omega}=\begin{bmatrix}0 & 3Z\\3Z+3\ii\Lambda^{-1}Z^3 & 0\end{bmatrix}\boldsymbol{\Omega}.
\label{eq:Omega-Z}
\end{equation}
Therefore, denoting by $D(Z)$ and $S(Z)$ the first-row and second-row elements respectively of either column of $\boldsymbol{\Omega}$, 
\begin{equation}
S_Z=(3Z+3\ii\Lambda^{-1}Z^3)D\quad\text{and}\quad D_Z=3ZS.
\label{eq:SD-system}
\end{equation}
Eliminating $S$ using the second equation gives a second-order equation on $D$:
\begin{equation}
Z^2D_{ZZ}-ZD_Z-4Z^4(\tfrac{9}{4}+\tfrac{9\ii}{4}\Lambda^{-1}Z^2)D=0.
\label{eq:DODE}
\end{equation}
It seems like a good idea to consider the substitution $W:=\tfrac{9}{4}+\tfrac{9\ii}{4}\Lambda^{-1}Z^2$.  Then a simple computation shows that also
\begin{equation}
\frac{\dd}{\dd Z}=\frac{9\ii Z}{2\Lambda}\frac{\dd}{\dd W},\quad \frac{\dd^2}{\dd Z^2}=\frac{9\ii}{2\Lambda}\frac{\dd}{\dd W}-\frac{81Z^2}{4\Lambda^2}\frac{\dd^2}{\dd W^2}\implies Z^2D_{ZZ}-ZD_Z=-\frac{81Z^4}{4\Lambda^2}D_{WW}.
\end{equation}
Therefore, the equation \eqref{eq:DODE} in fact becomes
\begin{equation}
D_{WW}+\frac{16\Lambda^2}{81}WD=0
\end{equation}
which is a scaling of Airy's equation.  Its general solution is
\begin{equation}
D=A\mathrm{Ai}\left(\left(-\frac{16\Lambda^2}{81}\right)^{1/3}W\right)+B\mathrm{Bi}\left(\left(-\frac{16\Lambda^2}{81}\right)^{1/3}W\right)
\end{equation}
where $A,B$ are independent of $W$ (they may depend on $\Lambda$, however).  

If $f(\xi)$ is any solution of Airy's equation $f''(\xi)-\xi f(\xi)=0$, then we have
\begin{equation}
D(Z)=f(\xi),\quad \xi:=\left(\frac{3}{2}\right)^{2/3}(\ii\Lambda)^{2/3}\left(1-\frac{Z^2}{\ii\Lambda}\right),
\label{eq:D-zeta}
\end{equation}
and from \eqref{eq:SD-system},
\begin{equation}
S(Z)=\frac{D_Z(Z)}{3Z}=-\left(\frac{2}{3}\right)^{1/3}(\ii\Lambda)^{-1/3}f'(\xi).
\end{equation}
So, if $f_1(\xi)$ and $f_2(\xi)$ form a fundamental pair of solutions of Airy's equation $f''(\xi)-\xi f(\xi)=0$, then the general solution of \eqref{eq:Omega-Z} is $\boldsymbol{\Omega}=\boldsymbol{\Omega}_0(Z,\Lambda)\mathbf{H}(\Lambda)$ where  $\mathbf{H}(\Lambda)$ is an arbitrary matrix-valued function of $\Lambda$ only, and where
\begin{equation}
\boldsymbol{\Omega}_0(Z,\Lambda):=\boldsymbol{\Delta}(\Lambda)\mathbf{F}(\zeta),\quad \boldsymbol{\Delta}(\Lambda):=\frac{1}{\sqrt{2}}\begin{bmatrix}1&0\\0 & 
-\left(\tfrac{2}{3}\right)^{1/3}(\ii\Lambda)^{-1/3}
\end{bmatrix},\quad\mathbf{F}(\xi):=\begin{bmatrix}f_1(\xi) & f_2(\xi)\\f_1'(\xi) & f_2'(\xi)\end{bmatrix},
\label{eq:Omega0}
\end{equation}
and $\xi$ is defined in terms of $Z$ and $\Lambda$ by \eqref{eq:D-zeta}.  (The factor of $1/\sqrt{2}$ is arbitrary and is chosen for later convenience.)

Using \eqref{eq:conjugation-identities-I} and 
\begin{equation}
\mathbf{G}\sigma_1\mathbf{G}^{-1}=-\sigma_3\quad\text{and}\quad
\mathbf{G}\begin{bmatrix}0&-1\\1&0\end{bmatrix}\mathbf{G}^{-1}=\begin{bmatrix}0&1\\-1&0\end{bmatrix},
\label{eq:conjugation-identities-II}
\end{equation}
we derive from \eqref{eq:seedLaxLambda} and \eqref{eq:gauge-Omega-Phi} that
\begin{equation}
\boldsymbol{\Omega}_\Lambda=\begin{bmatrix}\displaystyle\frac{1}{6\Lambda} & \displaystyle -\ii-\frac{X^{2/3}}{2\Lambda}  \vspace{.02in} \\
\displaystyle -\ii+\frac{X^{2/3}}{2\Lambda}-\frac{\ii X^{4/3}}{2\Lambda^2}&\displaystyle-\frac{1}{6\Lambda}\end{bmatrix}\boldsymbol{\Omega}=\begin{bmatrix}\displaystyle\frac{1}{6\Lambda} & \displaystyle -\ii-\frac{Z^2}{2\Lambda} \vspace{.02in} \\
\displaystyle -\ii+\frac{Z^2}{2\Lambda}-\frac{\ii Z^4}{2\Lambda^2} & \displaystyle -\frac{1}{6\Lambda}\end{bmatrix}\boldsymbol{\Omega}.
\label{eq:Omega-Lambda}
\end{equation}
Since the systems \eqref{eq:Omega-Z} and \eqref{eq:Omega-Lambda} are compatible, it is possible to find $\mathbf{H}(\Lambda)$ so that $\boldsymbol{\Omega}=\boldsymbol{\Omega}_0(Z,\Lambda)\mathbf{H}(\Lambda)$ solves not only \eqref{eq:Omega-Z} but also \eqref{eq:Omega-Lambda}.  Substituting into the latter shows that $\mathbf{H}(\Lambda)$ must solve the system
\begin{equation}
\mathbf{H}'(\Lambda)=\left(\boldsymbol{\Omega}_0(Z,\Lambda)^{-1}\begin{bmatrix}\displaystyle\frac{1}{6\Lambda} & \displaystyle -\ii-\frac{Z^2}{2\Lambda} \vspace{.02in} \\
\displaystyle -\ii+\frac{Z^2}{2\Lambda}-\frac{\ii Z^4}{2\Lambda^2} & \displaystyle -\frac{1}{6\Lambda}\end{bmatrix}\boldsymbol{\Omega}_0(Z,\Lambda)-\boldsymbol{\Omega}_0(Z,\Lambda)^{-1}\frac{\partial\boldsymbol{\Omega}_0}{\partial\Lambda}(Z,\Lambda)\right)\mathbf{H}(\Lambda).
\end{equation}
Using \eqref{eq:Omega0} shows that 
\begin{equation}
\begin{split}
\boldsymbol{\Omega}_0(Z,\Lambda)^{-1}\frac{\partial\boldsymbol{\Omega}_0}{\partial\Lambda}(Z,\Lambda)&=\boldsymbol{\Omega}_0(Z,\Lambda)^{-1}\boldsymbol{\Delta}'(\Lambda)\boldsymbol{\Delta}(\Lambda)^{-1}\boldsymbol{\Omega}_0(Z,\Lambda) + \frac{\partial\xi}{\partial\Lambda}\boldsymbol{\Omega}_0(Z,\Lambda)^{-1}\boldsymbol{\Delta}(\Lambda)\mathbf{F}'(\xi)\\
\boldsymbol{\Omega}_0(Z,\Lambda)^{-1}\frac{\partial\boldsymbol{\Omega}_0}{\partial Z}(Z,\Lambda)&= \frac{\partial\zeta}{\partial Z}\boldsymbol{\Omega}_0(Z,\Lambda)^{-1}\boldsymbol{\Delta}(\Lambda)\mathbf{F}'(\xi).
\end{split}
\end{equation}
Eliminating $\mathbf{F}'(\xi)$ between these two relations gives
\begin{equation}
\boldsymbol{\Omega}_0(Z,\Lambda)^{-1}\frac{\partial\boldsymbol{\Omega}_0}{\partial\Lambda}(Z,\Lambda)=\boldsymbol{\Omega}_0(Z,\Lambda)^{-1}\boldsymbol{\Delta}'(\Lambda)\boldsymbol{\Delta}(\Lambda)^{-1}\boldsymbol{\Omega}_0(Z,\Lambda)+\left(\frac{\partial\xi}{\partial Z}\right)^{-1}\frac{\partial\xi}{\partial\Lambda}\boldsymbol{\Omega}_0(Z,\Lambda)^{-1}\frac{\partial\boldsymbol{\Omega}_0}{\partial Z}(Z,\Lambda).
\end{equation}
The $Z$-derivative that remains can be eliminated because $\boldsymbol{\Omega}_0(Z,\Lambda)$ is a solution of \eqref{eq:Omega-Z}.  Using also that, from \eqref{eq:D-zeta},
\begin{equation}
\left(\frac{\partial\xi}{\partial Z}\right)^{-1}\frac{\partial\xi}{\partial\Lambda}=\frac{1}{3\ii Z}-\frac{Z}{6\Lambda},
\end{equation}
we therefore find that $\mathbf{H}'(\Lambda)=\boldsymbol{\Omega}_0(Z,\Lambda)^{-1}\mathbf{C}(Z,\Lambda)\boldsymbol{\Omega}_0(Z,\Lambda)\mathbf{H}(\Lambda)$, where
\begin{equation}
\mathbf{C}(Z,\Lambda):=\begin{bmatrix}\displaystyle\frac{1}{6\Lambda} & \displaystyle -\ii-\frac{Z^2}{2\Lambda} \vspace{.02in}\\
\displaystyle -\ii+\frac{Z^2}{2\Lambda}-\frac{\ii Z^4}{2\Lambda^2} & \displaystyle -\frac{1}{6\Lambda}\end{bmatrix}-\boldsymbol{\Delta}'(\Lambda)\boldsymbol{\Delta}(\Lambda)^{-1} -
\left(\frac{1}{3\ii Z}-\frac{Z}{6\Lambda}\right)\begin{bmatrix}0 & 3Z\\3Z+3\ii\Lambda^{-1}Z^3 & 0\end{bmatrix}.
\end{equation}
Differentiating $\boldsymbol{\Delta}(\Lambda)$ using \eqref{eq:Omega0} then shows that a dramatic simplification occurs:  $\mathbf{C}(Z,\Lambda)=(6\Lambda)^{-1}\mathbb{I}$.  Because this is a multiple of the identity, it follows that the differential equation satisfied by $\mathbf{H}(\Lambda)$ reads simply
\begin{equation}
\mathbf{H}'(\Lambda)=\frac{1}{6\Lambda}\mathbf{H}(\Lambda)\implies \mathbf{H}(\Lambda)=\Lambda^{1/6}\mathbf{H}_0
\end{equation}
where $\mathbf{H}_0$ is a matrix independent of both $Z$ and $\Lambda$.  It can be absorbed as a change of basis in specifying the fundamental solutions $f_1(\xi)$ and $f_2(\xi)$ of Airy's equation.  

Inverting the gauge transformations $\boldsymbol{\Phi}\mapsto\boldsymbol{\Omega}$ and $\boldsymbol{\Psi}\mapsto\boldsymbol{\Phi}$ and restoring the original independent variables by $Z=X^{1/3}=x^{1/3}$ and $\Lambda=x\lambda$, the general simultaneous solution matrix of the Lax pair for the seed solution is:
\begin{equation}
\boldsymbol{\Psi}(\lambda,x)=(\ii x\lambda)^{1/6}\ee^{-3x^{2/3}\sigma_3/2}\mathbf{G}^{-1}\boldsymbol{\Delta}(x\lambda)\mathbf{F}(\xi),\quad
\xi:=\left(\frac{3}{2}\right)^{2/3}(\ii x\lambda)^{2/3}\left(1-\frac{x^{2/3}}{\ii x\lambda}\right),
\end{equation}
where $\mathbf{F}(\xi)$ is built from two independent solutions $f_1(\xi)$, $f_2(\xi)$ of Airy's equation $f''(\xi)-\xi f(\xi)=0$ by \eqref{eq:Omega0}.  The extra factor of $\ii^{1/6}$ is included at no cost to give the simplification
\begin{equation}
(\ii x\lambda)^{1/6}\boldsymbol{\Delta}(x\lambda)=\frac{1}{\sqrt{2}}\begin{bmatrix}1&0\\0 & -(\tfrac{2}{3})^{1/3}\end{bmatrix}(\ii x\lambda)^{\sigma_3/6}.
\label{eq:diagonal-terms}
\end{equation}

\subsection{Three particular simultaneous solutions of the Lax pair for the seed}
We assume here for simplicity that $x>0$ and that $(\ii x\lambda)^p=x^p(\ii\lambda)^p$ for any power $p$.  In particular, $\xi$ is then well-defined as a function on the $\lambda$-plane once we select the principal branch for $(\ii\lambda)^{2/3}$.  It has a branch cut emanating vertically upwards from $\lambda=0$ along the imaginary axis, and it is positive real on the negative imaginary axis.

We will make use of the fact that for the Airy equation $f''(\xi)-\xi f(\xi)=0$ there are three complementary sectors:  $0\le\arg(\xi)\le \tfrac{2}{3}\pi$, $-\tfrac{2}{3}\pi\le\arg(\xi)\le 0$, and $-\tfrac{1}{3}\pi\le\arg(-\xi)\le\tfrac{1}{3}\pi$, on each of which there is a basis of solutions with exponential dichotomy and exhibiting no Stokes phenomenon as $\xi\to\infty$.  Those fundamental pairs are $(\mathrm{Ai}(\xi),\mathrm{Ai}(\ee^{-2\pi\ii/3}\xi))$, $(\mathrm{Ai}(\xi),\mathrm{Ai}(\ee^{2\pi\ii/3}\xi))$, and $(\mathrm{Ai}(\ee^{2\pi\ii/3}\xi),\mathrm{Ai}(\ee^{-2\pi\i/3}\xi))$, respectively.

\subsubsection{Solutions near infinity}
Here we consider the domain $|\lambda|>1$ and find simultaneous solutions with no Stokes phenomenon in two complementary sectors near $\lambda=\infty$ and that admit a simple normalization as $\lambda\to\infty$ in each of these sectors.  To be precise, we take unbounded domains $D_\infty^\pm$ defined by the conditions $|\lambda|>1$ and $\pm\mathrm{Re}(\lambda)>0$.  By the definition of $\xi$ for $x>0$ in terms of the principal branch of $(\ii\lambda)^{2/3}$, in the limit $\lambda\to\infty$ from $D_\infty^+$,  $\arg(\ii\lambda)\in (0,\pi)$ translates into $\xi\to\infty$ with $\arg(\xi)\in (0,\tfrac{2}{3}\pi)$.  Likewise as $\lambda\to\infty$ from $D_\infty^-$ we have $\xi\to\infty$ with $-\tfrac{2}{3}\pi<\arg(\xi)<0$.  Therefore, to avoid Stokes phenomenon, we will take for $f(\xi)$ scalar multiples of the solutions from the fundamental pairs $(\mathrm{Ai}(\xi),\mathrm{Ai}(\ee^{\mp 2\pi\ii/3}\xi))$ to define simultaneous solutions $\boldsymbol{\Psi}=\boldsymbol{\Psi}_\infty^\pm(\lambda,x)$ on the domains $D_\infty^\pm$ respectively.  

That is, using \eqref{eq:diagonal-terms}, we take
\begin{equation}
\boldsymbol{\Psi}_\infty^\pm(\lambda,x):=\ee^{-3x^{2/3}\sigma_3/2}\mathbf{G}^{-1}\frac{1}{\sqrt{2}}\begin{bmatrix}1&0\\0 & -(\tfrac{2}{3})^{1/3}\end{bmatrix}
(\ii x\lambda)^{\sigma_3/6}\begin{bmatrix}\mathrm{Ai}(\xi) & \mathrm{Ai}(\ee^{\mp 2\pi\ii/3}\xi)\\\mathrm{Ai}'(\xi) &\ee^{\mp 2\pi\ii/3}\mathrm{Ai}'(\ee^{\mp 2\pi\ii/3}\xi)\end{bmatrix}\begin{bmatrix}c_1^\pm&0\\0 & c_2^\pm\end{bmatrix}
\end{equation}
for suitable constants $c_j^\pm$, $j=1,2$.  This can be equivalently written as
\begin{multline}
\boldsymbol{\Psi}_\infty^\pm(\lambda,x)=\ee^{-3x^{2/3}\sigma_3/2}\begin{bmatrix}\frac{1}{2} & \frac{1}{2} \vspace{.02in} \\-\frac{1}{2} & \frac{1}{2}\end{bmatrix} \\
{}\cdot
\begin{bmatrix}c_1^\pm(\ii x\lambda)^{1/6}\mathrm{Ai}(\xi) & c_2^\pm(\ii x\lambda)^{1/6}\mathrm{Ai}(\ee^{\mp 2\pi \ii/3}\xi) \vspace{.02in} \\
-c_1^\pm(\frac{2}{3})^{1/3}(\ii x\lambda)^{-1/6}\mathrm{Ai}'(\xi) & -c_2^\pm(\frac{2}{3})^{1/3}(\ii x\lambda)^{-1/6}\ee^{\mp 2\pi \ii/3}\mathrm{Ai}'(\ee^{\mp 2\pi \ii/3}\xi)\end{bmatrix}.
\label{eq:Psis-at-infinity}
\end{multline}
As $\lambda\to\infty$ from $D_\infty^\pm$, i.e., with $\pm\arg(\ii x\lambda)\in [0,\pi]$, we have $\xi\to\infty$ with $\pm\arg(\xi)\in [0,\tfrac{2}{3}\pi]$, and also $\ee^{\mp 2\pi\ii/3}\xi\to\infty$ with $\mp\arg(\ee^{\mp 2\pi\ii/3}\xi)\in [0,\tfrac{2}{3}\pi]$.  Using the asymptotic formul\ae\ \cite[Eqns.\@ 9.7.5--6]{DLMF}
\begin{equation}
\mathrm{Ai}(z)=\frac{\ee^{-2z^{3/2}/3}}{2\sqrt{\pi}z^{1/4}}(1+\mathcal{O}(z^{-3/2}))\;\text{and}\;\mathrm{Ai}'(z)=-\frac{\ee^{-2z^{3/2}/3}z^{1/4}}{2\sqrt{\pi}}(1+\mathcal{O}(z^{-3/2})),
\label{eq:Airy-first-term}
\end{equation}
valid as $z\to\infty$ with $|\arg(z)|<\pi$ (and hence $|\arg(z^{3/2})|<\tfrac{3}{2}\pi$ and $|\arg(z^{1/4})|<\tfrac{1}{4}\pi$), along with $\xi = (\frac{3}{2})^{2/3}(\ii x\lambda)^{2/3}(1-x^{2/3}(\ii x\lambda)^{-1})$, we get
\begin{equation}
(\ii x\lambda)^{1/6}\mathrm{Ai}(\xi)
= \frac{\ee^{3x^{2/3}/2}}{2\sqrt{\pi}}(\tfrac{2}{3})^{1/6}\ee^{-\ii x\lambda}(1+\mathcal{O}(\lambda^{-1}))
\end{equation}
and
\begin{equation}
-(\tfrac{2}{3})^{1/3}(\ii x\lambda)^{-1/6}\mathrm{Ai}'(\xi)
=\frac{\ee^{3x^{2/3}/2}}{2\sqrt{\pi}}(\tfrac{2}{3})^{1/6}\ee^{-\ii x\lambda}(1+\mathcal{O}(\lambda^{-1}))
\end{equation}
as $\lambda\to\infty$ in either $D_\infty^+$ or $D_\infty^-$.  Likewise,
\begin{equation}
(\ii x\lambda)^{1/6}\mathrm{Ai}(\ee^{\mp 2\pi\ii/3}\xi)
=\ee^{\pm\ii\pi/6}\frac{\ee^{-3x^{2/3}/2}}{2\sqrt{\pi}}(\tfrac{2}{3})^{1/6}\ee^{\ii x\lambda}(1+\mathcal{O}(\lambda^{-1}))
\end{equation}
and
\begin{equation}
-(\tfrac{2}{3})^{1/3}(\ii x\lambda)^{-1/6}\ee^{\mp 2\pi\ii/3}\mathrm{Ai}'(\ee^{\mp 2\pi\ii/3}\xi)
=\ee^{\mp 5\pi\ii/6}\frac{\ee^{-3x^{2/3}/2}}{2\sqrt{\pi}}(\tfrac{2}{3})^{1/6}\ee^{\ii x\lambda}(1+\mathcal{O}(\lambda^{-1}))
\end{equation}
as $\lambda\to\infty$ from $D_\infty^\pm$.  Therefore, from \eqref{eq:Psis-at-infinity}, we get
\begin{equation}
\boldsymbol{\Psi}_\infty^\pm(\lambda,x)
=\frac{1}{2\sqrt{\pi}}(\tfrac{2}{3})^{1/6}\left(\begin{bmatrix}c_1^\pm & 0\\0 & -\ee^{\pm\ii\pi/6}c_2^\pm\end{bmatrix}+\mathcal{O}(\lambda^{-1})\right)\ee^{-\ii x\lambda\sigma_3}
\end{equation}
in the same limit.  Choosing the constants to be
\begin{equation}
c_1^\pm:=2\sqrt{\pi}(\tfrac{3}{2})^{1/6}\quad\text{and}\quad c_2^\pm:=-\ee^{\mp\ii\pi/6}2\sqrt{\pi}(\tfrac{3}{2})^{1/6},
\label{eq:constants-at-infinity}
\end{equation}
the simultaneous solutions $\boldsymbol{\Psi}_\infty^\pm(\lambda,x)$ are normalized so that
\begin{equation}
\mathop{\lim_{\lambda\to\infty}}_{\lambda\in D_\infty^\pm}\boldsymbol{\Psi}_\infty^\pm(\lambda,x)\ee^{\ii x\lambda\sigma_3}=\mathbb{I}.
\end{equation}
Since the coefficient matrices $\boldsymbol{\Lambda}(\lambda,x)$ and $\mathbf{X}(\lambda,x)$ have zero trace, it then follows that $\det(\boldsymbol{\Psi}_\infty^\pm(\lambda,x))=1$.

\subsubsection{Solution near zero}
When $\lambda$ is small and $\arg(\ii x\lambda)$ varies from $-\pi$ to $\pi$, $\arg(-\xi)$ varies from $\tfrac{1}{3}\pi$ to $-\tfrac{1}{3}\pi$, so to avoid Stokes phenomenon and maintain exponential dichotomy we use scalar multiples of the basis elements $\mathrm{Ai}(\ee^{2\pi\ii/3}\xi)$ and $\mathrm{Ai}(\ee^{-2\pi\ii/3}\xi)$, so for constants $c^\pm$ we use \eqref{eq:diagonal-terms} and hence define a solution in the domain $D_0$ characterized by $|\lambda|<1$ and $\arg(\ii x\lambda)\in (-\pi,\pi)$ by
\begin{multline}
\boldsymbol{\Psi}_0(\lambda,x):=\ee^{-3x^{2/3}\sigma_3/2}\mathbf{G}^{-1}\frac{1}{\sqrt{2}}\begin{bmatrix}1&0\\0&-(\frac{2}{3})^{1/3}\end{bmatrix}(\ii x\lambda)^{\sigma_3/6}\\
{}\cdot\begin{bmatrix}\mathrm{Ai}(\ee^{2\pi\ii/3}\xi) & \mathrm{Ai}(\ee^{-2\pi\ii/3}\xi)\\
\ee^{2\pi\ii/3}\mathrm{Ai}'(\ee^{2\pi\ii/3}\xi) & \ee^{-2\pi\ii/3}\mathrm{Ai}'(\ee^{-2\pi\ii/3}\xi)\end{bmatrix}\begin{bmatrix}c^+ & 0\\0 & c^-\end{bmatrix},
\end{multline}  
which can be equivalently written as
\begin{multline}
\boldsymbol{\Psi}_0(\lambda,x)=\ee^{-3x^{2/3}\sigma_3/2}\begin{bmatrix}\frac{1}{2}&\frac{1}{2}\\-\frac{1}{2} & \frac{1}{2}\end{bmatrix}\\
{}\cdot\begin{bmatrix}c^+(\ii x\lambda)^{1/6}\mathrm{Ai}(\ee^{2\pi\ii/3}\xi) & c^-(\ii x\lambda)^{1/6}\mathrm{Ai}(\ee^{-2\pi\ii/3}\xi)\\
-c^+(\frac{2}{3})^{1/3}(\ii x\lambda)^{-1/6}\ee^{2\pi\ii/3}\mathrm{Ai}'(\ee^{2\pi\ii/3}\xi) & 
-c^-(\frac{2}{3})^{1/3}(\ii x\lambda)^{-1/6}\ee^{-2\pi\ii/3}\mathrm{Ai}'(\ee^{-2\pi\ii/3}\xi)\end{bmatrix}.
\end{multline}
Using $\ee^{\pm 2\pi\ii/3}\xi=x^{2/3}(\frac{3}{2})^{2/3}\ee^{\mp\ii\pi/3}(\ii x\lambda)^{-1/3}(1-x^{-2/3}(\ii x\lambda))$, the relevant expansions we need in this situation are then
\begin{equation}
(\ii x\lambda)^{1/6}\mathrm{Ai}(\ee^{\pm 2\pi\ii/3}\xi)
=\ee^{\pm\ii\pi/12}\frac{1}{2\sqrt{\pi}}x^{-1/6}(\tfrac{3}{2})^{-1/6}(\ii x\lambda)^{1/4}\ee^{\pm\ii x(\ii x\lambda)^{-1/2}}(1+\mathcal{O}(\lambda^{1/2})),
\end{equation}
and
\begin{equation}
-(\tfrac{2}{3})^{1/3}(\ii x\lambda)^{-1/6}\ee^{\pm 2\pi\ii/3}\mathrm{Ai}'(\ee^{\pm 2\pi i/3}\xi)
=\ee^{\pm 7\pi \ii/12}\frac{1}{2\sqrt{\pi}}x^{1/6}(\tfrac{3}{2})^{-1/6}(\ii x\lambda)^{-1/4}\ee^{\pm\ii x(\ii x\lambda)^{-1/2}}(1+\mathcal{O}(\lambda^{1/2})),
\end{equation}
as $\lambda\to 0$ with $-\pi<\arg(\ii x\lambda)<\pi$.  It follows that $\det(\boldsymbol{\Psi}_0(\lambda,x))=\tfrac{1}{4\pi\ii}c^+c^-(\tfrac{2}{3})^{1/3} + \mathcal{O}(\lambda^{1/2})$ in the same limit, so by Abel's theorem $\det(\boldsymbol{\Psi}_0(\lambda,x))=\tfrac{1}{4\pi\ii}c^+c^-(\tfrac{2}{3})^{1/3}$.  We complete the definition of $\boldsymbol{\Psi}_0(\lambda,x)$ by choosing 
\begin{equation}
c^+:=2\ii\sqrt{\pi}(\tfrac{3}{2})^{1/6}\quad\text{and}\quad c^-:=2\sqrt{\pi}(\tfrac{3}{2})^{1/6}
\end{equation}
which guarantees the identity $\det(\boldsymbol{\Psi}_0(\lambda,x))=1$.  It then follows that
\begin{multline}
\boldsymbol{\Psi}_0(\lambda,x)\ee^{-\ii x(\ii x\lambda)^{-1/2}\sigma_3} = \ee^{-3x^{2/3}\sigma_3/2}\begin{bmatrix}\frac{1}{2} & \frac{1}{2}\\-\frac{1}{2} & \frac{1}{2}\end{bmatrix}x^{\sigma_3/3}\ee^{7\pi\ii\sigma_3/12}\\
{}\cdot\left(\frac{\ii\lambda}{x}\right)^{\sigma_3/4}\begin{bmatrix}1+\mathcal{O}(\lambda^{1/2}) & 
\ee^{-2\pi\ii/3}(1+\mathcal{O}(\lambda^{1/2}))\\
\ee^{-\ii\pi/3}(1+\mathcal{O}(\lambda^{1/2})) & 
1+\mathcal{O}(\lambda^{1/2})
\end{bmatrix},\quad\lambda\to 0.
\label{eq:Psi0-expand}
\end{multline}
Here, the symbol $\mathcal{O}(\lambda^{1/2})$ denotes in each case a (different) quantity having an asymptotic expansion as $\lambda\to 0$ in positive integer powers of $\lambda^{1/2}$.  This implies the existence of the limit
\begin{equation}
\mathbf{B}_0(x):=\lim_{\lambda\to 0}\boldsymbol{\Psi}_0(\lambda,x)\ee^{-\ii x(\ii x\lambda)^{-1/2}\sigma_3}\mathbf{E}\cdot\left(\frac{\ii\lambda}{x}\right)^{-\sigma_3/4},\quad \mathbf{E}:=\frac{1}{\sqrt{2}}\begin{bmatrix}1 & \ee^{\ii\pi/3}\\\ee^{2\pi\ii/3} & 1\end{bmatrix}.
\label{eq:H-define}
\end{equation}
At this point, we can revisit the expansion of $\boldsymbol{\Psi}_0(\lambda,x)\ee^{-\ii x(\ii x\lambda)^{-1/2}\sigma_3}\mathbf{E}\cdot(\ii\lambda/x)^{-\sigma_3/4}$ as $\lambda\to 0$, and take advantage of the structure of the complete asymptotic expansions \cite[Eqns.\@ 9.7.5--6]{DLMF} of which the parentheses in \eqref{eq:Airy-first-term} represent just the first explicit terms to deduce that the half-integer powers all vanish, and in fact we have a full expansion
\begin{equation}
\boldsymbol{\Psi}_0(\lambda,x)\ee^{-\ii x(\ii x\lambda)^{-1/2}\sigma_3}\mathbf{E}\cdot\left(\frac{\ii\lambda}{x}\right)^{-\sigma_3/4}\sim\sum_{p=0}^\infty\left(\frac{\ii\lambda}{x}\right)^p\mathbf{B}_p(x),\quad \lambda\to 0.
\end{equation}

\subsection{Jump conditions for the seed}
We define (see Figure~\ref{fig:DomainsAndContours-xpos})
\begin{figure}[h]
\begin{center}
\includegraphics{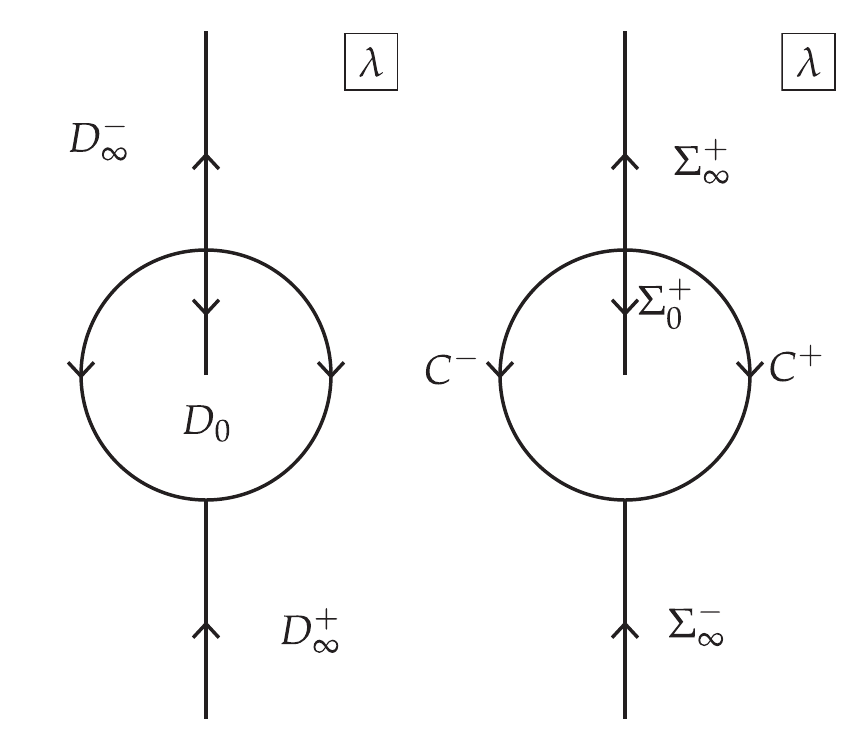}
\end{center}
\caption{Left:  the domains $D_\infty^\pm$ and $D_0$ in the $\lambda$-plane.  Right:  the arcs of the jump contour for $\boldsymbol{\Psi}(\lambda,x)$.  }
\label{fig:DomainsAndContours-xpos}
\end{figure}
\begin{equation}
\boldsymbol{\Psi}(\lambda,x):=\begin{cases}\boldsymbol{\Psi}_\infty^+(\lambda,x),&\quad\lambda\in D_\infty^+,\\
\boldsymbol{\Psi}_\infty^-(\lambda,x),&\quad\lambda\in D_\infty^-,\\
\boldsymbol{\Psi}_0(\lambda,x),&\quad\lambda\in D_0.
\end{cases}
\label{eq:Psi-seed}
\end{equation}
The jump matrices are obtained using the connection identity \cite[Eqn.\@ 9.2.12]{DLMF} $\mathrm{Ai}(z)+\ee^{2\pi\ii/3}\mathrm{Ai}(\ee^{2\pi\ii/3}z) +\ee^{-2\pi\ii/3}\mathrm{Ai}(\ee^{-2\pi\ii/3}z)=0$ and its derivative $\mathrm{Ai}'(z)+\ee^{-2\pi\ii/3}\mathrm{Ai}'(\ee^{2\pi\ii/3}z)+\ee^{2\pi\ii/3}\mathrm{Ai}'(\ee^{-2\pi\ii/3}z)=0$.  We will calculate the jump $\boldsymbol{\Psi}_+(\lambda,x)=\boldsymbol{\Psi}_-(\lambda,x)\mathbf{V}$ across each arc of the jump contour as indicated in the right-hand diagram of Figure~\ref{fig:DomainsAndContours-xpos} (subscripts $+$/$-$ indicate boundary value from the left/right by orientation).  Note that since in each domain $\boldsymbol{\Psi}(\lambda,x)$ is a simultaneous fundamental solution matrix with unit determinant for the equations $\boldsymbol{\Psi}_x=\mathbf{X}\boldsymbol{\Psi}$ and $\boldsymbol{\Psi}_\lambda=\boldsymbol{\Lambda\Psi}$, each jump matrix $\mathbf{V}$ is a constant matrix with $\det(\mathbf{V})=1$.   

To compute the jump of $\boldsymbol{\Psi}(\lambda,x)$ across $\Sigma_\infty^-$ as well as $C^\pm$, we note that $\xi$ and $(\ii x\lambda)^{\sigma_3/6}$ are both well-defined on these contours.  So the jump matrix $\mathbf{V}=\boldsymbol{\Psi}_-(\lambda,x)^{-1}\boldsymbol{\Psi}_+(\lambda,x)=\boldsymbol{\Psi}_\infty^+(\lambda,x)^{-1}\boldsymbol{\Psi}_\infty^-(\lambda,x)$ on $\Sigma_\infty^-$ satisfies
\begin{equation}
\begin{bmatrix}\mathrm{Ai}(\xi) & -\ee^{\ii\pi/6}\mathrm{Ai}(\ee^{2\pi\ii/3}\xi)\\
\mathrm{Ai}'(\xi) & -\ee^{\ii\pi/6}\ee^{2\pi\ii/3}\mathrm{Ai}'(\ee^{2\pi\ii/3}\xi)\end{bmatrix}=\begin{bmatrix}\mathrm{Ai}(\xi) & -\ee^{-\ii\pi/6}\mathrm{Ai}(\ee^{-2\pi\ii/3}\xi)\\
\mathrm{Ai}'(\xi) & -\ee^{-\ii\pi/6}\ee^{-2\pi\ii/3}\mathrm{Ai}'(\ee^{-2\pi\ii/3}\xi)\end{bmatrix}\mathbf{V}.
\end{equation}
Using the connection formul\ae\ to write the second column on the left-hand side in terms of $\mathrm{Ai}(\xi)$ and $\mathrm{Ai}'(\ee^{-2\pi\ii/3}\xi)$ and their derivatives yields
\begin{equation}
\mathbf{V}=\begin{bmatrix}1&-\ii\\0 & 1\end{bmatrix},\quad\lambda\in \Sigma_\infty^-.
\label{eq:V-Sigma-infty-minus}
\end{equation}
Similarly, the jump matrix on $C^-$ is $\mathbf{V}=\boldsymbol{\Psi}_-(\lambda,x)^{-1}\boldsymbol{\Psi}_+(\lambda,x)=\boldsymbol{\Psi}_\infty^-(\lambda,x)^{-1}\boldsymbol{\Psi}_0(\lambda,x)$, which satisfies
\begin{equation}
\begin{bmatrix}\ii\mathrm{Ai}(\ee^{2\pi\ii/3}\xi) & \mathrm{Ai}(\ee^{-2\pi\ii/3}\xi)\\
\ii\ee^{2\pi\ii/3}\mathrm{Ai}'(\ee^{2\pi\ii/3}\xi) & \ee^{-2\pi\ii/3}\mathrm{Ai}(\ee^{-2\pi\ii/3}\xi)\end{bmatrix}=
\begin{bmatrix}\mathrm{Ai}(\xi) & -\ee^{\ii\pi/6}\mathrm{Ai}(\ee^{2\pi\ii/3}\xi)\\
\mathrm{Ai}'(\xi) & -\ee^{\ii\pi/6}\ee^{2\pi\ii/3}\mathrm{Ai}'(\ee^{2\pi\ii/3}\xi)\end{bmatrix}\mathbf{V}.
\end{equation}
The same connection formul\ae\ applied to the second column of the left-hand side then yields
\begin{equation}
\mathbf{V}=\begin{bmatrix}0 & \ee^{-\ii\pi/3}\\ \ee^{-2\pi\ii/3} & \ee^{-5\ii\pi/6}\end{bmatrix},\quad \lambda\in C^-.
\label{eq:V-C-minus}
\end{equation}
Likewise, the jump matrix on $C^+$ is $\mathbf{V}=\boldsymbol{\Psi}_-(\lambda,x)^{-1}\boldsymbol{\Psi}_+(\lambda,x)=\boldsymbol{\Psi}_0(\lambda,x)^{-1}\boldsymbol{\Psi}_\infty^+(\lambda,x)$, which satisfies
\begin{equation}
\begin{bmatrix}\mathrm{Ai}(\xi) & -\ee^{-\ii\pi/6}\mathrm{Ai}(\ee^{-2\pi\ii/3}\xi)\\
\mathrm{Ai}'(\xi)&-\ee^{-\ii\pi/6}\ee^{-2\pi\ii/3}\mathrm{Ai}'(\ee^{-2\pi\ii/3}\xi)\end{bmatrix}=
\begin{bmatrix}\ii\mathrm{Ai}(\ee^{2\pi\ii/3}\xi) & \mathrm{Ai}(\ee^{-2\pi\ii/3}\xi)\\
\ii\ee^{2\pi\ii/3}\mathrm{Ai}'(\ee^{2\pi\ii/3}\xi) & \ee^{-2\pi\ii/3}\mathrm{Ai}(\ee^{-2\pi\ii/3}\xi)\end{bmatrix}\mathbf{V}.
\end{equation}
Applying the connection formul\ae\ to the first column of the left-hand side yields
\begin{equation}
\mathbf{V}=\begin{bmatrix}\ee^{-5\pi\ii/6} & 0\\ \ee^{\ii\pi/3} & \ee^{5\pi\ii/6}\end{bmatrix},\quad \lambda\in C^+.
\label{eq:V-C-plus}
\end{equation}

To compute the jump matrices on $\Sigma_0^+$ and $\Sigma_\infty^+$ we also have to take into account the jump conditions satisfied by $\xi$ and $(\ii x\lambda)^{\sigma_3/6}$:
\begin{equation}
\left[(\ii x\lambda)^{\sigma_3/6}\right]_+ = \ee^{\ii\pi\sigma_3/3}\left[(\ii x\lambda)^{\sigma_3/6}\right]_-\quad\text{and}\quad
\xi_+ = \ee^{-2\pi\ii/3}\xi_-,\quad \lambda\in\Sigma_0^+
\label{eq:scalar-jumps-0}
\end{equation}
and
\begin{equation}
\left[(\ii x\lambda)^{\sigma_3/6}\right]_+ = \ee^{-\ii\pi\sigma_3/3}\left[(\ii x\lambda)^{\sigma_3/6}\right]_-\quad\text{and}\quad
\xi_+ = \ee^{2\pi\ii/3}\xi_-,\quad \lambda\in\Sigma_\infty^+.
\label{eq:scalar-jumps-infty}
\end{equation}
The difference between these formul\ae\ arises simply from the opposite 
orientation of $\Sigma_0^+$ and $\Sigma_\infty^+$.  So, on $\Sigma_0^+$, using 
\eqref{eq:scalar-jumps-0} on the left-hand side of 
$\boldsymbol{\Psi}_+(\lambda,x)=\boldsymbol{\Psi}_-(\lambda,x)\mathbf{V}$ 
(with $\boldsymbol{\Psi}(\lambda,x)=\boldsymbol{\Psi}_0(\lambda,x)$), we have 
\begin{equation}
\begin{bmatrix}\ee^{5\pi\ii/6}\mathrm{Ai}(\xi_-) & \ee^{\ii\pi/3}\mathrm{Ai}(\ee^{2\pi\ii/3}\xi_-)\\\ee^{5\pi\ii/6}\mathrm{Ai}'(\xi_-) & \ee^{\ii\pi/3}\ee^{2\pi\ii/3}\mathrm{Ai}'(\ee^{2\pi\ii/3}\xi_-)\end{bmatrix}=
\begin{bmatrix}\ii\mathrm{Ai}(\ee^{2\pi\ii/3}\xi_-) & \mathrm{Ai}(\ee^{-2\pi\ii/3}\xi_-)\\\ii\ee^{2\pi\ii/3}\mathrm{Ai}'(\ee^{2\pi\ii/3}\xi_-) & \ee^{-2\pi\ii/3}\mathrm{Ai}'(\ee^{-2\pi\ii/3}\xi_-)\end{bmatrix}\mathbf{V}.
\end{equation}
Applying the connection formul\ae\ to the first column of the left-hand side gives
\begin{equation}
\mathbf{V}=\begin{bmatrix}1 & \ee^{-\ii\pi/6}\\ \ee^{-5\pi\ii/6} & 0
\end{bmatrix},\quad\lambda\in\Sigma_0^+.
\label{eq:V-Sigma-0}
\end{equation}
Finally, on $\Sigma_\infty^+$, we use \eqref{eq:scalar-jumps-infty} on the left-hand side of $\boldsymbol{\Psi}_+(\lambda,x)=\boldsymbol{\Psi}_-(\lambda,x)\mathbf{V}$ (this time with $\boldsymbol{\Psi}_+(\lambda,x)=\boldsymbol{\Psi}_{\infty,+}^-(\lambda,x)$ and $\boldsymbol{\Psi}_-(\lambda,x)=\boldsymbol{\Psi}_{\infty,-}^+(\lambda,x)$), we find
\begin{equation}
\begin{bmatrix}\ee^{-\ii\pi/3}\mathrm{Ai}(\ee^{2\pi\ii/3}\xi_-) & \ee^{5\pi\ii/6}\mathrm{Ai}(\ee^{-2\pi\ii/3}\xi_-)\\
\ee^{-\ii\pi/3}\ee^{2\pi\ii/3}\mathrm{Ai}'(\ee^{2\pi\ii/3}\xi_-)&\ee^{5\pi\ii/6}\ee^{-2\pi\ii/3}\mathrm{Ai}'(\ee^{-2\pi\ii/3}\xi_-)\end{bmatrix}=
\begin{bmatrix}\mathrm{Ai}(\xi_-) & \ee^{5\pi\ii/6}\mathrm{Ai}(\ee^{-2\pi\ii/3}\xi_-)\\
\mathrm{Ai}'(\xi_-) & \ee^{5\pi\ii/6}\ee^{-2\pi\ii/3}\mathrm{Ai}'(\ee^{-2\pi\ii/3}\xi_-)\end{bmatrix}\mathbf{V}.
\end{equation}
Using the connection formul\ae\ on the first column of the left-hand side gives
\begin{equation}
\mathbf{V}=\begin{bmatrix}1 & 0\\\ii & 1\end{bmatrix},\quad \lambda\in\Sigma_\infty^+.
\label{eq:V-Sigma-infty-plus}
\end{equation}

This completes the computation of the jump matrices for $\boldsymbol{\Psi}(\lambda,x)$.  It is straightforward to verify that the cyclic product of jump matrices about either of the two self-intersection points of the jump contour equals $\mathbb{I}$.

\section{Riemann-Hilbert representation for the algebraic solutions}
\label{sec:RHPs}

\subsection{Riemann-Hilbert problem formulation and basic properties}
The  algebraic solutions of \eqref{eq:D7} with $\alpha=8$, $\beta=2n\in2\mathbb{Z}$, and $\delta=-1$ (or \eqref{eq:D7KV} with $\epsilon=-1$, $a=-\ii n$, and $b=\ii$) can be obtained from the following Riemann-Hilbert problem.
\begin{rhp}[Algebraic solutions of Painlev\'e-III D7]
Given $x>0$ and $n\in\mathbb{Z}$, seek a $2\times 2$ matrix function $\lambda\mapsto \mathbf{W}^{(n)}(\lambda,x)$ with the following properties:
\begin{itemize}
\item Analyticity:  $\mathbf{W}^{(n)}(\lambda,x)$ is analytic for $\lambda\in\mathbb{C}\setminus(\Sigma_\infty^-\cup\Sigma_\infty^+\cup\Sigma_0^+\cup C^+\cup C^-)$.
\item Jump conditions:  $\mathbf{W}^{(n)}(\lambda,x)$ takes continuous boundary values from each component of its domain of analyticity, and the boundary values are related by 
\begin{equation}
\mathbf{W}^{(n)}_+(\lambda,x)=\mathbf{W}^{(n)}_-(\lambda,x)\ee^{-\ii (x\lambda-x(\ii x\lambda_-)^{-1/2})\sigma_3}(-1)^n\mathbf{V}\ee^{\ii(x\lambda-x(\ii x\lambda_+)^{-1/2})\sigma_3},\quad\lambda\in \Sigma_0^+\cup\Sigma_\infty^+,
\end{equation}
and
\begin{equation}
\mathbf{W}^{(n)}_+(\lambda,x)=\mathbf{W}^{(n)}_-(\lambda,x)\ee^{-\ii (x\lambda-x(\ii x\lambda_-)^{-1/2})\sigma_3}\mathbf{V}\ee^{\ii(x\lambda-x(\ii x\lambda_+)^{-1/2})\sigma_3},\quad\lambda\in \Sigma_\infty^-\cup C^+\cup C^-,
\end{equation}
where the constant matrix $\mathbf{V}$ is defined on each arc of the jump contour by \eqref{eq:V-Sigma-infty-minus}, \eqref{eq:V-C-minus}, \eqref{eq:V-C-plus}, \eqref{eq:V-Sigma-0}, and \eqref{eq:V-Sigma-infty-plus}.
\item Normalization:  $\mathbf{W}^{(n)}(\lambda,x)\left(\frac{\ii\lambda}{x}\right)^{n\sigma_3/2}\to\mathbb{I}$ as $\lambda\to\infty$.
\item Behavior at the origin:  the limit
\begin{equation}
\mathbf{B}^{(n)}_0(x):=\lim_{\lambda\to 0}\mathbf{W}^{(n)}(\lambda,x)\mathbf{E}\cdot \left(\frac{\ii\lambda}{x}\right)^{-(-1)^n\sigma_3/4}
\label{eq:limit-at-zero}
\end{equation}
exists, where $\mathbf{E}$ is given by \eqref{eq:H-define}.  
\end{itemize}
\label{rhp:algebraic}
\end{rhp}
The following lemma is a consequence of the construction given in Section~\ref{sec:direct-monodromy-seed}.
\begin{lemma}
The matrix $\mathbf{W}^{(0)}(\lambda,x)$ defined in terms of $\mathbf{\Psi}(\lambda,x)$ (see \eqref{eq:Psi-seed}) by
\begin{equation}
\mathbf{W}^{(0)}(\lambda,x):= \boldsymbol{\Psi}(\lambda,x)\ee^{\ii (x\lambda-x(\ii x\lambda)^{-1/2})\sigma_3}
\label{eq:Y-Psi}
\end{equation}
is a solution of Riemann-Hilbert Problem~\ref{rhp:algebraic} with $n=0$.
\label{lem:seed-solve}
\end{lemma}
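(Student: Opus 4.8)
The plan is to verify directly that the matrix $\mathbf{W}^{(0)}(\lambda,x)$ defined by \eqref{eq:Y-Psi} satisfies each of the four bulleted requirements of Riemann-Hilbert Problem~\ref{rhp:algebraic} with $n=0$, using only facts already established in Section~\ref{sec:direct-monodromy-seed}. The conjugating scalar factor $\ee^{\ii(x\lambda-x(\ii x\lambda)^{-1/2})\sigma_3}$ is piecewise analytic with the same branch structure as $(\ii x\lambda)^{-1/2}$, so the analyticity requirement is immediate: $\boldsymbol{\Psi}(\lambda,x)$ is analytic off the contour $\Sigma_\infty^-\cup\Sigma_\infty^+\cup\Sigma_0^+\cup C^+\cup C^-$ by construction in \eqref{eq:Psi-seed}, and the exponential factor introduces no new singularities away from the branch cut on $\Sigma_\infty^+\cup\Sigma_0^+$, which is already part of the contour.

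Next I would check the jump conditions. On each arc the jump for $\boldsymbol{\Psi}$ is $\boldsymbol{\Psi}_+=\boldsymbol{\Psi}_-\mathbf{V}$ with $\mathbf{V}$ the explicit constant matrix computed in \eqref{eq:V-Sigma-infty-minus}, \eqref{eq:V-C-minus}, \eqref{eq:V-C-plus}, \eqref{eq:V-Sigma-0}, \eqref{eq:V-Sigma-infty-plus}. Substituting $\boldsymbol{\Psi}=\mathbf{W}^{(0)}\ee^{-\ii(x\lambda-x(\ii x\lambda)^{-1/2})\sigma_3}$ gives
\begin{equation}
\mathbf{W}^{(0)}_+\ee^{-\ii(x\lambda-x(\ii x\lambda_+)^{-1/2})\sigma_3}=\mathbf{W}^{(0)}_-\ee^{-\ii(x\lambda-x(\ii x\lambda_-)^{-1/2})\sigma_3}\mathbf{V},
\end{equation}
i.e. $\mathbf{W}^{(0)}_+=\mathbf{W}^{(0)}_-\ee^{-\ii(x\lambda-x(\ii x\lambda_-)^{-1/2})\sigma_3}\mathbf{V}\ee^{\ii(x\lambda-x(\ii x\lambda_+)^{-1/2})\sigma_3}$, which is exactly the stated jump with $n=0$ (the factor $(-1)^n$ being $1$). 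On the arcs $\Sigma_\infty^-\cup C^+\cup C^-$ the branch of $(\ii x\lambda)^{-1/2}$ does not jump, so $\lambda_+=\lambda_-$ there and the conjugation is by a genuinely continuous scalar matrix; on $\Sigma_0^+\cup\Sigma_\infty^+$ one must keep the $\pm$ boundary values distinct, exactly as written. The key point is that $x\lambda$ itself is entire, so only the $(\ii x\lambda)^{-1/2}$ piece produces the subscripted $\lambda_\pm$; I would remark that this is why the conjugating exponent is written with $\lambda_\mp$ on the two sides.

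For the normalization, I would invoke the limit $\lim_{\lambda\to\infty}\boldsymbol{\Psi}_\infty^\pm(\lambda,x)\ee^{\ii x\lambda\sigma_3}=\mathbb{I}$ established at the end of the subsection on solutions near infinity, together with the observation that $x(\ii x\lambda)^{-1/2}\to 0$ as $\lambda\to\infty$, so $\ee^{\ii(x\lambda-x(\ii x\lambda)^{-1/2})\sigma_3}=\ee^{\ii x\lambda\sigma_3}(\mathbb{I}+o(1))$; since $n=0$ the factor $(\ii\lambda/x)^{n\sigma_3/2}$ is the identity, giving $\mathbf{W}^{(0)}(\lambda,x)\to\mathbb{I}$. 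For the behavior at the origin, the content is precisely the statement in \eqref{eq:H-define}: multiplying $\boldsymbol{\Psi}_0(\lambda,x)\ee^{-\ii x(\ii x\lambda)^{-1/2}\sigma_3}$ on the right by $\mathbf{E}(\ii\lambda/x)^{-\sigma_3/4}$ has a limit $\mathbf{B}_0(x)$ as $\lambda\to 0$; but $\mathbf{W}^{(0)}(\lambda,x)\mathbf{E}(\ii\lambda/x)^{-\sigma_3/4}=\boldsymbol{\Psi}_0(\lambda,x)\ee^{\ii x\lambda\sigma_3}\ee^{-\ii x(\ii x\lambda)^{-1/2}\sigma_3}\mathbf{E}(\ii\lambda/x)^{-\sigma_3/4}$ near $\lambda=0$, and $\ee^{\ii x\lambda\sigma_3}\to\mathbb{I}$, so the limit \eqref{eq:limit-at-zero} exists and equals $\mathbf{B}_0(x)$ (again using $(-1)^0=1$ in the exponent $-(-1)^n\sigma_3/4$). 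The only mild subtlety—the step I expect to require the most care—is bookkeeping the branch of $(\ii x\lambda)^{-1/2}$ and the orientation conventions so that the subscripted boundary values $\lambda_\pm$ in the jump of $\mathbf{W}^{(0)}$ line up with the $\pm$ boundary values of $\xi$ and $(\ii x\lambda)^{\sigma_3/6}$ used when the $\mathbf{V}$'s were derived; once that is tracked consistently on $\Sigma_0^+$ and $\Sigma_\infty^+$, the verification is routine. It remains to note that $\det\mathbf{W}^{(0)}=1$ is inherited from $\det\boldsymbol{\Psi}=1$ and $\det$ of the conjugating factor being $1$, which is consistent with (though not required by) the problem.
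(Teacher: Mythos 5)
Your verification is correct and is essentially the paper's own route: the paper simply declares the lemma a consequence of the Section~\ref{sec:direct-monodromy-seed} construction, i.e.\ exactly the direct check of analyticity, the conjugated jump conditions with $(-1)^0=1$, the normalization from $\boldsymbol{\Psi}_\infty^\pm\ee^{\ii x\lambda\sigma_3}\to\mathbb{I}$, and the origin limit from \eqref{eq:H-define} that you carry out. The only step to tighten is the origin condition: since the right-hand factor $\mathbf{E}\cdot(\ii\lambda/x)^{-\sigma_3/4}$ is unbounded as $\lambda\to 0$, the observation ``$\ee^{\ii x\lambda\sigma_3}\to\mathbb{I}$'' should be supplemented by noting that
\begin{equation}
\left(\frac{\ii\lambda}{x}\right)^{\sigma_3/4}\mathbf{E}^{-1}\ee^{\ii x\lambda\sigma_3}\mathbf{E}\left(\frac{\ii\lambda}{x}\right)^{-\sigma_3/4}=\mathbb{I}+\mathcal{O}(\lambda^{1/2}),\quad\lambda\to 0,
\end{equation}
(because $\ee^{\ii x\lambda\sigma_3}-\mathbb{I}=\mathcal{O}(\lambda)$ while the conjugation costs only $\mathcal{O}(\lambda^{-1/2})$, using that $\mathbf{E}^{-1}\sigma_3\mathbf{E}$ is off-diagonal), which shows the limit \eqref{eq:limit-at-zero} exists and equals $\mathbf{B}_0(x)$ as you claim.
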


\begin{lemma}
Given $x>0$ and $n\in\mathbb{Z}$, there is at most one solution of Riemann-Hilbert Problem~\ref{rhp:algebraic} and any solution has unit determinant.
\label{lem:uniqueness}
\end{lemma}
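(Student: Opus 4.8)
The plan is to follow the standard Liouville-type argument for Painlev\'e Riemann-Hilbert problems, adapted to the two irregular singular points $\lambda=\infty$ and $\lambda=0$ of this problem. First I would observe that the jump matrices appearing on each arc of the contour have unit determinant: on $\Sigma_\infty^-$, $C^\pm$, and $\Sigma_0^+\cup\Sigma_\infty^+$ the jump is a conjugation of the constant matrices $\mathbf{V}$ (and a factor $(-1)^n$, which does not affect the determinant) from \eqref{eq:V-Sigma-infty-minus}, \eqref{eq:V-C-minus}, \eqref{eq:V-C-plus}, \eqref{eq:V-Sigma-0}, and \eqref{eq:V-Sigma-infty-plus}, each of which is seen by inspection to have determinant $1$; the conjugating factors $\ee^{\pm\ii(\cdots)\sigma_3}$ also have unit determinant. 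Hence for any solution $\mathbf{W}^{(n)}$, the scalar function $\det(\mathbf{W}^{(n)}(\lambda,x))$ is analytic across every arc and therefore analytic on all of $\mathbb{C}$, except possibly at $\lambda=0$. From the normalization condition, $\det(\mathbf{W}^{(n)})\to 1$ as $\lambda\to\infty$ since $\det((\ii\lambda/x)^{n\sigma_3/2})=1$. At $\lambda=0$, the behavior-at-the-origin condition \eqref{eq:limit-at-zero} forces $\mathbf{W}^{(n)}(\lambda,x)=\mathbf{B}^{(n)}_0(x)(\ii\lambda/x)^{(-1)^n\sigma_3/4}\mathbf{E}^{-1}(1+o(1))$, whose determinant tends to $\det(\mathbf{B}^{(n)}_0(x))\det(\mathbf{E})^{-1}$, a finite nonzero constant; so $\lambda=0$ is at worst a removable singularity of $\det(\mathbf{W}^{(n)})$. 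By Liouville's theorem $\det(\mathbf{W}^{(n)}(\lambda,x))\equiv 1$, which in particular gives the unit-determinant claim and shows $\mathbf{W}^{(n)}$ is everywhere invertible.

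For uniqueness, suppose $\mathbf{W}^{(n)}_1$ and $\mathbf{W}^{(n)}_2$ both solve the problem and set $\mathbf{R}(\lambda):=\mathbf{W}^{(n)}_1(\lambda,x)\mathbf{W}^{(n)}_2(\lambda,x)^{-1}$, which is legitimate since the second factor is invertible by the previous paragraph. On each arc of the contour, using that $\mathbf{W}^{(n)}_{1,+}=\mathbf{W}^{(n)}_{1,-}\mathbf{J}$ and $\mathbf{W}^{(n)}_{2,+}=\mathbf{W}^{(n)}_{2,-}\mathbf{J}$ with the \emph{same} jump matrix $\mathbf{J}$, one computes $\mathbf{R}_+=\mathbf{W}^{(n)}_{1,-}\mathbf{J}\mathbf{J}^{-1}\mathbf{W}^{(n)}_{2,-}{}^{-1}=\mathbf{R}_-$, so $\mathbf{R}$ extends analytically across the whole jump contour and defines a function analytic on $\mathbb{C}$ minus possibly the single point $\lambda=0$. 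As $\lambda\to\infty$, both $\mathbf{W}^{(n)}_j$ have the same asymptotics $(\ii\lambda/x)^{-n\sigma_3/2}(\mathbb{I}+o(1))$, so $\mathbf{R}(\lambda)\to\mathbb{I}$.

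It remains to rule out a singularity of $\mathbf{R}$ at $\lambda=0$, and this is the step I expect to require the most care. Near $\lambda=0$ the behavior-at-the-origin condition gives $\mathbf{W}^{(n)}_j(\lambda,x)=\mathbf{B}^{(n)}_{0,j}(x)(\ii\lambda/x)^{(-1)^n\sigma_3/4}\mathbf{E}^{-1}(\mathbb{I}+o(1))$; hence $\mathbf{R}(\lambda)=\mathbf{B}^{(n)}_{0,1}(x)(\ii\lambda/x)^{(-1)^n\sigma_3/4}(\mathbb{I}+o(1))(\ii\lambda/x)^{-(-1)^n\sigma_3/4}\mathbf{B}^{(n)}_{0,2}(x)^{-1}$. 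The conjugation of the $o(1)$ error by $(\ii\lambda/x)^{\pm(-1)^n\sigma_3/4}$ amplifies the off-diagonal entries by at most $|\lambda|^{-1/2}$, so a priori $\mathbf{R}(\lambda)=O(|\lambda|^{-1/2})$ as $\lambda\to 0$. This growth rate is too weak to produce a pole: $\mathbf{R}$ is single-valued and analytic in a punctured neighborhood of $0$ with $\|\mathbf{R}(\lambda)\|=o(|\lambda|^{-1})$, so each matrix entry has a removable singularity at $\lambda=0$. Therefore $\mathbf{R}$ is entire and bounded, and by Liouville's theorem $\mathbf{R}(\lambda)\equiv\mathbb{I}$, i.e., $\mathbf{W}^{(n)}_1\equiv\mathbf{W}^{(n)}_2$. (If one wants to be careful about the precise meaning of the limit in \eqref{eq:limit-at-zero} and whether it is uniform in $\arg\lambda$, the relevant point is only that it provides an $O(|\lambda|^{-1/2})$ bound on $\mathbf{W}^{(n)}_j$ and its inverse, which is all the argument needs.)
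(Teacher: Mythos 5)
Your proposal is correct and follows essentially the same route as the paper: the paper's proof is the same two-step Liouville argument (unit determinant first, then the ratio $\mathbf{R}=\mathbf{W}\widetilde{\mathbf{W}}^{-1}$), stated very tersely. Your careful treatment of the point $\lambda=0$ — showing the conjugation by $(\ii\lambda/x)^{\pm(-1)^n\sigma_3/4}$ produces at worst $\mathcal{O}(|\lambda|^{-1/2})$ growth, which is too weak for a pole — simply fills in the removability step that the paper leaves implicit.
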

\begin{proof}
Suppose there exist two solutions, denoted $\mathbf{W}(\lambda)$ and $\widetilde{\mathbf{W}}(\lambda)$.  It follows from the conditions of the problem and Liouville's theorem that $\det(\mathbf{W}(\lambda))=\det(\widetilde{\mathbf{W}}(\lambda))\equiv 1$.  Consider the matrix $\mathbf{R}(\lambda):=\mathbf{W}(\lambda)\widetilde{\mathbf{W}}(\lambda)^{-1}$.  It similarly follows from the conditions of the problem that $\mathbf{R}(\lambda)$ is entire and tends to the identity matrix as $\lambda\to\infty$, so by Liouville's theorem again, $\mathbf{R}(\lambda)\equiv\mathbb{I}$.
\end{proof}

\begin{lemma}
Given $x>0$ and $n\in\mathbb{Z}$, if the solution $\mathbf{W}^{(n)}(\lambda,x)$ of Riemann-Hilbert Problem~\ref{rhp:algebraic} exists, then the normalization condition holds in the stronger sense that there exist coefficient matrices $\{\mathbf{A}_p^{(n)}(x)\}_{p=1}^\infty$ such that the complete asymptotic expansion
\begin{equation}
\mathbf{W}^{(n)}(\lambda,x)\ee^{\ii x(\ii x\lambda)^{-1/2}\sigma_3}\left(\frac{\ii\lambda}{x}\right)^{n\sigma_3/2}\sim\mathbb{I}+\sum_{p=1}^\infty \left(\frac{\ii\lambda}{x}\right)^{-p}\mathbf{A}_p^{(n)}(x),\quad \lambda\to\infty
\label{eq:lambda-infinity-stronger}
\end{equation}
is uniformly valid with respect to $\arg(\ii\lambda)$, and the expansion is differentiable term-by-term with respect to $\lambda$ and $x$.  Also, the limit in \eqref{eq:limit-at-zero} is the leading term in another complete asymptotic expansion involving other coefficient matrices $\{\mathbf{B}_p^{(n)}(x)\}_{p=0}^\infty$:
\begin{equation}
\mathbf{W}^{(n)}(\lambda,x)\ee^{-\ii x\lambda\sigma_3}\mathbf{E}\cdot\left(\frac{\ii\lambda}{x}\right)^{-(-1)^n\sigma_3/4}\sim\sum_{p=0}^\infty\left(\frac{\ii\lambda}{x}\right)^p\mathbf{B}_p^{(n)}(x),\quad\lambda\to 0
\label{eq:lambda-0-stronger}
\end{equation}
holding uniformly with respect to $\arg(\ii\lambda)$ and enjoying similar differentiability properties.
\label{lem:expansions}
\end{lemma}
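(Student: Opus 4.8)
The plan is to derive both expansions by a single device: conjugate $\mathbf{W}^{(n)}$ by exactly the explicit factors appearing in \eqref{eq:lambda-infinity-stronger} and \eqref{eq:lambda-0-stronger}, verify that the resulting matrix is single-valued near the relevant singular point up to an error that decays faster than every power, and then read off the complete series from the Cauchy integral formula.

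Near $\lambda=\infty$ I would set $\mathbf{P}(\lambda):=\mathbf{W}^{(n)}(\lambda,x)\ee^{\ii x(\ii x\lambda)^{-1/2}\sigma_3}(\ii\lambda/x)^{n\sigma_3/2}$; since the two diagonal factors commute this also equals $[\mathbf{W}^{(n)}(\lambda,x)(\ii\lambda/x)^{n\sigma_3/2}]\ee^{\ii x(\ii x\lambda)^{-1/2}\sigma_3}$, so the normalization condition of Riemann--Hilbert Problem~\ref{rhp:algebraic} gives $\mathbf{P}(\lambda)\to\mathbb{I}$ as $\lambda\to\infty$. Fix $R>1$; the only arcs of the jump contour in $\{|\lambda|>R\}$ are the two rays $\Sigma_\infty^\pm$, which are also where the branch cuts of $(\ii x\lambda)^{-1/2}$ and $(\ii\lambda/x)^{n\sigma_3/2}$ lie. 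A direct computation with the jump conditions, using the branch/orientation relations \eqref{eq:scalar-jumps-0}--\eqref{eq:scalar-jumps-infty} and the fact that $\mathbf{V}$ is unipotent on each of $\Sigma_\infty^\pm$, shows that $\mathbf{P}_+=\mathbf{P}_-(\mathbb{I}+\mathbf{E}_\infty)$ there, with $\mathbf{E}_\infty$ nilpotent off-diagonal and $\mathbf{E}_\infty(\lambda)=\mathcal{O}\big(|\lambda|^{|n|}\ee^{-2x|\lambda|}\big)$ as $\lambda\to\infty$ along the rays; in particular the factor $(-1)^n$ in the jump on $\Sigma_\infty^+$ is exactly what absorbs the branch jump of $(\ii\lambda/x)^{n\sigma_3/2}$, leaving only the unipotent part conjugated by an exponential that decays in the relevant half-plane. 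Then $\mathbf{P}(\lambda)-\mathbb{I}-\tfrac{1}{2\pi\ii}\int_{(\Sigma_\infty^+\cup\Sigma_\infty^-)\cap\{|\mu|>R\}}(\mu-\lambda)^{-1}\mathbf{P}_-(\mu)\mathbf{E}_\infty(\mu)\,\dd\mu$ has no jump across the rays (the Plemelj relation), hence is single-valued and analytic on all of $\{|\lambda|>R\}$ and vanishes at $\infty$, so it is a convergent Laurent series in nonnegative powers of $(\ii\lambda/x)^{-1}$; meanwhile the subtracted Cauchy transform admits, by the super-polynomial decay of $\mathbf{E}_\infty$, a complete asymptotic expansion in nonnegative integer powers of $(\ii\lambda/x)^{-1}$ (expand the kernel $(\mu-\lambda)^{-1}$ in powers of $\lambda^{-1}$ to finite order, the remainder being $\mathcal{O}(\lambda^{-N-1})$ and the coefficient integrals $\int\mathbf{P}_-\mathbf{E}_\infty\mu^{k}\,\dd\mu$ all convergent). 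Adding the two contributions gives \eqref{eq:lambda-infinity-stronger}; uniformity with respect to $\arg(\ii\lambda)$ right up to the rays holds because $\mathbf{P}_+$ and $\mathbf{P}_-$ differ by the exponentially small $\mathbf{E}_\infty$ and so are asymptotic to the same series, and term-by-term differentiability in $\lambda$ and $x$ follows by differentiating the integral representation, all data depending analytically on $x>0$ and, by Lemma~\ref{lem:uniqueness}, so does $\mathbf{W}^{(n)}$.

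The expansion at $\lambda=0$ is obtained identically with $\mathbf{Q}(\lambda):=\mathbf{W}^{(n)}(\lambda,x)\ee^{-\ii x\lambda\sigma_3}\mathbf{E}\cdot(\ii\lambda/x)^{-(-1)^n\sigma_3/4}$, which by the behavior-at-the-origin condition tends to the finite matrix $\mathbf{B}_0^{(n)}(x)$ as $\lambda\to0$, the only relevant arc now being $\Sigma_0^+$. Writing the conjugating exponentials in the jump condition in the form $\ee^{\pm\ii x(\ii x\lambda_\mp)^{-1/2}\sigma_3}$ and using the branch relation $(\ii x\lambda_+)^{-1/2}=-(\ii x\lambda_-)^{-1/2}$ on $\Sigma_0^+$ (from \eqref{eq:scalar-jumps-0}), one finds that the transformed jump of $\mathbf{Q}$ across $\Sigma_0^+$ equals $(-1)^n(\ii\lambda_-/x)^{(-1)^n\sigma_3/4}\,\mathbf{E}^{-1}\ee^{\ii w\sigma_3}\mathbf{V}\ee^{\ii w\sigma_3}\mathbf{E}\,(\ii\lambda_+/x)^{-(-1)^n\sigma_3/4}$ with $w:=x(\ii x\lambda_-)^{-1/2}$. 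The key algebraic fact, checked by a few lines of matrix algebra from \eqref{eq:H-define} and \eqref{eq:V-Sigma-0}, is that $\mathbf{E}^{-1}\ee^{\ii w\sigma_3}\mathbf{V}\ee^{\ii w\sigma_3}\mathbf{E}=\ii\sigma_3+\mathcal{O}(\ee^{2\ii w})$, where $\ee^{2\ii w}=\ee^{-2\sqrt{x/|\lambda|}}$ decays faster than any power of $\lambda$ as $\lambda\to0$ along $\Sigma_0^+$; the leading matrix $\ii\sigma_3$ is then annihilated --- for each parity of $n$ --- by the branch discontinuity of the flanking powers of $(\ii\lambda/x)$ (which satisfy $(\ii\lambda_+/x)^{1/4}=\ii(\ii\lambda_-/x)^{1/4}$ on $\Sigma_0^+$), so that $\mathbf{Q}_+=\mathbf{Q}_-(\mathbb{I}+\mathbf{E}_0)$ with $\mathbf{E}_0(\lambda)=\mathcal{O}(\ee^{-2\sqrt{x/|\lambda|}})$. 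Exactly as before, $\mathbf{Q}(\lambda)$ minus the Cauchy transform of $\mathbf{Q}_-\mathbf{E}_0$ over $\Sigma_0^+\cap\{|\lambda|<r\}$ (any $r<1$) is analytic and bounded through $\lambda=0$ --- the Cauchy transform having no logarithmic singularity there because $\mathbf{E}_0$ vanishes to infinite order at the endpoint $\mu=0$ --- hence equals a convergent Taylor series in $(\ii\lambda/x)$, while the subtracted Cauchy transform has a complete asymptotic expansion in nonnegative integer powers of $(\ii\lambda/x)$ because $\mathbf{Q}_-\mathbf{E}_0$ is integrable against $\mu^{-k-1}$ for every $k\ge0$; the Cauchy formula identifies the constant term of the sum as $\lim_{\lambda\to0}\mathbf{Q}(\lambda)=\mathbf{B}_0^{(n)}(x)$. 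This is \eqref{eq:lambda-0-stronger}, with uniformity and differentiability justified just as for $\lambda=\infty$.

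The step I expect to be the main obstacle is the bookkeeping in these two jump computations: one must confirm that the several $(-1)^n$'s, the branch discontinuities of $(\ii x\lambda)^{\pm1/2}$ and $(\ii\lambda/x)^{\pm\sigma_3}$, and --- near the origin --- the conjugation by the specific matrix $\mathbf{E}$ together conspire to reduce the transformed jumps to the identity modulo exponentially small terms. The genuinely delicate point is the identity $\mathbf{E}^{-1}\ee^{\ii x(\ii x\lambda)^{-1/2}\sigma_3}\mathbf{V}\ee^{-\ii x(\ii x\lambda)^{-1/2}\sigma_3}\mathbf{E}\to\ii\sigma_3$ on $\Sigma_0^+$: it is only a few lines of algebra but is exactly the computation that forces the particular choice of $\mathbf{E}$ in \eqref{eq:H-define}, and a single sign error in it would invalidate the conclusion. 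A secondary, more routine, difficulty is making rigorous the uniformity of the expansions right up to the jump arcs and the legitimacy of term-by-term differentiation near the endpoints of those arcs, where the Cauchy transforms would ordinarily acquire logarithmic singularities that are here suppressed by the super-polynomial decay of $\mathbf{E}_\infty$ and $\mathbf{E}_0$.
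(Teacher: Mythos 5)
Your proposal is correct and takes essentially the same route as the paper: the paper's proof introduces the single matrix $\mathbf{S}^{(n)}(\lambda)$ defined by exactly your two conjugations in $|\lambda|>1$ and $|\lambda|<1$, observes that its jump matrices tend to the identity exponentially fast as $\lambda\to\infty$ on $\Sigma_\infty^\pm$ and as $\lambda\to 0$ on $\Sigma_0^+$, and then concludes the full power-series expansions (and the analytic $x$-dependence needed for term-by-term differentiation) from analytic Fredholm theory for the equivalent singular integral equation. Your explicit jump computations --- including the identity showing that conjugation by $\mathbf{E}$ reduces the transformed jump on $\Sigma_0^+$ to the identity up to terms of order $\ee^{-2\sqrt{x/|\lambda|}}$ --- and the Cauchy-transform extraction of the series are precisely the details the paper leaves as ``easy to check''; the only soft spot is that analyticity of $\mathbf{W}^{(n)}$ in $x$ does not follow from Lemma~\ref{lem:uniqueness} alone but, as in the paper, from the Fredholm/singular-integral-equation machinery.
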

\begin{proof}
We consider the function $\lambda\mapsto\mathbf{S}(\lambda)$ defined by 
\begin{equation}
\mathbf{S}^{(n)}(\lambda):=\begin{cases}\displaystyle \mathbf{W}^{(n)}(\lambda,x)\ee^{\ii x(\ii x\lambda)^{-1/2}\sigma_3}\left(\frac{\ii\lambda}{x}\right)^{n\sigma_3/2},&\quad |\lambda|>1,\\
\displaystyle \mathbf{W}^{(n)}(\lambda,x)\ee^{-\ii x\lambda\sigma_3}\mathbf{E}\cdot\left(\frac{\ii\lambda}{x}\right)^{-(-1)^n\sigma_3/4},&\quad |\lambda|<1.
\end{cases}
\label{eq:Sn-define}
\end{equation}
Then, $\mathbf{S}^{(n)}(\lambda)$ satisfies the conditions of an equivalent Riemann-Hilbert problem, and it is easy to check that the jump matrices for $\mathbf{S}^{(n)}(\lambda)$ tend exponentially rapidly to the identity as $\lambda\to\infty$ in $\Sigma_\infty^+\cup\Sigma_\infty^-$ and as $\lambda\to 0$ in $\Sigma_0^+$.  This kind of problem is amenable to analytic (with respect to $x$) Fredholm theory applied to an equivalent singular integral equation, and the result that $\mathbf{S}^{(n)}(\lambda)$ has asymptotic power series expansions as $\lambda\to \infty$ and $\lambda\to 0$ then follows.
\end{proof}
In the next two subsections we show how and why Riemann-Hilbert Problem~\ref{rhp:algebraic} encodes the algebraic solutions $u=u_n(x)$ of \eqref{eq:D7KV}.
\subsection{Differential equations satisfied by the solution of Riemann-Hilbert Problem~\ref{rhp:algebraic}}
\label{sec:diff-eqs}
Suppose $\mathbf{W}^{(n)}(\lambda,x)$ solves Riemann-Hilbert Problem~\ref{rhp:algebraic}, and consider the related matrix 
\begin{equation}
\boldsymbol{\Psi}^{(n)}(\lambda,x):=\mathbf{W}^{(n)}(\lambda,x)\ee^{-\ii(x\lambda-x(\ii x\lambda)^{-1/2})\sigma_3}
\end{equation}
(cf.\ \eqref{eq:Y-Psi}).  It is straightforward to check that $\boldsymbol{\Psi}^{(n)}(\lambda,x)$ satisfies jump conditions across each arc of the jump contour that are independent of both $x$ and $\lambda$:  $\boldsymbol{\Psi}^{(n)}_+(\lambda,x)=\boldsymbol{\Psi}^{(n)}_-(\lambda,x)(-1)^n\mathbf{V}$ for $\lambda\in\Sigma_0^+\cup\Sigma_\infty^+$ and $\boldsymbol{\Psi}^{(n)}_+(\lambda,x)=\boldsymbol{\Psi}^{(n)}_-(\lambda,x)\mathbf{V}$ for $\lambda\in\Sigma_\infty^-\cup C^+\cup C^-$.  It follows that the matrices
\begin{equation}
\boldsymbol{\Lambda}^{(n)}(\lambda,x):=\frac{\partial\boldsymbol{\Psi}^{(n)}}{\partial \lambda}(\lambda,x)\boldsymbol{\Psi}^{(n)}(\lambda,x)^{-1}\quad\text{and}\quad
\mathbf{X}^{(n)}(\lambda,x):=\frac{\partial\boldsymbol{\Psi}^{(n)}}{\partial x}(\lambda,x)\boldsymbol{\Psi}^{(n)}(\lambda,x)^{-1}
\end{equation}
are both analytic for $\lambda\in\mathbb{C}\setminus\{0\}$.  They can be expressed in terms of the coefficients $\{\mathbf{A}_p^{(n)}(x)\}_{p=1}^\infty$ and $\{\mathbf{B}_p^{(n)}(x)\}_{p=0}^\infty$ of Lemma~\ref{lem:expansions} as follows.  First, we write $\boldsymbol{\Psi}^{(n)}(\lambda,x)$ in terms of $\mathbf{W}^{(n)}(\lambda,x)$ and then the matrix $\mathbf{S}^{(n)}(\lambda,x)$ defined by \eqref{eq:Sn-define}.  Then, assuming that $|\lambda|>1$, we have
\begin{equation}
\begin{split}
\boldsymbol{\Lambda}^{(n)}(\lambda,x)&=\frac{\partial\mathbf{S}^{(n)}}{\partial\lambda}(\lambda,x)\mathbf{S}^{(n)}(\lambda,x)^{-1} -\frac{n}{2\lambda}\mathbf{S}^{(n)}(\lambda,x)\sigma_3\mathbf{S}^{(n)}(\lambda,x)^{-1}-\ii x\mathbf{S}^{(n)}(\lambda,x)\sigma_3\mathbf{S}^{(n)}(\lambda,x)^{-1}\\
\mathbf{X}^{(n)}(\lambda,x)&=\frac{\partial\mathbf{S}^{(n)}}{\partial x}(\lambda,x)\mathbf{S}^{(n)}(\lambda,x)^{-1} +\frac{n}{2x}\mathbf{S}^{(n)}(\lambda,x)\sigma_3\mathbf{S}^{(n)}(\lambda,x)^{-1} -\ii\lambda\mathbf{S}^{(n)}(\lambda,x)\sigma_3\mathbf{S}^{(n)}(\lambda,x)^{-1}.
\end{split}
\end{equation}
Using \eqref{eq:lambda-infinity-stronger} from Lemma~\ref{lem:expansions} (the left-hand side of which is exactly $\mathbf{S}^{(n)}(\lambda,x)$ for $|\lambda|>1$) then gives
\begin{equation}
\begin{split}
\boldsymbol{\Lambda}^{(n)}(\lambda,x)&=-\ii x\sigma_3 -\frac{n}{2\lambda}\sigma_3-\frac{x^2}{\lambda}[\mathbf{A}^{(n)}_1(x),\sigma_3] + \mathcal{O}(\lambda^{-2}),\quad\lambda\to\infty,\quad\text{and}\\
\mathbf{X}^{(n)}(\lambda,x)&=-\ii \lambda\sigma_3 +\frac{n}{2x}\sigma_3 -x[\mathbf{A}^{(n)}_1(x),\sigma_3] + \mathcal{O}(\lambda^{-1}),\quad\lambda\to\infty.
\end{split}
\label{eq:LambdaXinfinity}
\end{equation}

To analyze the same matrices in the limit $\lambda\to 0$, we use \eqref{eq:Sn-define} to obtain, for $|\lambda|<1$,
\begin{equation}
\begin{split}
\boldsymbol{\Lambda}^{(n)}(\lambda,x)&=\frac{\partial\mathbf{S}^{(n)}}{\partial\lambda}(\lambda,x)\mathbf{S}^{(n)}(\lambda,x)^{-1} +\frac{(-1)^n}{4\lambda}\mathbf{S}^{(n)}(\lambda,x)\sigma_3\mathbf{S}^{(n)}(\lambda,x)^{-1}\\
&\qquad\qquad{}+\frac{1}{2x}\left(\frac{\ii\lambda}{x}\right)^{-3/2}\mathbf{S}^{(n)}(\lambda,x)
\left(\frac{\ii\lambda}{x}\right)^{(-1)^n\sigma_3/4}\mathbf{E}^{-1}\sigma_3\mathbf{E}\left(\frac{\ii\lambda}{x}\right)^{-(-1)^n\sigma_3/4}\mathbf{S}^{(n)}(\lambda,x)^{-1}\\
\mathbf{X}^{(n)}(\lambda,x)&=\frac{\partial\mathbf{S}^{(n)}}{\partial x}(\lambda,x)\mathbf{S}^{(n)}(\lambda,x)^{-1}-\frac{(-1)^n}{4x}\mathbf{S}^{(n)}(\lambda,x)\sigma_3\mathbf{S}^{(n)}(\lambda,x)^{-1}\\
&\qquad\qquad{}+\frac{\ii}{2x}\left(\frac{\ii\lambda}{x}\right)^{-1/2}\mathbf{S}^{(n)}(\lambda,x)\left(\frac{\ii\lambda}{x}\right)^{(-1)^n\sigma_3/4}\mathbf{E}^{-1}\sigma_3\mathbf{E}\left(\frac{\ii\lambda}{x}\right)^{-(-1)^n\sigma_3/4}\mathbf{S}^{(n)}(\lambda,x)^{-1}.
\end{split}
\end{equation}
Using \eqref{eq:lambda-0-stronger} from Lemma~\ref{lem:expansions}, the left-hand side of which is exactly $\mathbf{S}^{(n)}(\lambda,x)$ for $|\lambda|<1$, the terms on the first line of each expression are $\mathcal{O}(\lambda^{-1})$ and $\mathcal{O}(1)$, respectively, as $\lambda\to 0$.  For the terms on the second line of each, we use the identity
\begin{equation}
\mathbf{E}^{-1}\sigma_3\mathbf{E} = \begin{bmatrix}0 & \ee^{\ii\pi/3}\\\ee^{-\ii\pi/3} & 0\end{bmatrix}.
\end{equation}
Therefore, using \eqref{eq:lambda-0-stronger} from Lemma~\ref{lem:expansions} again shows that if $n$ is even,
\begin{equation}
\begin{split}
\boldsymbol{\Lambda}^{(n)}(\lambda,x)&=-\ee^{-\ii\pi/3}\frac{x}{2\lambda^2}\mathbf{B}^{(n)}_0(x)\begin{bmatrix}0&0\\1&0\end{bmatrix}\mathbf{B}^{(n)}_0(x)^{-1} + \mathcal{O}(\lambda^{-1}),\quad\lambda\to 0,\quad\text{and}\\
\mathbf{X}^{(n)}(\lambda,x)&=\ee^{-\ii\pi/3}\frac{1}{2\lambda}\mathbf{B}^{(n)}_0(x)\begin{bmatrix}0&0\\1&0\end{bmatrix}\mathbf{B}^{(n)}_0(x)^{-1} + \mathcal{O}(1),\quad\lambda\to 0,
\end{split}
\label{eq:LambdaX0-even}
\end{equation}
while if instead $n$ is odd,
\begin{equation}
\begin{split}
\boldsymbol{\Lambda}^{(n)}(\lambda,x)&=-\ee^{\ii\pi/3}\frac{x}{2\lambda^2}\mathbf{B}^{(n)}_0(x)\begin{bmatrix}0&1\\0&0\end{bmatrix}\mathbf{B}^{(n)}_0(x)^{-1} + \mathcal{O}(\lambda^{-1}),\quad\lambda\to 0,\quad\text{and}\\
\mathbf{X}^{(n)}(\lambda,x)&=\ee^{\ii\pi/3}\frac{1}{2\lambda}\mathbf{B}^{(n)}_0(x)\begin{bmatrix}0&1\\0&0\end{bmatrix}\mathbf{B}^{(n)}_0(x)^{-1} + \mathcal{O}(1),\quad\lambda\to 0.
\end{split}
\label{eq:LambdaX0-odd}
\end{equation}

The Laurent expansions \eqref{eq:LambdaXinfinity} and \eqref{eq:LambdaX0-even}--\eqref{eq:LambdaX0-odd} then fully determine the matrices $\boldsymbol{\Lambda}^{(n)}(\lambda,x)$ and $\mathbf{X}^{(n)}(\lambda,x)$:
\begin{equation}
\begin{split}
\boldsymbol{\Lambda}^{(n)}(\lambda,x)&=-\ii x\sigma_3-\frac{n}{2\lambda}\sigma_3-\frac{x}{\lambda}\mathbf{J}^{(n)}(x) +\frac{\ii x}{2\lambda^2}\mathbf{K}^{(n)}(x),\\
\mathbf{X}^{(n)}(\lambda,x)&=-\ii\lambda\sigma_3+\frac{n}{2x}\sigma_3-\mathbf{J}^{(n)}(x)-\frac{\ii}{2\lambda}\mathbf{K}^{(n)}(x),
\end{split}
\end{equation}
where
\begin{equation}
\mathbf{J}^{(n)}(x):=x[\mathbf{A}_1^{(n)}(x),\sigma_3]\quad\text{and}\quad\mathbf{K}^{(n)}(x):=\begin{cases}
\ee^{\ii\pi/6}\mathbf{B}_0^{(n)}(x)\begin{bmatrix}0&0\\1&0\end{bmatrix}\mathbf{B}_0^{(n)}(x)^{-1},&\quad \text{$n$ even},\\
\ee^{5\pi\ii/6}\mathbf{B}_0^{(n)}(x)\begin{bmatrix}0&1\\0&0\end{bmatrix}\mathbf{B}_0^{(n)}(x)^{-1},&\quad\text{$n$ odd}.
\end{cases}
\label{eq:M1nM2n}
\end{equation}
Since $\mathbf{J}^{(n)}(x)$ is an off-diagonal matrix and $\mathbf{K}^{(n)}(x)$ is singular and nondiagonalizable but nonzero, it follows that whenever $x$ is such that Riemann-Hilbert Problem~\ref{rhp:algebraic} is solvable, if potentials $u=u_n(x)$, $\ee^{\pm\ii\varphi}=\ee^{\pm\ii\varphi_n(x)}$, $p=p_n(x)$, and $q=q_n(x)$ are defined in terms of the matrix elements of $\mathbf{J}=\mathbf{J}^{(n)}(x)$ and $\mathbf{K}=\mathbf{K}^{(n)}(x)$ by \eqref{eq:potentials-from-M12} (taking $\epsilon=-1$ by our convention), then in particular $u_n(x)$ is a solution of the Painlev\'e-III D7 equation in the form \eqref{eq:D7KV} for $a=-\ii n$ and $b$ determined from \eqref{eq:phiprime}.  
Using $\det(\mathbf{B}_0^{(n)}(x))=1$ (arising by combining Lemmas~\ref{lem:uniqueness} and \ref{lem:expansions}), the formula for $u_n(x)$ in terms of the matrix $\mathbf{B}_0^{(n)}(x)$ obtained from $\mathbf{W}^{(n)}(\lambda,x)$ by \eqref{eq:lambda-0-stronger} from Lemma~\ref{lem:expansions} is:
\begin{equation}
u_n(x)=\begin{cases}\ee^{-5\pi\ii/6}xB_{0,12}^{(n)}(x)B_{0,22}^{(n)}(x),&\quad \text{$n$ even,}\\
\ee^{5\pi\ii/6}xB_{0,11}^{(n)}(x)B_{0,21}^{(n)}(x),&\quad\text{$n$ odd.}
\end{cases}
\label{eq:un-formula-Y}
\end{equation}

\subsection{Solution of Riemann-Hilbert Problem~\ref{rhp:algebraic} by Schlesinger transformations}
\label{sec:Schlesinger}
\subsubsection{Schlesinger transformations for the Painlev\'e-III (D7) Lax pair}
Schlesinger transformations for the Lax pair \eqref{eq:Lax} and their induced B\"acklund transformations are discussed in \cite[Section 6.1]{KitaevV04}.
Define a gauge transformation matrix $\mathbf{G}(\lambda,x)$ by
\begin{equation}
\mathbf{G}(\lambda,x):=(\ii\lambda)^{1/2}\overline{\mathbf{G}}(x)+(\ii\lambda)^{-1/2}\underline{\mathbf{G}}(x),
\label{eq:Gauge-form}
\end{equation}
where $\overline{\mathbf{G}}(x)$ and $\underline{\mathbf{G}}(x)$ are matrices to be determined so that when $\boldsymbol{\Psi}$ is a simultaneous fundamental solution matrix of the Lax pair equations \eqref{eq:Lax}, then $\widetilde{\boldsymbol{\Psi}}:=\mathbf{G}(\lambda,x)\boldsymbol{\Psi}$ is as well, but with a different value of $a$ and different potentials $u(x)$, $p(x)$, $q(x)$, and $\varphi(x)$.  We also want to normalize $\mathbf{G}(\lambda,x)$ so that 
$\det(\mathbf{G}(\lambda,x))\equiv 1$.  Clearly, $\widetilde{\boldsymbol{\Psi}}(\lambda,x)$ is a simultaneous fundamental solution matrix of $\widetilde{\boldsymbol{\Psi}}_\lambda = \widetilde{\boldsymbol{\Lambda}}(\lambda,x)\widetilde{\boldsymbol{\Psi}}$ and $\widetilde{\boldsymbol{\Psi}}_x = \widetilde{\mathbf{X}}(\lambda,x)\widetilde{\boldsymbol{\Psi}}$, where
\begin{equation}
\begin{split}
\widetilde{\boldsymbol{\Lambda}}(\lambda,x)&:=\widetilde{\boldsymbol{\Psi}}_\lambda(\lambda,x)\widetilde{\boldsymbol{\Psi}}(\lambda,x)^{-1}\\
& = \frac{\partial\mathbf{G}}{\partial\lambda}(\lambda,x)\mathbf{G}(\lambda,x)^{-1} + \mathbf{G}(\lambda,x)\boldsymbol{\Lambda}(\lambda,x)\mathbf{G}(\lambda,x)^{-1}\quad\text{and}\\
\widetilde{\mathbf{X}}(\lambda,x)&:=\widetilde{\boldsymbol{\Psi}}_x(\lambda,x)\widetilde{\boldsymbol{\Psi}}(\lambda,x)^{-1}\\
& = \frac{\partial\mathbf{G}}{\partial x}(\lambda,x)\mathbf{G}(\lambda,x)^{-1} + \mathbf{G}(\lambda,x)\mathbf{X}(\lambda,x)\mathbf{G}(\lambda,x)^{-1}.
\end{split}
\label{eq:gauge-Lambda-X}
\end{equation}
Writing $a=-\ii n$ for $n\in\mathbb{C}$ (in general), we want to pick the coefficients $\overline{\mathbf{G}}(x)$ and $\underline{\mathbf{G}}(x)$ so that the matrices
\begin{equation}
\boldsymbol{\Lambda}(\lambda,x):=-\ii x\sigma_3 -\frac{n}{2\lambda}\sigma_3-\frac{x}{\lambda}\mathbf{J}(x)+\frac{\ii x}{2\lambda^2}\mathbf{K}(x)\quad\text{and}\quad
\mathbf{X}(\lambda,x):=-\ii\lambda\sigma_3 +\frac{n}{2x}\sigma_3 -\mathbf{J}(x)-\frac{\ii}{2\lambda}\mathbf{K}(x),
\label{eq:Lambda-X-matrices-rewrite}
\end{equation}
where $\mathbf{J}(x)$ is off-diagonal and $\mathbf{K}(x)$ is non-diagonalizable with zero eigenvalues, are transformed into corresponding matrices
\begin{equation}
\widetilde{\boldsymbol{\Lambda}}(\lambda,x):=-\ii x\sigma_3 -\frac{\widetilde{n}}{2\lambda}\sigma_3-\frac{x}{\lambda}\widetilde{\mathbf{J}}(x)+\frac{\ii x}{2\lambda^2}\widetilde{\mathbf{K}}(x)\quad\text{and}\quad
\widetilde{\mathbf{X}}(\lambda,x):=-\ii\lambda\sigma_3 +\frac{\widetilde{n}}{2x}\sigma_3 -\widetilde{\mathbf{J}}(x)-\frac{\ii}{2\lambda}\widetilde{\mathbf{K}}(x),
\label{eq:tilde-Lambda-X-matrices}
\end{equation}
with $\widetilde{\mathbf{J}}(x)$ off-diagonal and $\widetilde{\mathbf{K}}(x)$ non-diagonalizable with zero eigenvalues, where $\widetilde{n}$ is another complex parameter.
Using \eqref{eq:Gauge-form}, \eqref{eq:Lambda-X-matrices-rewrite}, and \eqref{eq:tilde-Lambda-X-matrices}, we see that both sides of each of the equations (equivalent to \eqref{eq:gauge-Lambda-X}) 
\begin{equation}
\begin{split}
\lambda^2(\ii\lambda)^{1/2}\widetilde{\boldsymbol{\Lambda}}(\lambda,x)\mathbf{G}(\lambda,x) &= \lambda^2(\ii\lambda)^{1/2}\frac{\partial\mathbf{G}}{\partial\lambda}(\lambda,x) + \lambda^2(\ii\lambda)^{1/2}\mathbf{G}(\lambda,x)\boldsymbol{\Lambda}(\lambda,x)\\
\lambda(\ii\lambda)^{1/2}\widetilde{\mathbf{X}}(\lambda,x)\mathbf{G}(\lambda,x)&=\lambda(\ii\lambda)^{1/2}\frac{\partial\mathbf{G}}{\partial x}(\lambda,x)+\lambda(\ii\lambda)^{1/2}\mathbf{G}(\lambda,x)\mathbf{X}(\lambda,x)
\end{split}
\label{eq:polynomial-gauge-form}
\end{equation}
are cubic polynomials in $\lambda$.  The coefficients of $\lambda^3$ from both equations are balanced exactly when 
\begin{equation}
\sigma_3\overline{\mathbf{G}}(x)=\overline{\mathbf{G}}(x)\sigma_3,
\label{eq:cubic-terms}
\end{equation}
i.e.,  $\overline{\mathbf{G}}(x)$ is a diagonal matrix.  The coefficients of $\lambda^2$ give the equations
\begin{equation}
\begin{split}
-\frac{\ii\widetilde{n}}{2}\sigma_3\overline{\mathbf{G}}(x)-\ii x\widetilde{\mathbf{J}}(x)\overline{\mathbf{G}}(x)-\ii x\sigma_3\underline{\mathbf{G}}(x) &=\frac{1}{2}\ii\overline{\mathbf{G}}(x) -\frac{\ii n}{2}\overline{\mathbf{G}}(x)\sigma_3-\ii x\overline{\mathbf{G}}(x)\mathbf{J}(x) -\ii x\underline{\mathbf{G}}(x)\sigma_3\\
\frac{\ii\widetilde{n}}{2x}\sigma_3\overline{\mathbf{G}}(x)-\ii\widetilde{\mathbf{J}}(x)\overline{\mathbf{G}}(x)-\ii\sigma_3\underline{\mathbf{G}}(x)&=\ii\overline{\mathbf{G}}'(x) +
\frac{\ii n}{2x}\overline{\mathbf{G}}(x)\sigma_3 -\ii\overline{\mathbf{G}}(x)\mathbf{J}(x) -\ii\underline{\mathbf{G}}(x)\sigma_3.
\end{split}
\label{eq:quadratic-terms}
\end{equation}
Using the fact that $\mathbf{J}(x)$ and $\widetilde{\mathbf{J}}(x)$ are off-diagonal, the diagonal terms of these equations are equivalent to the equations
\begin{equation}
\begin{split}
(\widetilde{n}-n)\sigma_3\overline{\mathbf{G}}(x)&=-\overline{\mathbf{G}}(x)\\
-(\widetilde{n}-n)\sigma_3\overline{\mathbf{G}}(x)&=-2x\overline{\mathbf{G}}'(x).
\end{split}
\label{eq:quadratic-terms-diagonal}
\end{equation}
Adding these together yields $2x\overline{\mathbf{G}}'(x)=-\overline{\mathbf{G}}(x)$ which implies that $\overline{\mathbf{G}}(x)=x^{-1/2}\overline{\mathbf{G}}_0$ where $\overline{\mathbf{G}}_0$ is a constant diagonal matrix.  Then, for $x\neq 0$, the first of these equations is the algebraic relation $(\widetilde{n}-n)\sigma_3\overline{\mathbf{G}}_0 = -\overline{\mathbf{G}}_0$.  Observe that if both diagonal elements of $\overline{\mathbf{G}}_0$ are nonzero, we arrive at the contradiction that both $\widetilde{n}=n+1$ and $\widetilde{n}=n-1$.  Hence nontrivial solutions for $\overline{\mathbf{G}}(x)$ are:  either
\begin{equation}
\overline{\mathbf{G}}(x)=\overline{\mathbf{G}}^\uparrow(x):=\begin{bmatrix}0 & 0\\0 & x^{-1/2}\end{bmatrix}\quad\text{and}\quad\widetilde{n}=n+1
\end{equation}
or
\begin{equation}
\overline{\mathbf{G}}(x)=\overline{\mathbf{G}}^\downarrow(x):=\begin{bmatrix}x^{-1/2} & 0\\0 & 0\end{bmatrix}\quad\text{and}\quad \widetilde{n}=n-1.
\end{equation}
These are unique up to irrelevant scalings by nonzero constants.  

At this point, we enforce $\det(\mathbf{G}(\lambda,x))=1$, which implies two alternate forms for the gauge transformation $\mathbf{G}(\lambda,x)$:
\begin{equation}
\mathbf{G}(\lambda,x)=\mathbf{G}^\uparrow(\lambda,x)=\begin{bmatrix} x^{1/2}(\ii\lambda)^{-1/2}& B^\uparrow(x)(\ii\lambda)^{-1/2}\\C^\uparrow(x)(\ii\lambda)^{-1/2} & x^{-1/2}(\ii\lambda)^{1/2}+x^{-1/2}B^\uparrow(x)C^\uparrow(x)(\ii\lambda)^{-1/2}\end{bmatrix}\quad\text{and}\quad \widetilde{n}=n+1
\label{eq:G-up-coeffs}
\end{equation}
and 
\begin{equation}
\mathbf{G}(\lambda,x)=\mathbf{G}^\downarrow(\lambda,x)=\begin{bmatrix}x^{-1/2}(\ii\lambda)^{1/2} + x^{-1/2}B^\downarrow(x)C^\downarrow(x)(\ii\lambda)^{-1/2} & B^\downarrow(x)(\ii\lambda)^{-1/2}\\
C^\downarrow(x)(\ii\lambda)^{-1/2} & x^{1/2}(\ii\lambda)^{-1/2}\end{bmatrix}\quad\text{and}\quad\widetilde{n}=n-1.
\label{eq:G-down-coeffs}
\end{equation}
Using $\det(\mathbf{G}(\lambda,x))=1$, we now solve \eqref{eq:polynomial-gauge-form} for $\widetilde{\boldsymbol{\Lambda}}(\lambda,x)$, which has the form of a Laurent polynomial in $\lambda$ involving powers ranging from $\lambda^0$ through $\lambda^{-3}$.  In order that the coefficient of $\lambda^0$ is exactly $-\ii x\sigma_3$ as required by the form \eqref{eq:tilde-Lambda-X-matrices}, it is necessary to set
\begin{equation}
C^\uparrow(x):=\frac{x^{1/2}q(x)}{8u(x)}\quad\text{and}\quad B^\downarrow(x):=-\frac{x^{1/2}p(x)}{8u(x)}.
\label{eq:SchlesingerCoeffs1}
\end{equation}
In order that the coefficient of $\lambda^{-3}$ vanishes as required by the form \eqref{eq:tilde-Lambda-X-matrices} we must then set
\begin{equation}
B^\uparrow(x)=-x^{1/2}\ee^{\ii\varphi(x)}\quad\text{and}\quad C^\downarrow(x)=-x^{1/2}\ee^{-\ii\varphi(x)}.
\label{eq:SchlesingerCoeffs2}
\end{equation}
These relations use our convention of $\epsilon=-1$ in the parametrization of the matrices $\mathbf{J}(x)$ and $\mathbf{K}(x)$ in \eqref{eq:M1M2}.
It is then clear that if one defines matrices $\widetilde{\mathbf{J}}(x)$ and $\widetilde{\mathbf{K}}(x)$ from the coefficients of $\lambda^{-1}$ and $\lambda^{-2}$ respectively after taking the correct incremented/decremented value of $\widetilde{n}$ in the form of $\widetilde{\boldsymbol{\Lambda}}(\lambda,x)$ in \eqref{eq:tilde-Lambda-X-matrices}, then $\widetilde{\mathbf{J}}(x)$ is indeed an off-diagonal matrix, and $\widetilde{\mathbf{K}}(x)$ is a non-diagonalizable matrix with zero eigenvalues.  Finally, we solve \eqref{eq:polynomial-gauge-form} for $\widetilde{\mathbf{X}}(\lambda,x)$ and compare with the form given in \eqref{eq:tilde-Lambda-X-matrices} for the computed coefficients $\widetilde{\mathbf{J}}(x)$ and $\widetilde{\mathbf{K}}(x)$.  We observe agreement due to the differential equations \eqref{eq:first}--\eqref{eq:q-explicit}.

For the transformation $\mathbf{G}^\uparrow(\lambda,x)$, the transformed coefficients are:
\begin{equation}
\widetilde{\mathbf{J}}(x)=\mathbf{J}^\uparrow(x):=\begin{bmatrix}0 & 2 x\ee^{\ii\varphi(x)}\\
\frac{x\ee^{\ii\varphi(x)}q(x)^2}{32u(x)^2}+\frac{(n+1)q(x)}{8xu(x)}-\frac{\ee^{-\ii\varphi(x)}u(x)}{2x^2} & 0\end{bmatrix}
\label{eq:M1up}
\end{equation}
and
\begin{equation}
\widetilde{\mathbf{K}}(x)=\mathbf{K}^\uparrow(x):=
\frac{\ii b x\ee^{\ii\varphi(x)}}{64u(x)^3}\begin{bmatrix}8u(x)q(x) & -64u(x)^2\\q(x)^2 & -8u(x)q(x)\end{bmatrix}.
\label{eq:M2up}
\end{equation}
For the transformation $\mathbf{G}^\downarrow(\lambda,x)$, the transformed coefficients are:
\begin{equation}
\widetilde{\mathbf{J}}(x)=\mathbf{J}^\downarrow(x):=\begin{bmatrix}
0 & -\frac{x\ee^{-\ii\varphi(x)}p(x)^2}{32u(x)^2}+\frac{(n-1)p(x)}{8xu(x)}+\frac{\ee^{\ii\varphi(x)}u(x)}{2x^2} \\
-2x\ee^{-\ii\varphi(x)} & 0\end{bmatrix}
\label{eq:M1down}
\end{equation}
and
\begin{equation}
\widetilde{\mathbf{K}}(x)=\mathbf{K}^\downarrow(x):=\frac{\ii b x\ee^{-\ii\varphi(x)}}{64u(x)^3}\begin{bmatrix}
-8u(x)p(x) & -p(x)^2\\64u(x)^2 & 8u(x)p(x)\end{bmatrix}.
\label{eq:M2down}
\end{equation}
In these formul\ae, $b$ is the constant expressed in terms of the potentials via \eqref{eq:b-no-derivatives} with $a=-\ii n$.

\subsubsection{The induction argument}
We now show that solutions of Riemann-Hilbert Problem~\ref{rhp:algebraic} for consecutive integer values of $n$ are related by suitable Schlesinger transformations.
\paragraph{\it The case of $n$ even.}
Suppose that $n$ is even and that for some $x>0$, Riemann-Hilbert Problem~\ref{rhp:algebraic} has a solution $\mathbf{W}^{(n)}(\lambda,x)$.  Let $\mathbf{G}^{(n)\uparrow}(\lambda,x)$ and $\mathbf{G}^{(n)\downarrow}(\lambda,x)$ denote the Schlesinger transformations associated with the potentials $u=u_n(x)$, $\ee^{\pm\ii\varphi}=\ee^{\pm\ii\varphi_n(x)}$, $p=p_n(x)$, and $q=q_n(x)$ for $\epsilon=-1$.  We will now show that $\mathbf{W}^{(n+1)}(\lambda,x)=\mathbf{G}^{(n)\uparrow}(\lambda,x)\mathbf{W}^{(n)}(\lambda,x)$ and that $\mathbf{W}^{(n-1)}(\lambda,x)=\mathbf{G}^{(n)\downarrow}(\lambda,x)\mathbf{W}^{(n)}(\lambda,x)$.  Obviously the transformed matrices are analytic exactly where $\mathbf{W}^{(n)}(\lambda,x)$ is, and it is easy to see that the jump conditions are satisfied in both cases because the Schlesinger transformation induces a sign change on the (positive imaginary) branch cut of $(\ii\lambda)^{\pm 1/2}$ but otherwise leaves the jump conditions invariant.  So it remains to check the normalization condition at $\lambda=\infty$ and the condition at $\lambda=0$.

We start by combining \eqref{eq:potentials-from-M12} with \eqref{eq:M1nM2n} to get, for $n$ even,
\begin{equation}
\begin{split}
u_n(x)=\epsilon \ee^{\ii\pi/6}xB_{0,12}^{(n)}(x)B_{0,22}^{(n)}(x), & \quad
\ee^{\pm\ii\varphi_n(x)}=-\epsilon \left[\frac{B_{0,12}^{(n)}(x)}{B_{0,22}^{(n)}(x)}\right]^{\pm 1},\\
p_n(x)= -8\ee^{\ii\pi/6}xA_{1,12}^{(n)}(x)B_{0,12}^{(n)}(x)B_{0,22}^{(n)}(x), & \quad
q_n(x)=8\ee^{\ii\pi/6}xA_{1,21}^{(n)}(x)B_{0,12}^{(n)}(x)B_{0,22}^{(n)}(x).
\end{split}
\end{equation}
Using these in \eqref{eq:SchlesingerCoeffs1}--\eqref{eq:SchlesingerCoeffs2} and substituting into \eqref{eq:G-up-coeffs} and \eqref{eq:G-down-coeffs} gives
\begin{equation}
\mathbf{G}^{(n)\uparrow}(\lambda,x)=\left(\begin{bmatrix}1& 0\\
\epsilon A_{1,21}^{(n)}(x) &1\end{bmatrix}+\left(\frac{\ii\lambda}{x}\right)^{-1}\begin{bmatrix}0& \epsilon B_{0,12}^{(n)}(x)B_{0,22}^{(n)}(x)^{-1}\\
0 & A_{1,21}^{(n)}(x)B_{0,12}^{(n)}(x)B_{0,22}^{(n)}(x)^{-1}\end{bmatrix}\right)\left(\frac{\ii\lambda}{x}\right)^{-\sigma_3/2},
\end{equation}
\begin{equation}
\mathbf{G}^{(n)\downarrow}(\lambda,x)=\left(\begin{bmatrix}1 & \epsilon A_{1,12}^{(n)}(x)\\0 & 1\end{bmatrix}+\left(\frac{\ii\lambda}{x}\right)^{-1}
\begin{bmatrix} A_{1,12}^{(n)}(x)B_{0,22}^{(n)}(x)B_{0,12}^{(n)}(x)^{-1} & 0\\\epsilon B_{0,22}^{(n)}(x)B_{0,12}^{(n)}(x)^{-1} & 0\end{bmatrix}\right)
\left(\frac{\ii\lambda}{x}\right)^{\sigma_3/2}.
\end{equation}
Now using \eqref{eq:lambda-infinity-stronger} from Lemma~\ref{lem:expansions} gives
\begin{multline}
\mathbf{G}^{(n)\uparrow}(\lambda,x)\mathbf{W}^{(n)}(\lambda,x)\ee^{\ii x(\ii x\lambda)^{-1/2}\sigma_3}\left(\frac{\ii\lambda}{x}\right)^{(n+1)\sigma_3}\\
\begin{aligned}
&=
\left(\begin{bmatrix}1&0\\\epsilon A_{1,21}^{(n)}(x) & 1\end{bmatrix}+\mathcal{O}(\lambda^{-1})\right)\left(\frac{\ii\lambda}{x}\right)^{-\sigma_3/2}
\left(\mathbb{I}+\left(\frac{\ii\lambda}{x}\right)^{-1}\mathbf{A}_1^{(n)}(x)+\mathcal{O}(\lambda^{-2})\right)\left(\frac{\ii\lambda}{x}\right)^{\sigma_3/2}\\
&=\left(\begin{bmatrix}1&0\\\epsilon A_{1,21}^{(n)}(x) & 1\end{bmatrix}+\mathcal{O}(\lambda^{-1})\right)\left(\begin{bmatrix}1&0\\A_{1,21}^{(n)}(x) & 1\end{bmatrix}+\mathcal{O}(\lambda^{-1})\right)\\
&=\mathbb{I}+\mathcal{O}(\lambda^{-1}),\quad\lambda\to\infty, \quad\text{and}
\end{aligned}
\label{eq:G-up-infinity-even}
\end{multline}
\begin{multline}
\mathbf{G}^{(n)\downarrow}(\lambda,x)\mathbf{W}^{(n)}(\lambda,x)\ee^{\ii x(\ii x\lambda)^{-1/2}\sigma_3}\left(\frac{\ii\lambda}{x}\right)^{(n-1)\sigma_3}\\
\begin{aligned}
&=
\left(\begin{bmatrix}1 & \epsilon A_{1,12}^{(n)}(x)\\0 & 1\end{bmatrix}+\mathcal{O}(\lambda^{-1})\right)\left(\frac{\ii\lambda}{x}\right)^{\sigma_3/2}
\left(\mathbb{I}+\left(\frac{\ii\lambda}{x}\right)^{-1}\mathbf{A}_1^{(n)}(x) + \mathcal{O}(\lambda^{-2})\right)\left(\frac{\ii\lambda}{x}\right)^{-\sigma_3/2}\\
&= \left(\begin{bmatrix}1 & \epsilon A_{1,12}^{(n)}(x)\\0 & 1\end{bmatrix}+\mathcal{O}(\lambda^{-1})\right)\left(\begin{bmatrix}1 & A_{1,12}^{(n)}(x)\\0 & 1\end{bmatrix} + \mathcal{O}(\lambda^{-1})\right)\\
&=\mathbb{I}+\mathcal{O}(\lambda^{-1}),\quad \lambda\to\infty,
\end{aligned}
\label{eq:G-down-infinity-even}
\end{multline}
using $\epsilon =-1$.  Therefore $\mathbf{G}^{(n)\uparrow}(\lambda,x)\mathbf{W}^{(n)}(\lambda,x)$ and $\mathbf{G}^{(n)\downarrow}(\lambda,x)\mathbf{W}^{(n)}(\lambda,x)$ behave respectively as $\mathbf{W}^{(n+1)}(\lambda,x)$ and $\mathbf{W}^{(n-1)}(\lambda,x)$ are required to according to Riemann-Hilbert Problem~\ref{rhp:algebraic}.  On the other hand, using \eqref{eq:lambda-0-stronger} from Lemma~\ref{lem:expansions} for $n$ even gives
\begin{multline}
\mathbf{G}^{(n)\uparrow}(\lambda,x)\mathbf{W}^{(n)}(\lambda,x)\ee^{-\ii x\lambda}\mathbf{E}\cdot\left(\frac{\ii\lambda}{x}\right)^{\sigma_3/4}\\
\begin{aligned}
&=\left(\left(\frac{\ii\lambda}{x}\right)^{-1}\begin{bmatrix}0& \epsilon B_{0,12}^{(n)}(x)B_{0,22}^{(n)}(x)^{-1}\\
0 & A_{1,21}^{(n)}(x)B_{0,12}^{(n)}(x)B_{0,22}^{(n)}(x)^{-1}\end{bmatrix}+\begin{bmatrix}1& 0\\
\epsilon A_{1,21}^{(n)}(x) &1\end{bmatrix}\right)\\
&\quad\quad\quad{}\cdot\left(\frac{\ii\lambda}{x}\right)^{-\sigma_3/2}\left(\mathbf{B}_0^{(n)}(x)+\left(\frac{\ii\lambda}{x}\right)\mathbf{B}_1^{(n)}(x) + \mathcal{O}(\lambda^{-2})\right)\left(\frac{\ii\lambda}{x}\right)^{\sigma_3/2}\\
&=\left(\left(\frac{\ii\lambda}{x}\right)^{-1}\begin{bmatrix}0& \epsilon B_{0,12}^{(n)}(x)B_{0,22}^{(n)}(x)^{-1}\\
0 & A_{1,21}^{(n)}(x)B_{0,12}^{(n)}(x)B_{0,22}^{(n)}(x)^{-1}\end{bmatrix}+\begin{bmatrix}1& 0\\
\epsilon A_{1,21}^{(n)}(x) &1\end{bmatrix}\right)\\
&\quad\quad\quad{}\cdot\left(\left(\frac{\ii\lambda}{x}\right)^{-1}\begin{bmatrix}0 & B_{0,12}^{(n)}(x)\\0 & 0\end{bmatrix} + 
\begin{bmatrix}B_{0,11}^{(n)}(x) & B_{1,12}^{(n)}(x)\\0 & B_{0,22}^{(n)}(x)\end{bmatrix}+\mathcal{O}(\lambda)\right)\\
&=\mathcal{O}(1),\quad\lambda\to 0,
\end{aligned}
\end{multline}
and
\begin{multline}
\mathbf{G}^{(n)\downarrow}(\lambda,x)\mathbf{W}^{(n)}(\lambda,x)\ee^{-\ii x\lambda}\mathbf{E}\cdot\left(\frac{\ii\lambda}{x}\right)^{\sigma_3/4}\\
\begin{aligned}
&=\left(\left(\frac{\ii\lambda}{x}\right)^{-1}\begin{bmatrix}A_{1,12}^{(n)}(x)B_{0,22}^{(n)}(x)B_{0,12}^{(n)}(x)^{-1} & 0\\\epsilon B_{0,22}^{(n)}(x)B_{0,12}^{(n)}(x)^{-1} & 0
\end{bmatrix}+\begin{bmatrix}1 & \epsilon A_{1,12}^{(n)}(x)\\0 & 1\end{bmatrix}\right)\\
&\quad\quad\quad{}\cdot\left(\frac{\ii\lambda}{x}\right)^{\sigma_3/2}\left(\mathbf{B}_0^{(n)}(x) +\left(\frac{\ii\lambda}{x}\right)\mathbf{B}_1^{(n)}(x) + \mathcal{O}(\lambda^{-2})\right)\left(\frac{\ii\lambda}{x}\right)^{\sigma_3/2}\\
&=\left(\left(\frac{\ii\lambda}{x}\right)^{-1}\begin{bmatrix}A_{1,12}^{(n)}(x)B_{0,22}^{(n)}(x)B_{0,12}^{(n)}(x)^{-1} & 0\\\epsilon B_{0,22}^{(n)}(x)B_{0,12}^{(n)}(x)^{-1} & 0
\end{bmatrix}+\begin{bmatrix}1 & \epsilon A_{1,12}^{(n)}(x)\\0 & 1\end{bmatrix}\right)\\
&\quad\quad\quad{}\cdot\left(\left(\frac{\ii\lambda}{x}\right)^{-1}\begin{bmatrix}0&0\\0 & B_{0,22}^{(n)}(x)\end{bmatrix}+
\begin{bmatrix}0 & B_{0,12}^{(n)}(x)\\B_{0,21}^{(n)}(x) & B_{1,22}^{(n)}(x)\end{bmatrix}+\mathcal{O}(\lambda)\right)\\
&=\mathcal{O}(1),\quad\lambda\to 0,
\end{aligned}
\end{multline}
again using $\epsilon=-1$.  This proves that $\mathbf{G}^{(n)\uparrow}(\lambda,x)\mathbf{W}^{(n)}(\lambda,x)$ and $\mathbf{G}^{(n)\downarrow}(\lambda,x)\mathbf{W}^{(n)}(\lambda,x)$ behave, respectively, as $\mathbf{W}^{(n+1)}(\lambda,x)$ and $\mathbf{W}^{(n-1)}(\lambda,x)$ are required to according to the conditions of Riemann-Hilbert Problem~\ref{rhp:algebraic}, taking into account that $n$ is even.  

Therefore if $\mathbf{W}^{(n)}(\lambda,x)$ is the (unique, by Lemma~\ref{lem:uniqueness}) solution of Riemann-Hilbert Problem~\ref{rhp:algebraic} for  $n$ even, then 
$\mathbf{W}^{(n+1)}(\lambda,x):=\mathbf{G}^{(n)\uparrow}(\lambda,x)\mathbf{W}^{(n)}(\lambda,x)$ and $\mathbf{W}^{(n-1)}(\lambda,x):=\mathbf{G}^{(n)\downarrow}(\lambda,x)\mathbf{W}^{(n)}(\lambda,x)$ satisfy all the conditions of the same problem for $n\mapsto n+1$ and $n\mapsto n-1$ respectively, so they are the unique solutions of those problems.

\paragraph{\it The case of $n$ odd.}
Now suppose that $n$ is odd and that for some $x>0$, Riemann-Hilbert Problem~\ref{rhp:algebraic} has a solution $\mathbf{W}^{(n)}(\lambda,x)$.
Again let $\mathbf{G}^{(n)\uparrow}(\lambda,x)$ and $\mathbf{G}^{(n)\downarrow}(\lambda,x)$ denote the Schlesinger transformations associated with the potentials $u=u_n(x)$, $\ee^{\pm\ii\varphi}=\ee^{\pm\ii\varphi_n(x)}$, $p=p_n(x)$, and $q=q_n(x)$ for $\epsilon =-1$.   As before, we just have to check that the transformed matrices $\mathbf{G}^{(n)\uparrow}(\lambda,x)\mathbf{W}^{(n)}(\lambda,x)$ and $\mathbf{G}^{(n)\downarrow}(\lambda,x)\mathbf{W}^{(n)}(\lambda,x)$ behave as required as $\lambda\to\infty$ and $\lambda\to 0$.  Combining \eqref{eq:potentials-from-M12} with \eqref{eq:M1nM2n} for $n$ odd gives
\begin{equation}
\begin{split}
u_n(x)=-\epsilon \ee^{5\pi\ii/6}xB_{0,11}^{(n)}(x)B_{0,21}^{(n)}(x), & \quad
\ee^{\pm\ii\varphi_n(x)}=-\epsilon\left[\frac{B^{(n)}_{0,11}(x)}{B^{(n)}_{0,21}(x)}\right]^{\pm 1},\\
p_n(x)=8\ee^{5\pi\ii/6}xA_{1,12}^{(n)}(x)B_{0,11}^{(n)}(x)B_{0,21}^{(n)}(x), & \quad
q_n(x)=-8\ee^{5\pi\ii/6}xA_{1,21}^{(n)}(x)B_{0,11}^{(n)}(x)B_{0,21}^{(n)}(x).
\end{split}
\end{equation}
Then using \eqref{eq:G-up-coeffs}--\eqref{eq:SchlesingerCoeffs2} gives
\begin{equation}
\mathbf{G}^{(n)\uparrow}(\lambda,x)=\left(\begin{bmatrix}1&0\\\epsilon A_{1,21}^{(n)}(x) & 1\end{bmatrix} +\left(\frac{\ii\lambda}{x}\right)^{-1}
\begin{bmatrix}0&\epsilon B_{0,11}^{(n)}(x)B_{0,21}^{(n)}(x)^{-1}\\0 & A_{1,21}^{(n)}(x)B_{0,11}^{(n)}(x)B_{0,21}^{(n)}(x)^{-1}\end{bmatrix}\right)\left(\frac{\ii\lambda}{x}\right)^{-\sigma_3/2}
\end{equation}
and
\begin{equation}
\mathbf{G}^{(n)\downarrow}(\lambda,x)=\left(\begin{bmatrix}1&\epsilon A_{1,12}^{(n)}(x)\\0 & 1\end{bmatrix}+\left(\frac{\ii\lambda}{x}\right)^{-1}
\begin{bmatrix}A_{1,12}^{(n)}(x)B_{0,21}^{(n)}(x)B_{0,11}^{(n)}(x)^{-1} & 0\\\epsilon B_{0,21}^{(n)}(x)B_{0,11}^{(n)}(x)^{-1} & 0\end{bmatrix}\right)\left(\frac{\ii\lambda}{x}\right)^{\sigma_3/2}.
\end{equation}
Since the Schlesinger transformations for $n$ odd agree with those for $n$ even in their leading terms for large $\lambda$, exactly the same calculations \eqref{eq:G-up-infinity-even}--\eqref{eq:G-down-infinity-even} apply for $n$ odd (although the Laudau symbols stand for different expressions in the even and odd cases).  Therefore using $\epsilon =-1$, $\mathbf{G}^{(n)\uparrow}(\lambda,x)\mathbf{W}^{(n)}(\lambda,x)$ and $\mathbf{G}^{(n)\downarrow}(\lambda,x)\mathbf{W}^{(n)}(\lambda,x)$ behave respectively as $\mathbf{W}^{(n+1)}(\lambda,x)$ and $\mathbf{W}^{(n-1)}(\lambda,x)$ are required to in the limit $\lambda\to\infty$ according to Riemann-Hilbert Problem~\ref{rhp:algebraic}.
Now using \eqref{eq:lambda-0-stronger} from Lemma~\ref{lem:expansions} for $n$ odd gives
\begin{multline}
\mathbf{G}^{(n)\uparrow}(\lambda,x)\mathbf{W}^{(n)}(\lambda,x)\ee^{-\ii x\lambda}\mathbf{E}\cdot\left(\frac{\ii\lambda}{x}\right)^{-\sigma_3/4}\\
\begin{aligned}
&=\left(\left(\frac{\ii\lambda}{x}\right)^{-1}
\begin{bmatrix}0&\epsilon B_{0,11}^{(n)}(x)B_{0,21}^{(n)}(x)^{-1}\\0 & A_{1,21}^{(n)}(x)B_{0,11}^{(n)}(x)B_{0,21}^{(n)}(x)^{-1}\end{bmatrix}+\begin{bmatrix}1&0\\\epsilon A_{1,21}^{(n)}(x) & 1\end{bmatrix}\right)\\
&\quad\quad\quad{}\cdot\left(\frac{\ii\lambda}{x}\right)^{-\sigma_3/2}\left(\mathbf{B}_0^{(n)}(x)+\left(\frac{\ii\lambda}{x}\right)\mathbf{B}_1^{(n)}(x) + \mathcal{O}(\lambda^{-2})\right)\left(\frac{\ii\lambda}{x}\right)^{-\sigma_3/2}\\
&=\left(\left(\frac{\ii\lambda}{x}\right)^{-1}
\begin{bmatrix}0&\epsilon B_{0,11}^{(n)}(x)B_{0,21}^{(n)}(x)^{-1}\\0 & A_{1,21}^{(n)}(x)B_{0,11}^{(n)}(x)B_{0,21}^{(n)}(x)^{-1}\end{bmatrix}+\begin{bmatrix}1&0\\\epsilon A_{1,21}^{(n)}(x) & 1\end{bmatrix}\right)\\
&\quad\quad\quad{}\cdot\left(\left(\frac{\ii\lambda}{x}\right)^{-1}\begin{bmatrix}B_{0,11}^{(n)}(x) & 0\\0 & 0\end{bmatrix} + 
\begin{bmatrix}B_{1,11}^{(n)}(x) & B_{0,12}^{(n)}(x)\\B_{0,21}^{(n)}(x) & 0\end{bmatrix} + \mathcal{O}(\lambda)\right)\\
&=\mathcal{O}(1),\quad\lambda\to 0,\quad\text{and}
\end{aligned}
\end{multline}
\begin{multline}
\mathbf{G}^{(n)\downarrow}(\lambda,x)\mathbf{W}^{(n)}(\lambda,x)\ee^{-\ii x\lambda}\mathbf{E}\cdot\left(\frac{\ii\lambda}{x}\right)^{-\sigma_3/4}\\
\begin{aligned}
&=\left(\left(\frac{\ii\lambda}{x}\right)^{-1}
\begin{bmatrix}A_{1,12}^{(n)}(x)B_{0,21}^{(n)}(x)B_{0,11}^{(n)}(x)^{-1} & 0\\\epsilon B_{0,21}^{(n)}(x)B_{0,11}^{(n)}(x)^{-1} & 0\end{bmatrix}+\begin{bmatrix}1&\epsilon A_{1,12}^{(n)}(x)\\0 & 1\end{bmatrix}\right)\\
&\quad\quad\quad{}\cdot\left(\frac{\ii\lambda}{x}\right)^{\sigma_3/2}\left(\mathbf{B}_0^{(n)}(x) + \left(\frac{\ii\lambda}{x}\right)\mathbf{B}_1^{(n)}(x) + \mathcal{O}(\lambda^2)\right)\left(\frac{\ii\lambda}{x}\right)^{-\sigma_3/2}\\
&=\left(\left(\frac{\ii\lambda}{x}\right)^{-1}
\begin{bmatrix}A_{1,12}^{(n)}(x)B_{0,21}^{(n)}(x)B_{0,11}^{(n)}(x)^{-1} & 0\\\epsilon B_{0,21}^{(n)}(x)B_{0,11}^{(n)}(x)^{-1} & 0\end{bmatrix}+\begin{bmatrix}1&\epsilon A_{1,12}^{(n)}(x)\\0 & 1\end{bmatrix}\right)\\
&\quad\quad\quad{}\cdot\left(\left(\frac{\ii\lambda}{x}\right)^{-1}\begin{bmatrix}0 & 0\\B_{0,21}^{(n)}(x) & 0\end{bmatrix}+\begin{bmatrix}B_{0,11}^{(n)}(x) & 0\\
B_{1,21}^{(n)}(x) & B_{0,22}^{(n)}(x)\end{bmatrix}+\mathcal{O}(\lambda)\right)\\
&=\mathcal{O}(1),\quad\lambda\to 0,
\end{aligned}
\end{multline}
using $\epsilon=-1$.  So, $\mathbf{G}^{(n)\uparrow}(\lambda,x)\mathbf{W}^{(n)}(\lambda,x)$ and $\mathbf{G}^{(n)\downarrow}(\lambda,x)\mathbf{W}^{(n)}(\lambda,x)$ respectively behave the same as $\lambda\to 0$ as $\mathbf{W}^{(n+1)}(\lambda,x)$ and $\mathbf{W}^{(n-1)}(\lambda,x)$.  Along with the behavior as $\lambda\to\infty$ and the analyticity and jump properties, we conclude that $\mathbf{W}^{(n+1)}(\lambda,x):=\mathbf{G}^{(n)\uparrow}(\lambda,x)\mathbf{W}^{(n)}(\lambda,x)$ and $\mathbf{W}^{(n-1)}(\lambda,x):=\mathbf{G}^{(n)\downarrow}(\lambda,x)\mathbf{W}^{(n)}(\lambda,x)$ are the unique solutions of Riemann-Hilbert Problem~\ref{rhp:algebraic} for $n\mapsto n+1$ and for $n\mapsto n-1$ respectively.

\subsubsection{B\"acklund transformations}
Since for any $n\in\mathbb{Z}$, $\mathbf{J}^{(n+1)}(x)=\mathbf{J}^{(n)\uparrow}(x)$, $\mathbf{K}^{(n+1)}(x)=\mathbf{K}^{(n)\uparrow}(x)$, $\mathbf{J}^{(n-1)}(x)=\mathbf{J}^{(n)\downarrow}(x)$, and $\mathbf{K}^{(n-1)}(x)=\mathbf{K}^{(n)\downarrow}(x)$ we derive from \eqref{eq:potentials-from-M12} with $\epsilon =-1$ and \eqref{eq:M1up}--\eqref{eq:M2up} the explicit B\"acklund transformations
\begin{equation}
\begin{split}
u_{n+1}(x)&=-\frac{\ii b_n x^2\ee^{\ii\varphi_n(x)}q_n(x)}{8u_n(x)^2},\quad
\ee^{\pm\ii\varphi_{n+1}(x)}=\left[\frac{8u_n(x)}{q_n(x)}\right]^{\pm 1},\\
p_{n+1}(x)&=\frac{\ii b_n x^2\ee^{2\ii\varphi_n(x)}q_n(x)}{u_n(x)^2},\\
q_{n+1}(x)&=\frac{\ii b_nx\ee^{\ii\varphi_n(x)}q_n(x)}{2u_n(x)^2}\left[\frac{x\ee^{\ii\varphi_n(x)}q_n(x)^2}{32u_n(x)^2}+\frac{(n+1)q_n(x)}{8xu_n(x)}-\frac{\ee^{-\ii\varphi_n(x)}u_n(x)}{2x^2}\right],
\end{split}
\label{eq:Baecklund-up}
\end{equation}
and using \eqref{eq:M1down}--\eqref{eq:M2down} instead of \eqref{eq:M1up}--\eqref{eq:M2up} we get
\begin{equation}
\begin{split}
u_{n-1}(x)&= \frac{\ii b_n x^2\ee^{-\ii\varphi_n(x)}p_n(x)}{8u_n(x)^2},\quad
\ee^{\pm\ii\varphi_{n-1}(x)}= -\left[\frac{p_n(x)}{8u_n(x)}\right]^{\pm 1},\\
p_{n-1}(x)&= \frac{\ii b_n x \ee^{-\ii\varphi_n(x)}p_n(x)}{2u_n(x)^2}\left[\frac{x\ee^{-\ii\varphi_n(x)}p_n(x)^2}{32u_n(x)^2}-\frac{(n-1)p_n(x)}{8xu_n(x)}-\frac{\ee^{\ii\varphi_n(x)}u_n(x)}{2x^2}\right],\\
q_{n-1}(x)&=\frac{\ii b_n x^2\ee^{-2\ii\varphi_n(x)}p_n(x)}{u_n(x)^2}.
\end{split}
\label{eq:Baecklund-down}
\end{equation}
In these expressions, $b_n$ is the constant given, for any $n\in\mathbb{Z}$, by
\begin{equation}
b_n=\frac{2\ii n u_n(x)}{x}-\frac{1}{2}\ii x p_n(x)\ee^{-\ii\varphi_n(x)} +\frac{1}{2}\ii x q_n(x)\ee^{\ii\varphi_n(x)}.
\end{equation}
It then follows from \eqref{eq:Baecklund-up} that $b_{n+1}=b_n$ and from \eqref{eq:Baecklund-down} that $b_{n-1}=b_n$ as well.  Since $u_0(x)=\tfrac{1}{2}x^{1/3}$ satisfies \eqref{eq:D7KV} with $\epsilon=-1$, $a=-\ii n$ for $n=0$ and $b=b_0=\ii$, it is then clear that the function $u_n(x)$ extracted from Riemann-Hilbert Problem~\ref{rhp:algebraic} using \eqref{eq:un-formula-Y} satisfies \eqref{eq:D7KV} with $\epsilon=-1$, $a=-\ii n$, and $b=b_n=\ii$ for all $n\in\mathbb{Z}$.

Now, when $n=0$, $u_0(x)$ is obviously a rational function of $x^{1/3}$, while from \eqref{eq:pq0} we see that $\ee^{\ii\varphi_0(x)}\ee^{3x^{2/3}}$, $\ee^{-\ii\varphi_0(x)}\ee^{-3x^{2/3}}$, $p_0(x)\ee^{3x^{2/3}}$, and $q_0(x)\ee^{-3x^{2/3}}$ are rational in $x^{1/3}$ as well.  It follows inductively from \eqref{eq:Baecklund-up} and \eqref{eq:Baecklund-down} that for all $n\in\mathbb{Z}$,  $u_n(x)$, $\ee^{\ii\varphi_n(x)}\ee^{3x^{2/3}}$, $\ee^{-\ii\varphi_n(x)}\ee^{-3x^{2/3}}$, $p_n(x)\ee^{3x^{2/3}}$, and $q_n(x)\ee^{-3x^{2/3}}$ are rational functions of $x^{1/3}$.

We summarize the results of Sections~\ref{sec:diff-eqs}--\ref{sec:Schlesinger} in the following theorem.
\begin{theorem}
Suppose that Riemann-Hilbert Problem~\ref{rhp:algebraic} is solvable for given $n\in\mathbb{Z}$ and $x>0$.  Then the function $u_n(x)$ defined by \eqref{eq:un-formula-Y} is the unique solution of \eqref{eq:D7KV} with $\epsilon=-1$, $a=-\ii n$, and $b=\ii$ that is a rational function of $x^{1/3}$.
\label{thm:RHP-representation}
\end{theorem}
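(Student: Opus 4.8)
The plan is to assemble the results of Sections~\ref{sec:diff-eqs} and \ref{sec:Schlesinger} and close with the elementary uniqueness observation for algebraic solutions recalled in the Introduction. Fix $x>0$ and $n\in\mathbb{Z}$ for which Riemann-Hilbert Problem~\ref{rhp:algebraic} is solvable. By Lemma~\ref{lem:uniqueness} the solution $\mathbf{W}^{(n)}(\lambda,x)$ is unique with unit determinant, and by Lemma~\ref{lem:expansions} it carries the complete expansions \eqref{eq:lambda-infinity-stronger} and \eqref{eq:lambda-0-stronger}; hence the coefficients $\mathbf{A}_1^{(n)}(x)$ and $\mathbf{B}_0^{(n)}(x)$ exist and $u_n(x)$ defined by \eqref{eq:un-formula-Y} is a finite, well-defined number. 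The compatibility analysis of Section~\ref{sec:diff-eqs} then shows that $\boldsymbol{\Psi}^{(n)}(\lambda,x)$ solves a Lax pair of exactly the form \eqref{eq:Lambda-X-matrices-rewrite} with $\mathbf{J}^{(n)}(x)$ off-diagonal and $\mathbf{K}^{(n)}(x)$ a nonzero nilpotent matrix; reading off $u=u_n(x)$, $\ee^{\pm\ii\varphi}=\ee^{\pm\ii\varphi_n(x)}$, $p=p_n(x)$, $q=q_n(x)$ via \eqref{eq:potentials-from-M12} with $\epsilon=-1$ and performing the elimination \eqref{eq:pq}--\eqref{eq:phiprime} shows that $u_n(x)$ solves \eqref{eq:D7KV} with $\epsilon=-1$ and $a=-\ii n$, the integration constant being the number $b_n$ of \eqref{eq:b-no-derivatives}. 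It remains to show $b_n=\ii$, that $u_n$ is rational in $x^{1/3}$, and then to invoke uniqueness.

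For this I would propagate from the seed. By Lemma~\ref{lem:seed-solve}, Riemann-Hilbert Problem~\ref{rhp:algebraic} with $n=0$ is solvable for \emph{every} $x>0$, with $u_0(x)=\tfrac{1}{2}x^{1/3}$, with $b_0=\ii$, and --- from \eqref{eq:pq0} --- with $\ee^{\pm\ii\varphi_0}\ee^{\pm3x^{2/3}}$, $p_0\ee^{3x^{2/3}}$, and $q_0\ee^{-3x^{2/3}}$ all rational in $x^{1/3}$. The Schlesinger argument of Section~\ref{sec:Schlesinger} shows that, whenever the problem is solvable at an integer $m$ and the relevant coefficients \eqref{eq:SchlesingerCoeffs1}--\eqref{eq:SchlesingerCoeffs2} are finite, the gauge matrices $\mathbf{G}^{(m)\uparrow}$ and $\mathbf{G}^{(m)\downarrow}$ produce solutions of the problem for $m+1$ and $m-1$, and, through the identities $\mathbf{J}^{(m+1)}=\mathbf{J}^{(m)\uparrow}$, $\mathbf{K}^{(m+1)}=\mathbf{K}^{(m)\uparrow}$ and their downward analogues, the induced action on potentials is precisely the B\"acklund transformation \eqref{eq:Baecklund-up}--\eqref{eq:Baecklund-down}. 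From those formulas one reads off $b_{m\pm1}=b_m$ and, after multiplying each identity through by the appropriate power of $\ee^{3x^{2/3}}$ so that the exponentials organize into the combinations $\ee^{\pm\ii\varphi_m}\ee^{\pm3x^{2/3}}$, $p_m\ee^{3x^{2/3}}$, $q_m\ee^{-3x^{2/3}}$, that the property of being rational in $x^{1/3}$ is preserved. Since the only denominators appearing in \eqref{eq:SchlesingerCoeffs1}--\eqref{eq:SchlesingerCoeffs2} (and in \eqref{eq:G-up-coeffs}--\eqref{eq:G-down-coeffs}) are powers of $x>0$ and of $u_m(x)$, ascending or descending the ladder from $m=0$ to $m=n$ along any $x$ at which none of the finitely many intervening $u_m$ vanishes yields $b_n=\ii$ and $u_n$ rational in $x^{1/3}$; these $x$ form a dense open subset $\mathcal{X}_n$ of $(0,\infty)$.

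To reach the prescribed value of $x$ I would argue by continuation. The set of $x>0$ for which the problem at $n$ is solvable is open (it corresponds to invertibility of a Fredholm operator depending analytically on $x$, as in Lemma~\ref{lem:expansions}) and on it the extracted potentials $u_n(x),\varphi_n(x),p_n(x),q_n(x)$ depend analytically on $x$ and satisfy the system \eqref{eq:first}--\eqref{eq:q-explicit}; hence $b_n$, being its integration constant, is locally constant there and equals $\ii$ on the dense subset $\mathcal{X}_n$, so $b_n=\ii$ everywhere the problem is solvable, in particular at the given $x$. Likewise $u_n(x)$ extracted from $\mathbf{W}^{(n)}(\lambda,x)$ is analytic in $x$ wherever the problem is solvable and coincides on $\mathcal{X}_n$ with the rational-in-$x^{1/3}$ function built by the ladder, so the two agree at the given $x$ as well. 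Thus $u_n(x)$ solves \eqref{eq:D7KV} with $\epsilon=-1$, $a=-\ii n$, $b=\ii$ and is rational in $x^{1/3}$. Finally, uniqueness: substituting $u=\tfrac{1}{2}x^{1/3}v(x^{1/3})$ with $v(\zeta)=1+v_1\zeta^{-1}+v_2\zeta^{-2}+\cdots$ into \eqref{eq:D7} with $\alpha=8$, $\beta=2n$, $\delta=-1$ produces a recurrence for the $v_j$ that never requires division by zero, so the coefficients, hence the rational-in-$x^{1/3}$ solution, are uniquely determined by $n$; therefore $u_n(x)$ is \emph{the} one.

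The step I expect to be the main obstacle is the transition in the middle: verifying that solvability of the problem at the prescribed $(n,x)$ genuinely permits traversal of the Schlesinger ladder down to $n=0$ --- a priori some $u_m(x)$ could vanish, degenerating an intervening gauge matrix --- and then promoting the constancy of $b_n$ and the rationality of $u_n$ from the dense good set $\mathcal{X}_n$ to the prescribed $x$. Equivalently, one must rule out that \eqref{eq:un-formula-Y} returns a solution of \eqref{eq:D7KV} with the correct $a=-\ii n$ but a spurious integration constant $b\neq\ii$; everything else is routine bookkeeping with the formulas already established in Sections~\ref{sec:diff-eqs}--\ref{sec:Schlesinger}.
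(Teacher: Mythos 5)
Your proposal follows essentially the same route as the paper: the analysis of Section~\ref{sec:diff-eqs} shows the extracted $u_n$ solves \eqref{eq:D7KV} with $a=-\ii n$, the Schlesinger/B\"acklund ladder of Section~\ref{sec:Schlesinger} anchored at the Airy seed $n=0$ forces $b_n=\ii$ and rationality in $x^{1/3}$, and uniqueness comes from the formal recurrence for the coefficients of $v(\zeta)$ recalled in the introduction. The density/analytic-continuation step you add to handle possible degeneracies of the ladder at exceptional values of $x$ is a sensible refinement of a point the paper passes over silently, not a departure from its argument.
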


\subsection{Change of branch cut}
\label{sec:branch-cut}
Although the principal branch of $(\ii\lambda)^p$ for various powers $p$ (cut on the positive imaginary axis) is most convenient for describing the canonical solutions in different Stokes sectors at $\lambda=0$ and $\lambda=\infty$, for subsequent asymptotic analysis in the limit of large $n$ it will be better to reformulate a version of Riemann-Hilbert Problem~\ref{rhp:algebraic} involving power functions with branch cuts on the negative imaginary axis instead.  To this end, we start with $\mathbf{W}^{(n)}(\lambda,x)$ solving Riemann-Hilbert Problem~\ref{rhp:algebraic} and, letting $D_0^\pm\subset D_0$ denote the region between $C^\pm$ and the imaginary axis, we set
\begin{equation}
\mathbf{Y}^{(n)}(\lambda,x):=\ii^{n\sigma_3}\mathbf{W}^{(n)}(\lambda,x)\ee^{-\ii (x\lambda-x(\ii x\lambda)^{-1/2})\sigma_3}
\begin{cases}
\ee^{(\ii x\lambda-x(-\ii x\lambda)^{-1/2})\sigma_3},&\quad\lambda\in D_\infty^+,\\
(-1)^n\ee^{(\ii x\lambda-x(-\ii x\lambda)^{-1/2})\sigma_3},&\quad\lambda\in D_\infty^-,\\
\ee^{(\ii x\lambda-x(-\ii x\lambda)^{-1/2})\sigma_3}\ee^{-5\pi\ii\sigma_3/6},&\quad\lambda\in D_0^+,\\
(-1)^n(-\ii\sigma_2)\ee^{(\ii x\lambda-x(-\ii x\lambda)^{-1/2})\sigma_3}\ee^{\ii\pi\sigma_3/3},&\quad\lambda\in D_0^-.
\end{cases}
\label{eq:W-Y}
\end{equation}
Letting $\Sigma_0^-$ denote the segment of the imaginary axis between $\lambda=0$ and $\lambda=-\ii$, oriented downwards, 
\begin{equation}
\mathbf{Y}^{(n)}_+(\lambda,x)=\mathbf{Y}^{(n)}_-(\lambda,x) \ee^{-(\ii x\lambda-x(-\ii x\lambda)^{-1/2})\sigma_3}\begin{bmatrix}1&0\\\ii & 1\end{bmatrix}
\ee^{(\ii x\lambda-x(-\ii x\lambda)^{-1/2})\sigma_3},\quad\lambda\in\Sigma_\infty^+,
\end{equation}
\begin{equation}
\mathbf{Y}^{(n)}_+(\lambda,x)=\mathbf{Y}^{(n)}_-(\lambda,x)\ee^{-(\ii x\lambda -x(-\ii x\lambda_-)^{-1/2})\sigma_3}(-1)^n\begin{bmatrix}1&-\ii\\0&1\end{bmatrix}\ee^{(\ii x\lambda-x(-\ii x\lambda_+)^{-1/2})\sigma_3},\quad\lambda\in\Sigma_\infty^-,
\end{equation}
\begin{equation}
\mathbf{Y}^{(n)}_+(\lambda,x)=\mathbf{Y}^{(n)}_-(\lambda,x)\ee^{-(\ii x\lambda-x(-\ii x\lambda)^{-1/2})\sigma_3}\begin{bmatrix}1 & 0\\\ii & 1 \end{bmatrix} \ee^{(\ii x\lambda-x(-\ii x\lambda)^{-1/2})\sigma_3},\quad\lambda\in\Sigma_0^+,
\end{equation}
\begin{equation}
\begin{split}
\mathbf{Y}^{(n)}_+(\lambda,x)&=\mathbf{Y}^{(n)}_-(\lambda,x)\ee^{-(\ii x\lambda-x(-\ii x\lambda_-)^{-1/2})\sigma_3}(-1)^n\ii\sigma_1
\ee^{(\ii x\lambda-x(-\ii x\lambda_+)^{-1/2})\sigma_3}\\
&=\mathbf{Y}^{(n)}(\lambda,x)\ee^{-\ii x\lambda\sigma_3}(-1)^n\ii\sigma_1\ee^{\ii x\lambda\sigma_3},\quad\lambda\in\Sigma_0^-,
\end{split}
\end{equation}
\begin{equation}
\mathbf{Y}^{(n)}_+(\lambda,x)=\mathbf{Y}^{(n)}_-(\lambda,x)\ee^{-(\ii x\lambda -x(-\ii x\lambda)^{-1/2})\sigma_3}\begin{bmatrix}1 & 0\\
-\ii & 1\end{bmatrix}\ee^{(\ii x\lambda-x(-\ii x\lambda)^{-1/2})\sigma_3},\quad\lambda\in C^+,
\end{equation}
\begin{equation}
\mathbf{Y}^{(n)}_+(\lambda,x)=\mathbf{Y}^{(n)}_-(\lambda,x)\ee^{-(\ii x\lambda-x(-\ii x\lambda)^{-1/2})\sigma_3}\begin{bmatrix}1 & 0\\-\ii & 1\end{bmatrix}\ee^{(\ii x\lambda-x(-\ii x\lambda)^{-1/2})\sigma_3},\quad\lambda\in C^-.
\end{equation}
Using the identity relating principal branches:
\begin{equation}
(\ii\lambda)^p = \ee^{\pm\ii\pi p}(-\ii\lambda)^p,\quad \pm\mathrm{Im}(\ii\lambda)>0,
\end{equation}
we see from \eqref{eq:lambda-infinity-stronger} that regardless of whether $\lambda\to\infty$ from $D_\infty^+$ or $D_\infty^-$, 
\begin{equation}
\mathbf{Y}^{(n)}(\lambda,x)\ee^{x(-\ii x\lambda)^{-1/2}\sigma_3}\left(-\frac{\ii\lambda}{x}\right)^{n\sigma_3/2}\sim\mathbb{I}+\sum_{p=1}^\infty
\left(-\frac{\ii\lambda}{x}\right)^{-p}(-1)^p\ii^{n\sigma_3}\mathbf{A}_p^{(n)}(x)\ii^{-n\sigma_3},\quad\lambda\to\infty.
\end{equation}
Similarly, using \eqref{eq:lambda-0-stronger}, regardless of whether $\lambda\to 0$ from $D_0^+$ or $D_0^-$,
\begin{equation}
\mathbf{Y}^{(n)}(\lambda,x)\ee^{-\ii x\lambda\sigma_3}\widetilde{\mathbf{E}}\cdot\left(-\frac{\ii\lambda}{x}\right)^{-(-1)^n\sigma_3/4}\sim\sum_{p=0}^\infty
\left(-\frac{\ii\lambda}{x}\right)^p(-1)^p\ii^{n\sigma_3}\mathbf{B}_p^{(n)}(x)\ee^{\ii\pi(-1)^n\sigma_3/4},\quad\lambda\to 0,\quad\text{where}
\label{eq:Yn-zero-expansion}
\end{equation}
\begin{equation}
\widetilde{\mathbf{E}}:=\ee^{5\pi\ii\sigma_3/6}\mathbf{E}=\ee^{-\ii\pi\sigma_3/3}\ii\sigma_2\mathbf{E}\ii^{\sigma_3}=\frac{1}{\sqrt{2}}\begin{bmatrix}\ee^{5\pi\ii/6} & \ee^{-5\pi\ii/6}\\ \ee^{-\ii\pi/6} & \ee^{-5\pi\ii/6}\end{bmatrix}.
\end{equation}

Along with the analyticity of $\mathbf{Y}^{(n)}(\lambda,x)$ for $\lambda\in\mathbb{C}\setminus(\Sigma_\infty^+\cup\Sigma_\infty^-\cup\Sigma_0^+\cup\Sigma_0^-\cup C^+\cup C^-)$ and continuity of boundary values implied by the substitution \eqref{eq:W-Y}, these conditions amount to an equivalent Riemann-Hilbert problem for the matrix $\mathbf{Y}^{(n)}(\lambda,x)$.

\section{Application:  asymptotic analysis of the algebraic solution $u_n(x)$ for large $n$}
\label{sec:application}
\subsection{Rescaling}
\label{sec:rescaling}
Note that $u\mapsto \ii u$ and $x\mapsto \ii x$ takes a solution of \eqref{eq:D7KV} to a solution of the same equation for the same values of $b$ and $\epsilon$, but with $a\mapsto -a$.  Hence it is sufficient to assume that $n$ is a large positive integer.  We initially scale the variables in Riemann-Hilbert Problem~\ref{rhp:algebraic} with $n>0$ as
\begin{equation}
x=n^{3/2}y\quad\text{and}\quad \lambda=n^{-1/2}\nu.
\end{equation}
The two terms in the exponent in the jump matrices then balance with the exponent in the normalization condition as $\lambda\to\infty$, which is proportional to $n$:
\begin{equation}
\ii x\lambda -x(-\ii x\lambda)^{-1/2}=n(\ii y\nu-y(-\ii y\nu)^{-1/2}).
\end{equation}
It is convenient to further scale the spectral parameter by the rescaled parameter $y$, to make the dominant term near $\infty$ independent of $y$.  Thus for $y>0$, we set $\nu=y^{-1}\eta$ and obtain 
\begin{equation}
\ii x\lambda-x(-\ii x\lambda)^{-1/2}=n\Phi(\eta,y),\quad\Phi(\eta,y):=\ii\eta-y(-\ii\eta)^{-1/2}.
\label{eq:Phi-define}
\end{equation}
The quantity $\ii\lambda/x$ appearing in the normalization condition then reads
\begin{equation}
\frac{\ii\lambda}{x} = \frac{1}{n^2}\frac{\ii\nu}{y} = \frac{1}{n^2y^2}\ii\eta,\quad\text{so}\quad
\left(-\frac{\ii\lambda}{x}\right)^{n\sigma_3/2} = (ny)^{-n\sigma_3}(-\ii\eta)^{n\sigma_3/2}.
\end{equation}

We define a new unknown  that is a function of the variables $y,\eta$ instead of $x,\lambda$ by setting
\begin{equation}
\mathbf{Z}^{(n)}(\eta,y):=(ny)^{-n\sigma_3}\mathbf{Y}^{(n)}(n^{3/2}y,n^{-1/2}y^{-1}\eta).
\end{equation}
Then, after an unimportant rescaling of the jump contour for $\mathbf{Y}^{(n)}(\lambda,x)$ to fix it in the $\eta$-plane, $\mathbf{Z}^{(n)}(\eta,y)$ satisfies the following conditions:
\begin{itemize}
\item $\mathbf{Z}^{(n)}(\eta,y)$ is analytic for $\eta\in\mathbb{C}\setminus (\Sigma_\infty^+\cup\Sigma_\infty^-\cup\Sigma_0^+\cup\Sigma_0^-\cup C^+\cup C^-)$.
\item $\mathbf{Z}^{(n)}(\eta,y)$ takes continuous boundary values on the jump contour related by the jump conditions 
\begin{equation}
\mathbf{Z}^{(n)}_+(\eta,y)=\mathbf{Z}^{(n)}_-(\eta,y)\ee^{-n\Phi(\eta,y)\sigma_3}\begin{bmatrix}1&0\\\ii&1\end{bmatrix}\ee^{n\Phi(\eta,y)\sigma_3},\quad\eta\in\Sigma_\infty^+\cup\Sigma_0^+,
\label{eq:Z-jump-first}
\end{equation}
\begin{equation}
\mathbf{Z}_+^{(n)}(\eta,y)=\mathbf{Z}^{(n)}_-(\eta,y)\ee^{-n\Phi(\eta,y)\sigma_3}\begin{bmatrix}1&0\\-\ii & 1\end{bmatrix}\ee^{n\Phi(\eta,y)\sigma_3},\quad\eta\in C^+\cup C^-,
\end{equation}
\begin{equation}
\mathbf{Z}_+^{(n)}(\eta,y)=\mathbf{Z}^{(n)}_-(\eta,y)\ee^{-n\Phi_-(\eta,y)\sigma_3}(-1)^n\begin{bmatrix}1& -\ii\\0 & 1\end{bmatrix}\ee^{n\Phi_+(\eta,y)\sigma_3},\quad\eta\in\Sigma_\infty^-,\quad\text{and}
\end{equation}
\begin{equation}
\mathbf{Z}_+^{(n)}(\eta,y)=\mathbf{Z}^{(n)}_-(\eta,y)\ee^{-n\Phi_-(\eta,y)\sigma_3}(-1)^n\ii\sigma_1\ee^{n\Phi_+(\eta,y)\sigma_3},\quad\eta\in\Sigma_0^-.
\label{eq:Z-jump-last}
\end{equation}
\item $\mathbf{Z}^{(n)}(\eta,y)(-\ii\eta)^{n\sigma_3/2}\to\mathbb{I}$ as $\eta\to\infty$.
\item The limit of $\mathbf{Z}^{(n)}(\eta,y)\ee^{-\ii n\eta\sigma_3}\widetilde{\mathbf{E}}\cdot(-\ii\eta)^{-(-1)^n\sigma_3/4}$ as $\eta\to 0$ exists.  In terms of the leading coefficient in the expansion \eqref{eq:lambda-0-stronger} (and see also \eqref{eq:Yn-zero-expansion}), 
\begin{equation}
\lim_{\eta\to 0}\mathbf{Z}^{(n)}(\eta,y)\ee^{-\ii n\eta\sigma_3}\widetilde{\mathbf{E}}\cdot(-\ii\eta)^{-(-1)^n\sigma_3/4} = (-\ii ny)^{-n\sigma_3}\mathbf{B}_0^{(n)}(n^{3/2}y)(-\ii n y)^{-(-1)^n\sigma_3/2}.  
\label{eq:Z-origin-limit}
\end{equation}
This may be regarded as the definition of $\mathbf{B}_0^{(n)}(n^{3/2}y)$ in terms of $\mathbf{Z}^{(n)}(\eta,y)$.
\end{itemize}
\subsection{Aside:  scaling formalism}
We perturb the basic scaling $y=n^{-3/2}x$ designed to balance the terms in the exponents in Riemann-Hilbert Problem~\ref{rhp:algebraic} by writing $x=n^{3/2}(y+n^{-p}z)$ for some $p>0$ to be determined.  Then we also scale $u=n^qU$ and consider what the Painlev\'e-III (D7) equation \eqref{eq:D7KV} on $u(x)$ with $\epsilon=-1$, $a=-\ii n$, and $b=\ii$ implies for $U=U(z)$ when $y\neq 0$ is a fixed parameter:
\begin{equation}
n^{q-3+2p}U''(z) = n^{q-3+2p}\frac{U'(z)^2}{U(z)}-n^{q-3+p}\frac{U'(z)}{y+n^{-p}z} + n^{2q-3/2}\frac{8U(z)^2}{y+n^{-p}z} + n^{-1/2}\frac{2}{y+n^{-p}z} - n^{-q}\frac{1}{U(z)}.
\label{eq:D7KV-rescaled}
\end{equation}
Because $p>0$, the second term on the right-hand side is negligible compared with the left-hand side and the preceding term.  All remaining terms can be balanced with these by choosing $p=1$ and $q=\frac{1}{2}$.
Then \eqref{eq:D7KV-rescaled} becomes
\begin{equation}
U''(z)=\frac{U'(z)^2}{U(z)} + \frac{8}{y}U(z)^2+\frac{2}{y} -\frac{1}{U(z)} + \mathcal{O}(n^{-1}).
\end{equation}
The \emph{approximating equation} is obtained by dropping the formally-small $\mathcal{O}(n^{-1})$ error term.  It is an autonomous nonlinear second order equation on $U(z)$ with parameter $y\neq 0$.  It turns out that under these scalings, the algebraic solutions $u_n(x)$ of \eqref{eq:D7KV} behave for large $n$ like one or the other of two types of solutions of the approximating equation, depending on the value of the parameter $y$.  Although the motivation comes from the simple scaling formalism above, these are correspondences that are proved using techniques of analysis of Riemann-Hilbert problems.

\subsubsection{Equilibrium solutions}
If $U$ is independent of $z$, the approximating equation yields equilibrium solutions solving the cubic equation
\begin{equation}
8U^3+2U-y=0.
\label{eq:U-equilibrium}
\end{equation}
For large $y$, the solutions are $U\approx \frac{1}{2}y^{1/3}$ and $U\approx\frac{1}{2}\ee^{\pm 2\pi\ii/3}y^{1/3}$.  Reversing the scalings by $U=n^{-1/2}u$ and $y=n^{-3/2}x$ (neglecting $n^{-1}z$) the large-$y$ equilibrium solution $U\approx\frac{1}{2}y^{1/3}$ reads $u\approx \frac{1}{2}x^{1/3}$, which is exactly the seed solution for $n=0$.
The three distinct solutions for large $y$ can coalesce at branch points that can be found by equating to zero the discriminant
of the cubic \eqref{eq:U-equilibrium} with respect to $U$, which is $-64(27y^2+4)$.

\subsubsection{Non-equilibrium solutions}
Multiplying the approximating equation through by $U'(z)/U(z)^2$ we obtain
\begin{equation}
\frac{U'(z)U''(z)}{U(z)^2}-\frac{U'(z)^3}{U(z)^3}-\frac{8}{y}U'(z) - \frac{2}{y}\frac{U'(z)}{U(z)^2}+\frac{U'(z)}{U(z)^3}=0
\end{equation}
or, equivalently,
\begin{equation}
\frac{\dd}{\dd z}\left[\frac{U'(z)^2}{2U(z)^2}-\frac{8}{y}U(z)+\frac{2}{y}\frac{1}{U(z)}-\frac{1}{2}\frac{1}{U(z)^2}\right]=0.
\end{equation}
Letting $E$ denote the implied integration constant, 
\begin{equation}
U'(z)^2 = \frac{16}{y}U(z)^3+2EU(z)^2-\frac{4}{y}U(z)+1.
\label{eq:nonequilibrium-elliptic}
\end{equation}
Setting
\begin{equation}
U(z)=\frac{1}{4}yw(z)-\frac{1}{24}yE,
\end{equation}
$w(z)$ solves the Weierstra\ss\ equation $w'(z)^2=4w(z)^3-g_2w(z)-g_3$ (see \cite[Ch.\@ III.5]{Markushevich05} or \cite[Ch.\@ 23]{DLMF}) with invariants
\begin{equation}
g_2=\frac{16}{y^2}+\frac{E^2}{3};\quad g_3=-\frac{16}{y^2}-\frac{8E}{3y^2} -\frac{E^3}{27}.
\end{equation}
Therefore, the general solution of the approximating equation can be written in terms of the Weierstra\ss\ elliptic function $\wp(z)$ with these invariants as
\begin{equation}
U(z)=\frac{1}{4}y\wp(z-z_0)-\frac{1}{24}yE.
\label{eq:non-equilibrium-approx}
\end{equation}
The fact that all poles of $\wp(z)$ are double is consistent with the fact that all nonzero poles of solutions of \eqref{eq:D7KV} are also double.

\subsection{$g$-function and spectral curves}
\label{sec:SpectralCurves}
Returning to the large-$n$ analysis of $\mathbf{Z}^{(n)}(\eta,y)$, we now introduce a $g$-function $\eta\mapsto g(\eta,y)$ with parameter $y$ with the following properties:
\begin{itemize}
\item $g$ is analytic for $\eta\in\mathbb{C}\setminus(\Sigma^+_\infty\cup\Sigma^-_\infty\cup\Sigma_0\cup C^+\cup C^-)$ and is bounded for bounded $\eta$.
\item $g$ takes continuous boundary values on the jump contour with the property that 
\begin{equation}
\eta\mapsto F(\eta,y):=\left(\frac{\partial g}{\partial\eta}(\eta,y)-\frac{\partial\Phi}{\partial\eta}(\eta,y)\right)^2
\label{eq:f-define}
\end{equation}
is continuous except at $\eta=0$.
\item There is a quantity $g_0(y)$ such that $g(\eta,y)=-\tfrac{1}{2}\log(-\ii\eta) + g_0(y) + o(1)$ as $\eta\to\infty$.
\item $g$ is independent of $n$.
\end{itemize}
Then we use such a function $g(\eta,y)$ to define a new unknown by 
\begin{equation}
\mathbf{M}^{(n)}(\eta,y):=\ee^{ng_0(y)\sigma_3}\mathbf{Z}^{(n)}(\eta,y)\ee^{- n g(\eta,y)\sigma_3}.
\label{eq:M-Z}
\end{equation}
Then $\eta\mapsto \mathbf{M}^{(n)}(\eta,y)$ is analytic where $\mathbf{Z}^{(n)}(\eta,y)$ is, satisfies the simplified normalization condition that $\mathbf{M}^{(n)}(\eta,y)\to\mathbb{I}$ as $\eta\to\infty$, has the property that  $\mathbf{M}^{(n)}(\eta,y)\widetilde{\mathbf{E}}^\pm(-\ii\eta)^{-(-1)^n\sigma_3/4}$ has a common limit as $\eta\to 0$ from $D_0^\pm$, and has jump conditions explicitly related to those satisfied by $\mathbf{Z}^{(n)}(\eta,y)$ that involve the boundary values of $g$.  We will explain these jump conditions later.

But first, we consider the function $F(\eta,y)$ defined by \eqref{eq:f-define}.  Obviously, this function is not only continuous for $\eta\neq 0$, but it is also analytic for $\eta\in\mathbb{C}\setminus\{0\}$.  It is therefore determined by its asymptotic behavior as $\eta\to 0$ and $\eta\to\infty$.  Since then $\partial g/\partial\eta$ has to have Laurent expansions in both limits in powers of $(-\ii\eta)^{1/2}$ for $F$ to be analytic, we interpret the conditions on $g(\eta,y)$ near $\eta=0,\infty$ in terms of $\partial g/\partial \eta$ as follows:
\begin{equation}
\frac{\partial g}{\partial\eta}(\eta,y) =\begin{cases} -\frac{1}{2\eta} + a_1(-\ii\eta)^{-3/2} + a_2(-\ii\eta)^{-2} + \cdots,&\quad \eta\to\infty\\
b_1(-\ii\eta)^{-1/2} + b_2 + \cdots,&\quad\eta\to 0.
\end{cases}
\end{equation}
It then follows from \eqref{eq:Phi-define} and \eqref{eq:f-define} that 
\begin{equation}
F(\eta,y)=\begin{cases}
\left(-\ii -\frac{1}{2\eta} + \mathcal{O}(\eta^{-3/2})\right)^2 = -1 +\frac{\ii}{\eta} + \mathcal{O}(\eta^{-2}),&\quad\eta\to\infty,\\
\left(\frac{ 1}{2}\ii y(-\ii\eta)^{-3/2} + \mathcal{O}(\eta^{-1/2})\right)^2 = -\frac{y^2}{4}(-\ii\eta)^{-3} + \mathcal{O}(\eta^{-2}),&\quad\eta\to 0.
\end{cases}
\end{equation}
Therefore, $F(\eta,y)$ is a Laurent polynomial of the form
\begin{equation}
F(\eta,y)= -1+\frac{1}{\mu}+\frac{c}{\mu^2} -\frac{y^2}{4\mu^3},\quad\mu:=-\ii\eta,
\end{equation}
involving an undetermined coefficient $c$.  Writing $F(\eta,y)$ in the form $F(\eta,y)=P(-\ii\eta,y)(-\ii\eta)^{-3}$, where $P(\mu,y)$ is the  cubic polynomial\footnote{Note that this polynomial is closely related to $\frac{16}{y}U^3+2EU^2-\frac{4}{y}U+1$ appearing in \eqref{eq:nonequilibrium-elliptic}.  Indeed, 
\[
-\frac{64 U^3}{y^3}P\left(\frac{y}{2U};y\right) = \frac{16}{y}U^3-\frac{16c}{y^2}U^2-\frac{4}{y}U+1
\]
which matches the target form if we identify the integration constants by $y^2E=-8c$.}
\begin{equation}
P(\mu,y):=-\mu^3+\mu^2+c\mu-\frac{y^2}{4}.
\label{eq:spectral-polynomial}
\end{equation} 
We now assume that $c$ is chosen so that $P(\mu,y)$ has a double root $\mu=d$ and a simple root $\mu=s$.  Expanding $P(\mu,y)=-(\mu-d)^2(\mu-s)$ and matching the coefficients with \eqref{eq:spectral-polynomial} gives three equations:
\begin{equation}
2d+s=1, \quad d^2+2ds=-c, \quad\text{and}\quad d^2s=-\tfrac{1}{4}y^2.  
\label{eq:matching-coefficients}
\end{equation}
Combining the first equation with the last, one eliminates $d$ and obtains a cubic equation for $s$:  $s(s-1)^2=-y^2$.  Given any solution of this equation, $d$ and $c$ are determined explicitly from the first and second equations.  It is clear that as long as $y>0$, there exists one real negative root and a complex-conjugate pair of roots.  We select the real solution (a guess based on symmetry to be justified later); thus $s$ is decreasing from $s=0$ as $y$ increases from $y=0$, with asymptotic behavior $s=-y^{2/3}(1+o(1))$ as $y\to+\infty$.  Then from $2d+s=1$ we have that $d$ is increasing from $d=\tfrac{1}{2}$ as $y$ increases from $y=0$, with asymptotic behavior $d=\frac{1}{2}y^{2/3}(1+o(1))$ as $y\to+\infty$.  We observe that the discriminant of the cubic $s(s-1)^2=-y^2$ with respect to $s$ is
\begin{equation}
y^2(27y^2+4)^3=0,
\label{eq:cornerpoints}
\end{equation}
which should be compared with the discriminant of \eqref{eq:U-equilibrium}.

Now we discuss the jump conditions satisfied by $\mathbf{M}^{(n)}(\eta,y)$.  These take the form shown in \eqref{eq:Z-jump-first}--\eqref{eq:Z-jump-last} with the only change being that the function $\Phi(\eta,y)$ in the exponents is replaced with $-h(\eta,y)$, where
\begin{equation}
h(\eta,y):=g(\eta,y)-\Phi(\eta,y).
\end{equation}
The effect of the conjugation of the constant jump matrices involving the exponent $h(\eta,y)$ depends on the sign of $\mathrm{Re}(h(\eta,y))$.  Under the assumption that $P(\mu,y)=-(\mu-d)^2(\mu-s)$, with $s=s(y)<0$ and $d=(1-s)/2>0$, we have from \eqref{eq:f-define} that 
\begin{equation}
\frac{\partial h}{\partial\eta}(\eta,y) =\sqrt{f(\eta,y)} =  -(\eta-\ii d)(-\ii\eta)^{-3/2}(-\ii\eta-s)^{1/2}.
\label{eq:hprime}
\end{equation}
Here, all fractional powers refer to the principal branch, and the sign of the square root was chosen to match the asymptotic behavior at $\eta=\infty$ according to the definitions of $h(\eta,y)$, $\Phi(\eta,y)$, and the large-$\eta$ behavior of $g(\eta,y)$.  This function is analytic except on the negative imaginary axis between $\eta=0$ and $\eta=\ii s$, because the sign changes of the latter two factors cancel for $\eta$ below $\ii s$ on the imaginary axis. 

Integrating using $d=(1-s)/2$, 
\begin{equation}
h(\eta,y)=(-\ii\eta-s+1)(-\ii\eta-s)^{1/2}(-\ii\eta)^{-1/2}+\frac{1}{2}\log\left(\frac{(-\ii\eta-s)^{1/2}-(-\ii\eta)^{1/2}}{(-\ii\eta-s)^{1/2}+(-\ii\eta)^{1/2}}\right) + C
\label{eq:h-log}
\end{equation}
where $C$ is an integration constant, which we take to be $C=0$.  Then one can check that $h_+(\eta,y)+h_-(\eta,y)\equiv 0$ for $\eta$ on the segment between $\eta=0$ and $\eta=\ii s$.  Also, $g(\eta,y)=h(\eta,y)+\Phi(\eta,y)$, so
\begin{equation}
g(\eta,y)=(-\ii\eta-s+1)(-\ii\eta-s)^{1/2}(-\ii\eta)^{-1/2}+\frac{1}{2}\log\left(\frac{(-\ii\eta-s)^{1/2}-(-\ii\eta)^{1/2}}{(-\ii\eta-s)^{1/2}+(-\ii\eta)^{1/2}}\right)
+\ii \eta-y(-\ii\eta)^{-1/2},
\end{equation}
which indeed has the expansion
\begin{equation}
g(\eta,y)=-\frac{1}{2}\log(-\ii\eta)+g_0(y)+\mathcal{O}(\eta^{-1/2}),\quad\eta\to\infty,\quad g_0(y):=1-\frac{3}{2}s+\frac{1}{2}\ln\left(-\frac{1}{4}s\right),\quad s=s(y)<0.
\end{equation}
Also, using $s<0$ and $d^2s=-\tfrac{1}{4}y^2$ shows that 
\begin{equation}
g(\eta,y)=-\frac{1+3s}{2(-s)^{1/2}}(-\ii\eta)^{1/2} + \mathcal{O}(\eta),\quad\eta\to 0.
\label{eq:g-expand-origin}
\end{equation}

Assuming that $y>0$ is sufficiently large, the function $\mathrm{Re}(h(\eta,y))$ is continuous except at $\eta=0$, harmonic except on the negative imaginary segment connecting the origin with $\eta=\ii s$, and has a zero level set consisting of that same segment and two curves emanating from $\eta=\ii s$ into the left and right half-planes and extending to $\infty$.  These curves are reflections of each other in the imaginary axis, and all three branches emanating from $\eta=\ii s$ are separated by angles of $2\pi/3$ at that junction point.  We have $\mathrm{Re}(h(\eta,y))>0$  above the two curves (a region containing the positive imaginary axis) and $\mathrm{Re}(h(\eta,y))<0$ below them.  See the left-hand panel of Figure~\ref{fig:sign-charts} below.

\subsection{Parametrix construction and error analysis}

To make use of this structure, we assume that the jump contour is deformed so that $\eta=\ii s$ is the junction point of $C^+$, $C^-$, $\Sigma_0^-$, and $\Sigma_\infty^-$, and so that $\Sigma_0^-$ coincides with the segment joining $\eta=\ii s$ with the origin.  Then, since the sum of the boundary values of $\partial h/\partial\eta(\eta,y)$ vanishes and $\mathrm{Re}(h(\eta,y))=0$ on $\Sigma_0^-$, the jump condition of $\mathbf{M}^{(n)}(\eta,y)$ on this arc reads
\begin{equation}
\mathbf{M}^{(n)}_+(\eta,y)=\mathbf{M}^{(n)}_-(\eta,y)\begin{bmatrix}0&\ee^{\ii\phi}\\-\ee^{-\ii\phi} & 0\end{bmatrix},\quad\phi:=\frac{\pi}{2}+n\pi -\ii n(h_+(\eta,y)+h_-(\eta,y))=\frac{\pi}{2}+n\pi,\quad\eta\in\Sigma_0^-,
\label{eq:twist-jump}
\end{equation}
and here, $\phi$ is a real phase that is constant along $\Sigma_0^-$.  Because $h(\eta,y)$ is analytic on all other arcs of the jump contour except for $\Sigma_\infty^-$, along which $h_+(\eta,y)-h_-(\eta,y)=-\ii\pi$, all other jump matrices for $\mathbf{M}^{(n)}(\eta,y)$ are exponentially small perturbations of the identity matrix when $n\to+\infty$, estimates that are uniform with respect to $\eta$ except in a neighborhood of $\eta=\ii s$.  Here it is possible to install a standard Airy parametrix to solve the jump conditions exactly locally.  To construct an outer parametrix, we solve the jump condition \eqref{eq:twist-jump} exactly and build in suitable singularities near the two endpoints of $\Sigma_0^-$ to allow matching onto the Airy parametrix at $\eta=\ii s$ and to match the required behavior of $\mathbf{M}^{(n)}(\eta,y)$ near the origin.

\subsubsection{Outer parametrix}
By definition, the outer parametrix is the matrix $\dot{\mathbf{M}}^{(n),\mathrm{out}}(\eta,y)$ with the following properties:
\begin{itemize}
\item $\dot{\mathbf{M}}^{(n),\mathrm{out}}(\eta,y)$ is analytic for $\eta\in\mathbb{C}\setminus\Sigma_0^-$, taking continuous boundary values from each side except at the endpoints.
\item The boundary values are related by the jump condition \eqref{eq:twist-jump}.
\item $\dot{\mathbf{M}}^{(n),\mathrm{out}}(\eta,y)\to\mathbb{I}$ as $\eta\to\infty$.
\item $\dot{\mathbf{M}}^{(n),\mathrm{out}}(\eta,y) = \mathcal{O}(1+|\eta|^{-1/4}+|\eta-\ii s|^{-1/4})$.
\end{itemize}
We construct $\dot{\mathbf{M}}^{(n),\mathrm{out}}(\eta,y)$ explicitly by first removing the phase factors $\ee^{\pm\ii\phi}$ from the jump condition without changing any of the other conditions; we write $\dot{\mathbf{M}}^{(n),\mathrm{out}}(\eta,y)=\ee^{\ii\phi\sigma_3/2}\mathbf{N}(\eta,y)\ee^{-\ii\phi\sigma_3/2}$ so that \eqref{eq:twist-jump} for $\dot{\mathbf{M}}^{(n),\mathrm{out}}(\eta,y)$ implies
\begin{equation}
\mathbf{N}_+(\eta,y)=\mathbf{N}_-(\eta,y)\ii\sigma_2,\quad\eta\in\Sigma_0^-.
\end{equation}
Next, we diagonalize the jump matrix $\ii\sigma_2$ by the substitution (again, not changing any of the other properties)
\begin{equation}
\mathbf{N}(\eta,y)=\frac{1}{\sqrt{2}}\begin{bmatrix}1&\ii\\\ii & 1\end{bmatrix}\mathbf{O}(\eta,y)\frac{1}{\sqrt{2}}\begin{bmatrix}1&-\ii\\-\ii & 1\end{bmatrix}.
\end{equation}
Then $\mathbf{O}(\eta,y)$ is analytic for $\eta\in\mathbb{C}\setminus\Sigma_0^-$, takes continuous boundary values except at the endpoints where $-1/4$ power singularities are admissible, tends to $\mathbb{I}$ as $\eta\to\infty$, and satisfies the diagonal jump condition $\mathbf{O}_+(\eta,y)=\mathbf{O}_-(\eta,y)\ii^{\sigma_3}$ for $\eta\in\Sigma_0^-$ (note that the dependence on $y>0$ enters via the moving endpoint $\eta=\ii c$).  The unique solution of these conditions is the diagonal matrix
\begin{equation}
\mathbf{O}(\eta,y):=(-\ii(\eta-\ii s))^{\sigma_3/4}(-\ii\eta)^{-\sigma_3/4} = \left(\frac{\eta-\ii s}{\eta}\right)^{\sigma_3/4}
\end{equation}
where in each case the principal branch power is intended.  Inverting the transformations $\dot{\mathbf{M}}^{(n),\mathrm{out}}(\eta,y)\mapsto\mathbf{N}(\eta,y)\mapsto\mathbf{O}(\eta,y)$ gives the explicit formula for the outer parametrix as
\begin{equation}
\dot{\mathbf{M}}^{(n),\mathrm{out}}(\eta,y):=\ee^{\ii\phi\sigma_3/2}\frac{1}{\sqrt{2}}\begin{bmatrix}1&\ii\\\ii & 1\end{bmatrix}\left(\frac{\eta-\ii s}{\eta}\right)^{\sigma_3/4}\frac{1}{\sqrt{2}}\begin{bmatrix}1&-\ii\\-\ii & 1\end{bmatrix}\ee^{-\ii\phi\sigma_3/2}.
\label{eq:M-outer}
\end{equation}

A calculation shows that $\dot{\mathbf{M}}^{(n),\mathrm{out}}(\eta,y)\widetilde{\mathbf{E}}\cdot(-\ii\eta)^{-(-1)^n\sigma_3/4}$ is analytic at $\eta=0$.  Indeed, letting $\mathbf{C}^{(n)}$ denote the matrix
\begin{equation}
\mathbf{C}^{(n)}:=\begin{cases}-\ii\sigma_1,&\quad\text{$n$ even,}\\
\mathbb{I},&\quad\text{$n$ odd,}\end{cases}
\end{equation}
we see that
\begin{multline}
\dot{\mathbf{M}}^{(n),\mathrm{out}}(\eta,y)\widetilde{\mathbf{E}}\cdot(-\ii\eta)^{-(-1)^n\sigma_3/4}\\
\begin{aligned}
&=\ee^{\ii\phi\sigma_3/2}\frac{1}{\sqrt{2}}\begin{bmatrix}
1 & \ii\\\ii & 1\end{bmatrix}\left(\frac{\eta-\ii s}{\eta}\right)^{\sigma_3/4}\mathbf{C}^{(n)}\cdot(-\ii\eta)^{-(-1)^n\sigma_3/4}\ee^{5\pi\ii\sigma_3/6}\ee^{-\ii\phi\sigma_3/2}\\
&=\ee^{\ii\phi\sigma_3/2}\frac{1}{\sqrt{2}}\begin{bmatrix}1&\ii\\\ii & 1\end{bmatrix}(-\ii(\eta-\ii s))^{\sigma_3/4}(-\ii\eta)^{-\sigma_3/4}\mathbf{C}^{(n)}\cdot
(-\ii\eta)^{-(-1)^n\sigma_3/4}\ee^{5\pi\ii\sigma_3/6}\ee^{-\ii\phi\sigma_3/2}\\
&=\ee^{\ii\phi\sigma_3/2}\frac{1}{\sqrt{2}}\begin{bmatrix}1&\ii\\\ii & 1\end{bmatrix}(-\ii(\eta-\ii s))^{\sigma_3/4}\mathbf{C}^{(n)}\ee^{5\pi\ii\sigma_3/6}\ee^{-\ii\phi\sigma_3/2},
\end{aligned}
\end{multline}
and having a vertical branch cut emanating downward from $\eta=\ii s$ with $s<0$, the diagonal matrix $(-\ii(\eta-\ii s))^{\sigma_3/4}$ is analytic at $\eta=0$.  Since $\mathbf{C}^{(n)}$ is either diagonal or off-diagonal, we can further simplify the resulting formula as follows:
\begin{equation}
\begin{split}
\dot{\mathbf{M}}^{(n),\mathrm{out}}(\eta,y)\widetilde{\mathbf{E}}\cdot(-\ii\eta)^{-(-1)^n\sigma_3/4}&=\ee^{\ii\phi\sigma_3/2}\frac{1}{\sqrt{2}}\begin{bmatrix}1&\ii\\
\ii & 1\end{bmatrix}\mathbf{C}^{(n)}\ee^{5\pi\ii\sigma_3/6}\ee^{-\ii\phi\sigma_3/2}(-\ii(\eta-\ii s))^{-(-1)^n\sigma_3/4}\\
&=\frac{1}{\sqrt{2}}\begin{bmatrix}1&1\\-1 & 1\end{bmatrix}\ee^{5\pi\ii\sigma_3/6}(-\ii(\eta-\ii s))^{-(-1)^n\sigma_3/4},
\end{split}
\end{equation}
where we also used the $n$-dependent definition of $\phi$ in \eqref{eq:twist-jump}.  In particular, this implies that
\begin{equation}
\lim_{\eta\to 0}\dot{\mathbf{M}}^{(n),\mathrm{out}}(\eta,y)\widetilde{\mathbf{E}}\cdot(-\ii\eta)^{-(-1)^n\sigma_3/4} = \frac{1}{\sqrt{2}}\begin{bmatrix}1&1\\-1&1\end{bmatrix}\ee^{5\pi\ii\sigma_3/6}(-s)^{-(-1)^n\sigma_3/4}.
\label{eq:outer-limit-at-origin}
\end{equation}
\subsubsection{Airy parametrix}
Since $h_+(\ii s,y)+h_-(\ii s,y)=0$,  $h_+(\eta,y)+h_-(\eta,y)$ is real-valued and strictly decreasing in the negative imaginary direction along $\Sigma_\infty^-$, and  $h'(\eta,y)$ is an analytic function on this contour arc that vanishes like a square root at $\eta=\ii s$, we may introduce a conformal coordinate $\eta\mapsto W(\eta)$ on a neighborhood $D_{\ii s}$ of $\eta=\ii s$ with the properties that $W(\ii s)=0$, $W(\eta)>0$ for $\eta\in \Sigma_\infty^-\cap D_{\ii s}$, and $W(\eta)=-(h_+(\eta,y)+h_-(\eta,y))^{2/3}$ for $\eta\in \Sigma_\infty^-\cap D_{\ii s}$.  Using $h_+(\eta,y)-h_-(\eta,y)=-\ii\pi$ for $\eta\in\Sigma_\infty^-$, we see that on $C^\pm\cap D_{\ii s}$, the exponent function $2h(\eta,y)$ is the analytic continuation from $\Sigma_\infty^-\cap D_{\ii s}$ through $D_\pm^\infty\cap D_{\ii s}$ of $2h_\mp(\eta,y)=(h_+(\eta,y)+h_-(\eta,y))\mp (h_+(\eta,y)-h_-(\eta,y)) = h_+(\eta,y)+h_-(\eta,y)\pm \ii\pi$.  Rescaling the conformal coordinate by $\zeta=n^{2/3}W(\eta)$, the jump conditions for $\mathbf{M}^{(n)}(\eta,y)$ within $D_{\ii s}$ can be written as follows:
\begin{equation}
\begin{split}
\mathbf{M}^{(n)}_+(\eta,y)&=\mathbf{M}^{(n)}_-(\eta,y)\ee^{nh_-(\eta,y)\sigma_3}\begin{bmatrix}(-1)^n & -\ii (-1)^n\\0 & (-1)^n\end{bmatrix}\ee^{-nh_+(\eta,y)\sigma_3}\\
&=\mathbf{M}^{(n)}_-(\eta,y)\ee^{n(h_+(\eta,y)+h_-(\eta,y))\sigma_3/2}\ee^{\ii n\pi\sigma_3/2}\begin{bmatrix}(-1)^n & -\ii (-1)^n\\0 & (-1)^n\end{bmatrix}\ee^{\ii n\pi\sigma_3/2}\ee^{-n(h_+(\eta,y)+h_-(\eta,y))\sigma_3/2}\\
&=\mathbf{M}^{(n)}_-(\eta,y)\begin{bmatrix}1 & -\ii (-1)^n\ee^{n(h_+(\eta,y)+h_-(\eta,y))}\\0 & 1\end{bmatrix}\\
&=\mathbf{M}^{(n)}_-(\eta,y)\begin{bmatrix}1&-\ii (-1)^n\ee^{-\zeta^{3/2}}\\0&1\end{bmatrix}\\
&=\mathbf{M}^{(n)}_-(\eta,y)\ee^{\ii (-1)^n\pi\sigma_3/4}\begin{bmatrix}1&-\ee^{-\zeta^{3/2}}\\0&1\end{bmatrix}\ee^{-\ii (-1)^n\pi\sigma_3/4},\quad
\eta\in\Sigma_\infty^-\cap D_{\ii s};
\end{split}
\end{equation}
\begin{equation}
\begin{split}
\mathbf{M}^{(n)}_+(\eta,y)&=\mathbf{M}^{(n)}_-(\eta,y)\begin{bmatrix}1&0\\-\ii\ee^{-n(h_+(\eta,y)+h_-(\eta,y))} & 1\end{bmatrix}\\
&=\mathbf{M}^{(n)}_-(\eta,y)\begin{bmatrix}1&0\\-\ii (-1)^n\ee^{\zeta^{3/2}} & 1\end{bmatrix}\\
&=\mathbf{M}^{(n)}_-(\eta,y)\ee^{\ii (-1)^n\pi\sigma_3/4}\begin{bmatrix}1&0\\\ee^{\zeta^{3/2}} & 1\end{bmatrix}\ee^{-\ii (-1)^n\pi\sigma_3/4},\quad\eta\in C^\pm\cap D_{\ii s}; \quad\text{and}
\end{split}
\end{equation}
\begin{equation}
\begin{split}
\mathbf{M}^{(n)}_+(\eta,y)&=\mathbf{M}^{(n)}_-(\eta,y)\begin{bmatrix}0 & \ii (-1)^n\\\ii (-1)^n & 0\end{bmatrix}\\
&=\mathbf{M}^{(n)}_-(\eta,y)\ee^{\ii (-1)^n\pi\sigma_3/4}\begin{bmatrix}0 & 1\\-1 & 0\end{bmatrix}\ee^{-\ii (-1)^n\pi\sigma_3/4},\quad\eta\in \Sigma_0^-\cap D_{\ii s}.
\end{split}
\end{equation}
We deform the contours $C^\pm$ within $D_{\ii s}$ to lie along the rays $\arg(\zeta)=\pm 2\pi/3$ as shown in the left-hand panel of Figure~\ref{fig:error-jump} below.  Then one can check that $\mathbf{M}(\eta,y)\ee^{\ii (-1)^n\pi\sigma_3/4}\mathbf{A}(n^{2/3}W(\eta))^{-1}$ is holomorphic in $D_{\ii s}$, where $\mathbf{A}(\zeta)$ is the standard Airy parametrix as defined (for instance) by the solution of \cite[Riemann-Hilbert Problem 4]{BothnerM19}.  Defining a matrix holomorphic in $D_{\ii s}$ and uniformly bounded as $n\to\infty$ by 
\begin{equation}
\mathbf{H}(\eta):=\ee^{\ii (-1)^n\pi\sigma_3/4}\frac{1}{\sqrt{2}}\begin{bmatrix}1&\ii\\\ii & 1\end{bmatrix}\left(\frac{\eta-\ii s}{\eta}\right)^{\sigma_3/4}W(\eta)^{-\sigma_3/4},\quad\eta\in D_{\ii s},
\end{equation}
we take as the inner parametrix
\begin{equation}
\dot{\mathbf{M}}^{(n),\mathrm{in}}(\eta,y):=\mathbf{H}(\eta)n^{-\sigma_3/6}\mathbf{A}(n^{2/3}W(\eta))\ee^{-\ii (-1)^n\pi\sigma_3/4},\quad\eta\in D_{\ii s}.
\label{eq:M-inner}
\end{equation}
Then we can compare the inner and outer parametrices on $\partial D_{\ii s}$:
\begin{equation}
\dot{\mathbf{M}}^{(n),\mathrm{in}}(\eta,y)\dot{\mathbf{M}}^{(n),\mathrm{out}}(\eta,y)^{-1}=\mathbf{H}(\eta)n^{-\sigma_3/6}\mathbf{A}(\zeta)\frac{1}{\sqrt{2}}\begin{bmatrix}1&\ii\\\ii & 1\end{bmatrix}\zeta^{-\sigma_3/4}\cdot n^{\sigma_3/6}\mathbf{H}(\eta)^{-1},\quad\eta\in\partial D_{\ii s},
\end{equation}
where $\zeta=n^{2/3}W(\eta)$ and we used the fact that $\ee^{-\ii (-1)^n\sigma_3/4}\ee^{\ii\phi\sigma_3/2}$ is a multiple (by $\pm 1$) of the identity.  Since $\zeta$ is uniformly large when $\eta\in \partial D_{\ii s}$, we use the large-$\zeta$ asymptotic of $\mathbf{A}(\zeta)$ (see for instance \cite[Eqn.\@ (113)]{BothnerM19}):
\begin{equation}
\mathbf{A}(\zeta)\frac{1}{\sqrt{2}}\begin{bmatrix}1&\ii\\\ii & 1\end{bmatrix}\zeta^{-\sigma_3/4}=\mathbb{I}+\begin{bmatrix}\mathcal{O}(\zeta^{-3}) & \mathcal{O}(\zeta^{-1})\\\mathcal{O}(\zeta^{-2}) & \mathcal{O}(\zeta^{-3})\end{bmatrix},\quad\zeta\to\infty,
\end{equation}
which implies that
\begin{equation}
\sup_{\eta\in\partial D_{\ii s}}\|\dot{\mathbf{M}}^{(n),\mathrm{in}}(\eta,y)\dot{\mathbf{M}}^{(n),\mathrm{out}}(\eta,y)^{-1}-\mathbb{I}\|=\mathcal{O}(n^{-1}),\quad n\to\infty.
\label{eq:comparison-estimate}
\end{equation}

\subsubsection{Global parametrix and error estimation}
We define a global parametrix for $\mathbf{M}(\eta,y)$ by
\begin{equation}
\dot{\mathbf{M}}^{(n)}(\eta,y):=\begin{cases}\dot{\mathbf{M}}^{(n),\mathrm{in}}(\eta,y),&\quad \eta\in D_{\ii s}\\
\dot{\mathbf{M}}^{(n),\mathrm{out}}(\eta,y),&\quad\eta\in\mathbb{C}\setminus\overline{D_{\ii s}}.
\end{cases}
\label{eq:global-parametrix}
\end{equation}
The mismatch between the global parametrix and $\mathbf{M}^{(n)}(\eta,y)$ is defined as 
\begin{equation}
\mathbf{D}^{(n)}(\eta,y):=\mathbf{M}^{(n)}(\eta,y)\dot{\mathbf{M}}^{(n)}(\eta,y)^{-1}.
\label{eq:discrepancy}
\end{equation}
Since the inner parametrix is an exact solution of the jump conditions for $\mathbf{M}^{(n)}(\eta,y)$ within $D_{\ii s}$, $\mathbf{D}^{(n)}(\eta,y)$ may be regarded as being analytic for $\eta\in D_{\ii s}$.  Similarly, because the outer parametrix is an exact solution of the jump condition for $\mathbf{M}^{(n)}(\eta,y)$ on the arc $\Sigma_0^-$, $\mathbf{D}^{(n)}(\eta,y)$ has no jump across this contour either.  Therefore $\mathbf{D}^{(n)}(\eta,y)$ is analytic in the complement of the jump contour illustrated in the right-hand panel of Figure~\ref{fig:error-jump}.
\begin{figure}[h]
\includegraphics{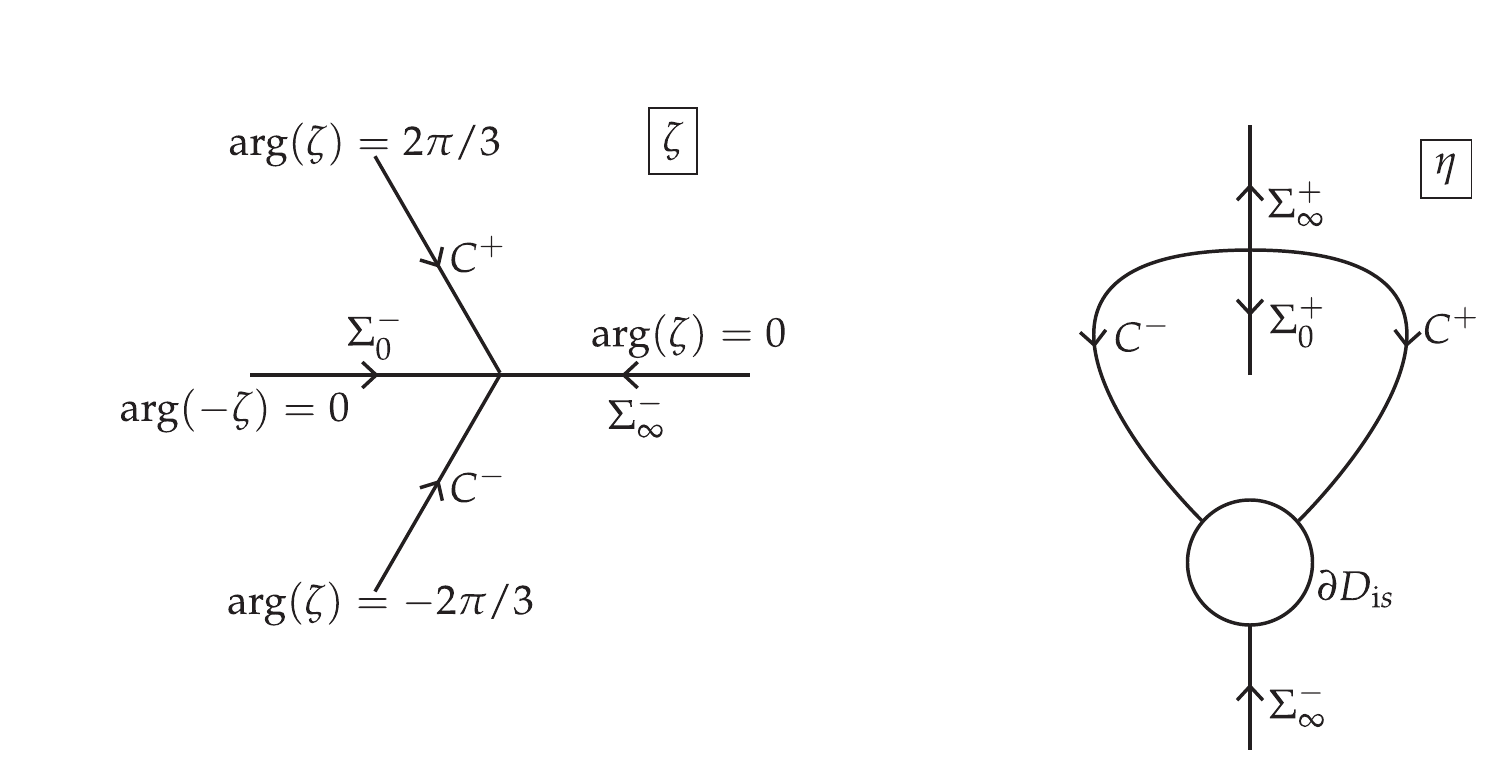}
\caption{Left:  the jump contour near $\eta=\ii s$ in the $\zeta$-plane.  Right:  the jump contour for $\mathbf{D}^{(n)}(\eta,y)$}
\label{fig:error-jump}
\end{figure}
On all arcs of the jump contour except for $\partial D_{\ii s}$ we have $\dot{\mathbf{M}}^{(n)}(\eta,y)=\dot{\mathbf{M}}^{(n),\mathrm{out}}(\eta,y)$, and the outer parametrix $\dot{\mathbf{M}}^{(n),\mathrm{out}}(\eta,y)$ is analytic; therefore on those arcs,
\begin{equation}
\mathbf{D}^{(n)}_+(\eta,y)=\mathbf{D}^{(n)}_-(\eta,y)\dot{\mathbf{M}}^{(n),\mathrm{out}}(\eta,y)\cdot[\mathbf{M}^{(n)}_-(\eta,y)^{-1}\mathbf{M}^{(n)}_+(\eta,y)]\cdot\dot{\mathbf{M}}^{(n),\mathrm{out}}(\eta,y)^{-1}.
\end{equation}
Because $\mathrm{Re}(h(\eta,y))\ge \delta>0$ holds on the parts of $C^\pm$, $\Sigma_0^+$, and $\Sigma_\infty^+$ outside of $D_{\ii s}$, while $\mathrm{Re}(h(\eta,y))\le -\delta <0$ holds on the part of $\Sigma_\infty^-$ outside of $D_{\ii s}$, we have $\mathbf{M}^{(n)}_-(\eta,y)^{-1}\mathbf{M}^{(n)}_+(\eta,y)-\mathbb{I}$ decays exponentially to zero as $n\to\infty$ on these arcs, so since the outer parametrix has unit determinant and is bounded on these contours, uniformly so as $n\to\infty$, we have $\mathbf{D}^{(n)}_+(\eta,y)=\mathbf{D}^{(n)}_-(\eta,y)(\mathbb{I}+\text{exponentially small})$ as $n\to\infty$, where the exponentially small term is measured in both $L^\infty$ and $L^2$.  On $\partial D_{\ii s}$ taken with clockwise orientation, we have
\begin{equation}
\mathbf{D}^{(n)}_+(\eta,y)=\mathbf{D}^{(n)}_-(\eta,y)\dot{\mathbf{M}}^{(n),\mathrm{in}}(\eta,y)\dot{\mathbf{M}}^{(n),\mathrm{out}}(\eta,y)^{-1},\quad\eta\in\partial D_{\ii s}
\end{equation}
because $\mathbf{M}^{(n)}(\eta,y)$ has no jump across this contour.  Therefore from \eqref{eq:comparison-estimate} it follows that $\mathbf{D}^{(n)}_+(\eta,y)=\mathbf{D}^{(n)}_-(\eta,y)(\mathbb{I}+\mathcal{O}(n^{-1}))$ holds uniformly on $\partial D_{\ii s}$ as $n\to\infty$.  Finally, we note that both factors in the definition of $\mathbf{D}^{(n)}(\eta,y)$ tend to the identity as $\eta\to\infty$, so $\mathbf{D}^{(n)}(\eta,y)$ does as well.  

The matrix $\mathbf{D}^{(n)}(\eta,y)$ therefore satisfies the conditions of a Riemann-Hilbert problem of small-norm type, and the key implication we need of this fact is that $\mathbf{D}^{(n)}(\eta,y)$ is continuous at $\eta=0$ and $\mathbf{D}^{(n)}(\eta,y)=\mathbb{I}+\mathcal{O}(n^{-1})$ holds uniformly in the $\eta$-plane, and in particular in a neighborhood of $\eta=0$.

\subsection{Asymptotic formula for $u_n(n^{3/2}y)$ with $y>0$ sufficiently large}
According to \eqref{eq:Z-origin-limit},
\begin{equation}
(-\ii n y)^{-n\sigma_3}\mathbf{B}_0^{(n)}(x)(-\ii n y)^{-(-1)^n\sigma_3/2}=\lim_{\eta\to 0}\mathbf{Z}^{(n)}(\eta,y)\ee^{-\ii n\eta\sigma_3}\widetilde{\mathbf{E}}\cdot(-\ii\eta)^{-(-1)^n\sigma_3/4},\quad x=n^{3/2}y.
\end{equation}
Using \eqref{eq:M-Z} and \eqref{eq:global-parametrix}--\eqref{eq:discrepancy} and the fact that $\eta=0$ lies in the exterior of $D_{\ii s}$, this can be written as
\begin{multline}
(-\ii n y)^{-n\sigma_3}\mathbf{B}_0^{(n)}(x)(-\ii n y)^{-(-1)^n\sigma_3/2}\\
{}=\ee^{-ng_0(y)\sigma_3}\lim_{\eta\to 0}\mathbf{D}^{(n)}(\eta,y)\dot{\mathbf{M}}^{(n),\mathrm{out}}(\eta,y)\ee^{ng(\eta,y)\sigma_3}
\ee^{-\ii n\eta\sigma_3}\widetilde{\mathbf{E}}\cdot(-\ii\eta)^{-(-1)^n\sigma_3/4}.
\end{multline}
Taking into account the continuity of $\mathbf{D}^{(n)}(\eta,y)$ at $\eta=0$ and the behavior of the outer parametrix near $\eta=0$ as given by \eqref{eq:outer-limit-at-origin}, 
we have
\begin{multline}
(-\ii n y)^{-n\sigma_3}\mathbf{B}_0^{(n)}(x)(-\ii n y)^{-(-1)^n\sigma_3/2}\\
{}=\ee^{-ng_0(y)\sigma_3}\mathbf{D}^{(n)}(0,y)\frac{1}{\sqrt{2}}\begin{bmatrix}1&1\\-1&1\end{bmatrix}\ee^{5\pi\ii\sigma_3/6}(-s)^{-(-1)^n\sigma_3/4}\\
{}\cdot \lim_{\eta\to 0}(-\ii\eta)^{(-1)^n\sigma_3/4}\widetilde{\mathbf{E}}^{-1}\ee^{ng(\eta,y)\sigma_3}\ee^{-\ii n\eta\sigma_3}\widetilde{\mathbf{E}}\cdot(-\ii\eta)^{-(-1)^n\sigma_3/4}.
\label{eq:limit-penultimate}
\end{multline}
The conjugating factors in the limit on the last line will produce a singularity proportional to $(-\ii\eta)^{-1/2}$, so it is necessary to capture terms proportional to $(-\ii\eta)^{1/2}$ in $\widetilde{\mathbf{E}}^{-1}\ee^{ng(\eta,y)\sigma_3}\ee^{-\ii n\eta\sigma_3}\widetilde{\mathbf{E}}$.  Obviously $\ee^{-\ii n\eta\sigma_3}=\mathbb{I}+\mathcal{O}(\eta)$, however according to \eqref{eq:g-expand-origin}, 
\begin{equation}
\ee^{ng(\eta,y)\sigma_3}=\mathbb{I}-n\frac{1+3s}{2(-s)^{1/2}}(-\ii\eta)^{1/2}\sigma_3 +\mathcal{O}(\eta),\quad\eta\to 0.
\end{equation}
Since $\widetilde{\mathbf{E}}^{-1}\sigma_3\widetilde{\mathbf{E}}=\ee^{-5\pi\ii\sigma_3/6}\sigma_1\ee^{5\pi\ii\sigma_3/6}$, it then follows that
\begin{equation}
(-\ii\eta)^{(-1)^n\sigma_3/4}\widetilde{\mathbf{E}}^{-1}\ee^{ng(\eta,y)\sigma_3}\ee^{-\ii n\eta\sigma_3}\widetilde{\mathbf{E}}(-\ii\eta)^{-(-1)^n\sigma_3/4} = 
\begin{cases}\begin{bmatrix}1 & 0\\-n\frac{1+3s}{2(-s)^{1/2}}\ee^{5\pi\ii/3} & 1\end{bmatrix}+\mathcal{O}(\eta^{1/2}),&\quad\text{$n$ even}\\
\begin{bmatrix}1&-n\frac{1+3s}{2(-s)^{1/2}}\ee^{-5\pi\ii/3}\\0&1\end{bmatrix}+\mathcal{O}(\eta^{1/2}),&\quad\text{$n$ odd.}
\end{cases}
\end{equation}
Finally, we may take the limit in \eqref{eq:limit-penultimate}, and we obtain
\begin{multline}
\mathbf{B}_0^{(n)}(x)=(-\ii n y)^{n\sigma_3}\ee^{-ng_0(y)\sigma_3}\mathbf{D}^{(n)}(0,y)\frac{1}{\sqrt{2}}\begin{bmatrix}1&1\\-1 & 1\end{bmatrix}\\
{}\cdot
\begin{cases}\begin{bmatrix}1&0\\-\frac{1}{2}n(1+3s) & 1\end{bmatrix}\ee^{5\pi\ii\sigma_3/6}(-s)^{-\sigma_3/4}(-\ii n y)^{\sigma_3/2},&\quad\text{$n$ even} \vspace{.02in}\\
\begin{bmatrix}1&-\frac{1}{2}n(1+3s) \\0&1\end{bmatrix}\ee^{5\pi\ii\sigma_3/6}(-s)^{\sigma_3/4}(-\ii n y)^{-\sigma_3/2},&\quad\text{$n$ odd.}
\end{cases}
\end{multline}
With $\mathbf{B}^{(n)}_0(x)$ computed, we can use \eqref{eq:un-formula-Y} to write a formula for the algebraic solution $u_n(x)$ of the Painlev\'e-III (D7) equation \eqref{eq:D7KV}.  Since the product of second-column (respectively first-column) elements of $\mathbf{B}_0^{(n)}(x)$ appears in the formula for $n$ even (respectively $n$ odd), the diagonal prefactors $(-\ii n y)^{n\sigma_3}\ee^{-ng_0(y)\sigma_3}$ do not play any role, nor does the term $-\tfrac{1}{2}n(1+3s)$.  Using the fact that $\mathbf{D}^{(n)}(0,y)=\mathbb{I}+\mathcal{O}(n^{-1})$, the result is that an asymptotic formula of the same form holds in both cases:
\begin{equation}
u_n(x) = n^{1/2}\frac{1}{2}(-s)^{1/2} + \mathcal{O}(n^{-1/2}),\quad n\to\infty,\quad s=s(y)<0.
\label{eq:un-asymp-formula}
\end{equation}
We may observe from \eqref{eq:matching-coefficients} that $\tfrac{1}{2}(-s)^{1/2}=y/(4d)$ and $d^2(1-2d)=-\tfrac{1}{4}y^2$, from which we can derive the cubic equation
\begin{equation}
8\left[\frac{1}{2}(-s)^{1/2}\right]^3 + 2\left[\frac{1}{2}(-s)^{1/2}\right]-y = 0.
\label{eq:equilibrium-cubic-from-RHP}
\end{equation}
Comparing with \eqref{eq:U-equilibrium} proves the following.
\begin{theorem}
There exists $y_\mathrm{c}>0$ such that the following asymptotic formula holds:
\begin{equation}
u_n(n^{3/2}y)=n^{1/2}U + \mathcal{O}(n^{-1/2}),\quad n\to\infty,\quad y>y_\mathrm{c},
\end{equation}
where $U=U(y)$ is the positive real solution of the equilibrium cubic \eqref{eq:U-equilibrium}, and the error estimate is valid pointwise for $y>y_\mathrm{c}$ as well as uniformly for $y\ge y_\mathrm{c}+\delta$ for any $\delta>0$.
\label{thm:positive-exterior}
\end{theorem}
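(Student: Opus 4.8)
The plan is to combine the asymptotic formula \eqref{eq:un-asymp-formula} --- the output of the Riemann--Hilbert steepest-descent analysis carried out in Section~\ref{sec:application} --- with the algebraic identity \eqref{eq:equilibrium-cubic-from-RHP}, and to pin down the threshold $y_\mathrm{c}$. First I would stress that the whole parametrix construction (the sign chart of $\mathrm{Re}(h(\eta,y))$ described just before \eqref{eq:twist-jump}, the placement of the jump contour through $\eta=\ii s$, the outer and Airy parametrices, and the resulting bound $\mathbf{D}^{(n)}(\eta,y)=\mathbb{I}+\mathcal{O}(n^{-1})$) was done under the standing assumption that $y>0$ is sufficiently large, which is precisely what secures the asserted topology of the zero level set of $\mathrm{Re}(h(\eta,y))$. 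I would therefore \emph{define} $y_\mathrm{c}$ to be the infimum of those $y>0$ for which this level-set topology, and hence the estimate \eqref{eq:un-asymp-formula}, holds. From \eqref{eq:cornerpoints}, for every $y>0$ the cubic $s(s-1)^2=-y^2$ has three distinct roots with exactly one real (negative) one, so the branch points $s(y)$ and $d(y)$ never collide for $y>0$ and the only obstruction is a change in the \emph{global} sign structure of $\mathrm{Re}(h(\eta,y))$; moreover $y_\mathrm{c}>0$ since as $y\to 0^+$ the branch point $\eta=\ii s(y)$ collides with the origin, at which $\mathbf{M}^{(n)}(\eta,y)$ is itself (mildly) singular, so the construction must fail for small $y$.

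Next I would identify the leading coefficient in \eqref{eq:un-asymp-formula} with $U(y)$. Setting $U:=\tfrac12(-s(y))^{1/2}$, the relations \eqref{eq:matching-coefficients} give $d^2s=-\tfrac14 y^2$ with $s<0$, whence $U=y/(4d)$, together with $2d+s=1$, i.e.\ $d^2(1-2d)=-\tfrac14 y^2$; eliminating $d$ produces exactly \eqref{eq:equilibrium-cubic-from-RHP}, namely $8U^3+2U-y=0$. Because the selected branch has $s(y)<0$ for $y>0$ we have $U>0$, and since $V\mapsto 8V^3+2V$ is strictly increasing on $[0,\infty)$ with value $0$ at $V=0$, the equilibrium cubic \eqref{eq:U-equilibrium} has a unique positive real root $U(y)$; therefore $\tfrac12(-s(y))^{1/2}=U(y)$. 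Substituting into \eqref{eq:un-asymp-formula} yields $u_n(n^{3/2}y)=n^{1/2}U(y)+\mathcal{O}(n^{-1/2})$ as $n\to\infty$ for each fixed $y>y_\mathrm{c}$, which is the pointwise claim; here the factors $(-\ii ny)^{n\sigma_3}\ee^{-ng_0(y)\sigma_3}$ and the large entry $-\tfrac12 n(1+3s)$ drop out of \eqref{eq:un-formula-Y} because only a product of two entries from a single column of $\mathbf{B}_0^{(n)}(x)$ enters, as already observed above \eqref{eq:un-asymp-formula}.

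Finally, for the uniform statement on $y\ge y_\mathrm{c}+\delta$ I would trace the $y$-dependence of the implied constants in \eqref{eq:comparison-estimate} and in the small-norm bound for $\mathbf{D}^{(n)}(\eta,y)$: these depend only on the arc lengths of the jump contour, on a positive lower bound for $\mathrm{Re}(h(\eta,y))$ on the parts of $C^\pm$, $\Sigma_0^+$, and $\Sigma_\infty^+$ outside $D_{\ii s}$ and for $-\mathrm{Re}(h(\eta,y))$ on the part of $\Sigma_\infty^-$ outside $D_{\ii s}$, and on a lower bound for the radius of $D_{\ii s}$; each of these is strictly positive and continuous for $y>y_\mathrm{c}$, hence controlled on any closed subinterval of $(y_\mathrm{c},\infty)$, and to cover $y\to+\infty$ I would rescale $\eta$ so that the branch point is fixed and use $s(y)=-y^{2/3}(1+o(1))$, $d(y)=\tfrac12 y^{2/3}(1+o(1))$ to see that the rescaled quantities remain uniformly controlled. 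The main obstacle is exactly this last point together with the topological input behind $y_\mathrm{c}$: rigorously establishing that the global sign structure of $\mathrm{Re}(h(\eta,y))$ asserted before \eqref{eq:twist-jump} persists for all $y$ above a well-defined $y_\mathrm{c}$, and that the parametrix error stays $\mathcal{O}(n^{-1})$ uniformly in the unbounded regime $y\to+\infty$; everything downstream of that is routine.
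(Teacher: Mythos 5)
Your proposal is correct and follows essentially the same route as the paper: Theorem~\ref{thm:positive-exterior} is obtained there exactly by the steepest-descent construction (the sign structure of $\mathrm{Re}(h(\eta,y))$ for $y$ large, outer and Airy parametrices, the small-norm bound $\mathbf{D}^{(n)}(\eta,y)=\mathbb{I}+\mathcal{O}(n^{-1})$) leading to \eqref{eq:un-asymp-formula}, followed by the identification of $\tfrac12(-s(y))^{1/2}$ with the positive root of \eqref{eq:U-equilibrium} via \eqref{eq:matching-coefficients} and \eqref{eq:equilibrium-cubic-from-RHP}, with the threshold $y_\mathrm{c}$ characterized by the condition $\mathrm{Re}(h(\ii d(y),y))=0$ marking the topological change of the level curve. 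Your additional attention to uniformity of the error constants for $y\ge y_\mathrm{c}+\delta$ (including the regime $y\to+\infty$ via rescaling using $s=-y^{2/3}(1+o(1))$) is a sensible elaboration of what the paper leaves implicit, and does not change the substance of the argument.
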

\begin{figure}[h]
\includegraphics[width=0.48\linewidth]{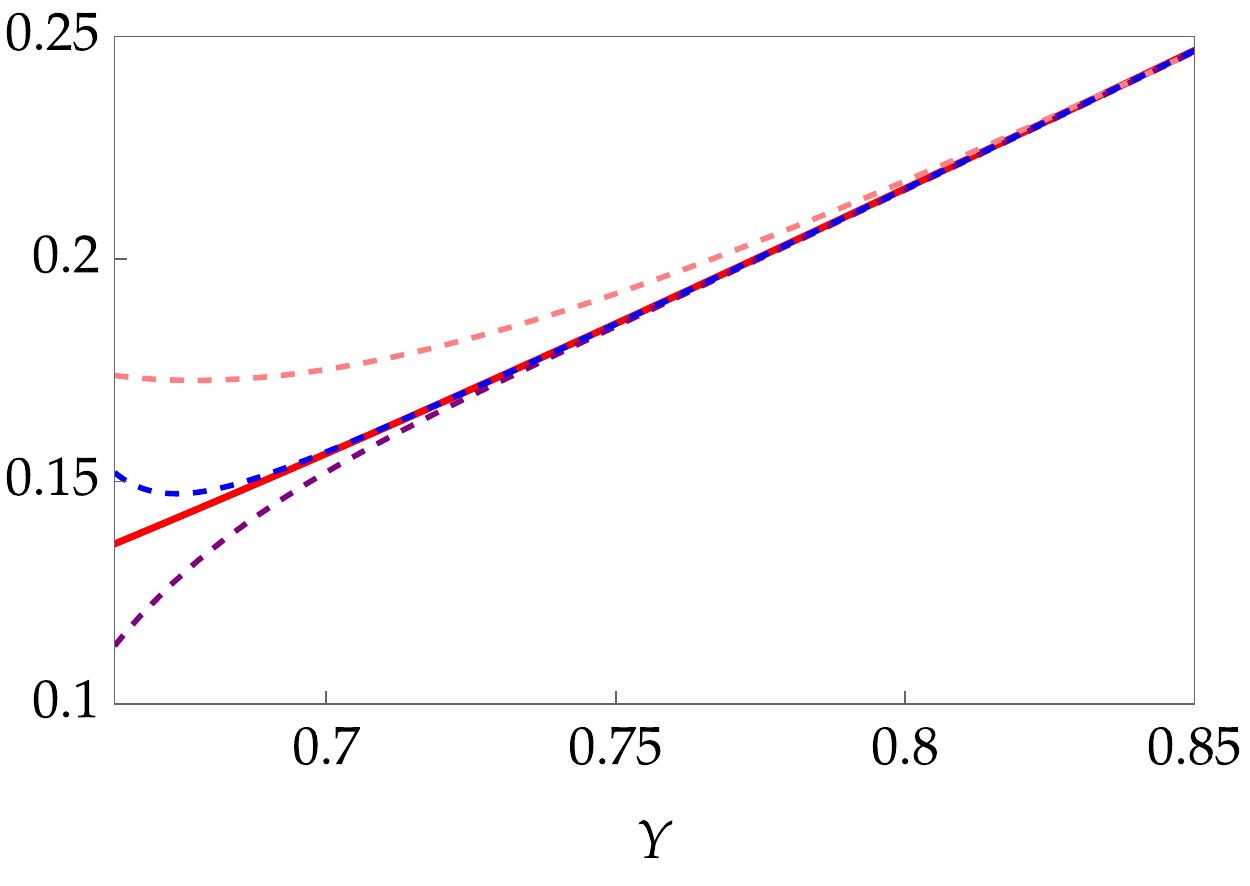}\hfill%
\includegraphics[width=0.48\linewidth]{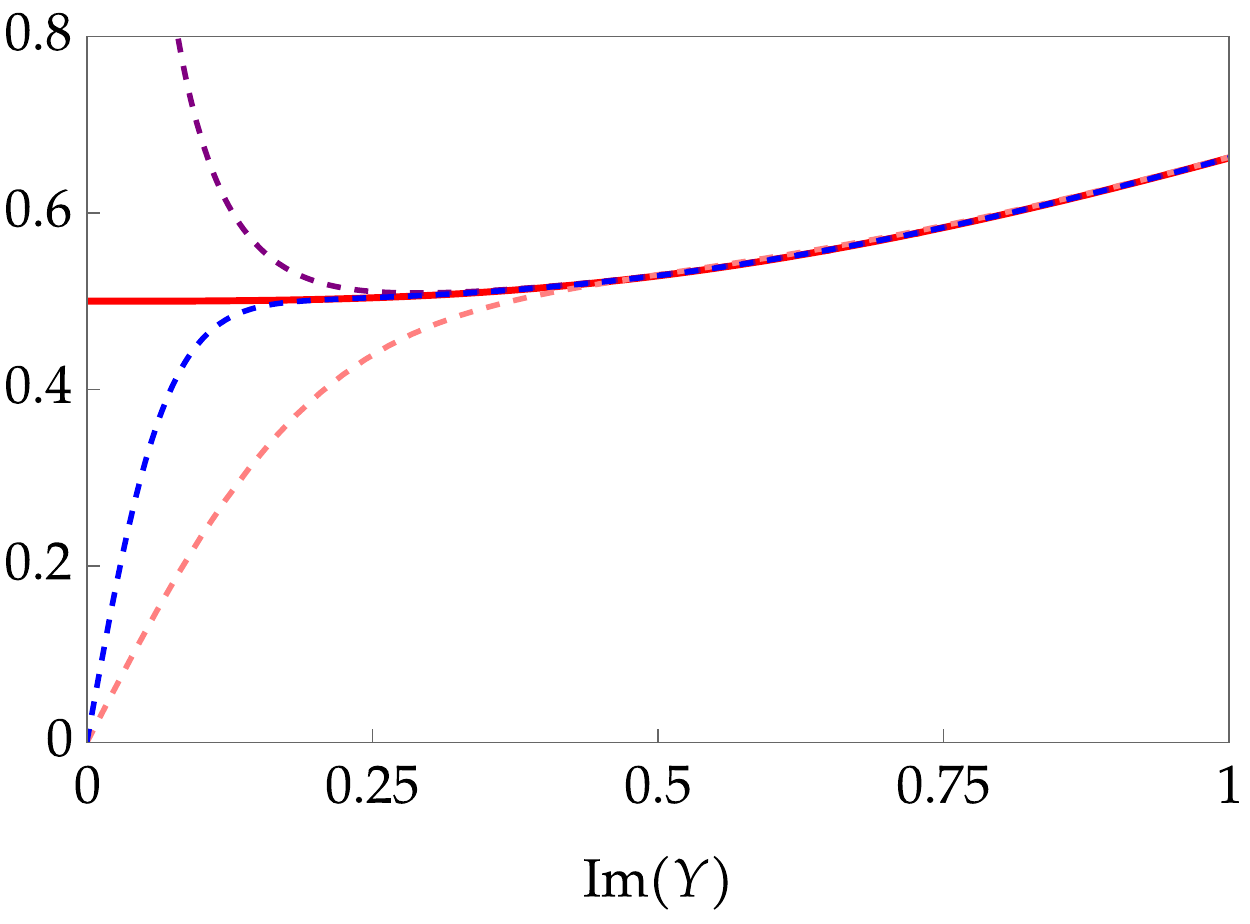}
\caption{Left:  comparing $n^{-1/2}u_n(n^{3/2}Y^3)$ for $n=2,5,10$ (pink, purple, blue dotted lines) with $U(Y^3)$ (red solid curve) for $Y>y_\mathrm{c}^{1/3}$ .  Right:  a similar plot but with imaginary parts of the purely imaginary functions plotted on the imaginary $Y$-axis.}
\label{fig:Accuracy}
\end{figure}

\subsection{Transition to elliptic behavior}
When $y>0$ decreases below the threshold value $y_\mathrm{c}\approx 0.29177$, a topological change occurs in the structure of the zero level curve such that the inequality $\mathrm{Re}(h(\eta,y))>0$ can no longer be satisfied on $\Sigma_0^+\cup\Sigma_\infty^+$.  To study $u_n(x)$ in this situation, it is necessary to dispense with the assumption that the cubic polynomial $P(\mu,y)$ defined in \eqref{eq:spectral-polynomial} has a repeated root.  The coefficient $c$ then has to be determined instead as a function of $y$ to guarantee that an outer parametrix constructed from the elliptic functions of the Riemann surface $w^2=P(\mu,y)$ remains bounded as $n\to\infty$.  This is done by imposing what is frequently called a Boutroux condition on $P(\mu,y)$.  Once this is done, the invariants $g_2$ and $g_3$ of the approximating Weierstra\ss\ equation are determined as functions of $y$ and a more involved Riemann-Hilbert analysis can be used to justify the non-equilibrium approximation of $n^{-1/2}u_n(n^{3/2}(y+n^{-1}z))$ by the right-hand side of \eqref{eq:non-equilibrium-approx}.  The details will not be given here.

The threshold value $y_\mathrm{c}$ is obtained by requiring that the double root $d=d(y)>0$ of $P(\mu,y)$ lies on the zero level of $\mathrm{Re}(h(\eta,y))$, i.e., by solving for $y$ the condition $\mathrm{Re}(h(\ii d(y),y))=0$.   The effect of a sign change on the topology of the level curve $\mathrm{Re}(h(\eta,y))=0$ as $y$ decreases through $y_\mathrm{c}$ is illustrated in the central and right-hand panels of Figure~\ref{fig:sign-charts}.
\begin{figure}[h]
\includegraphics[width=0.32\linewidth]{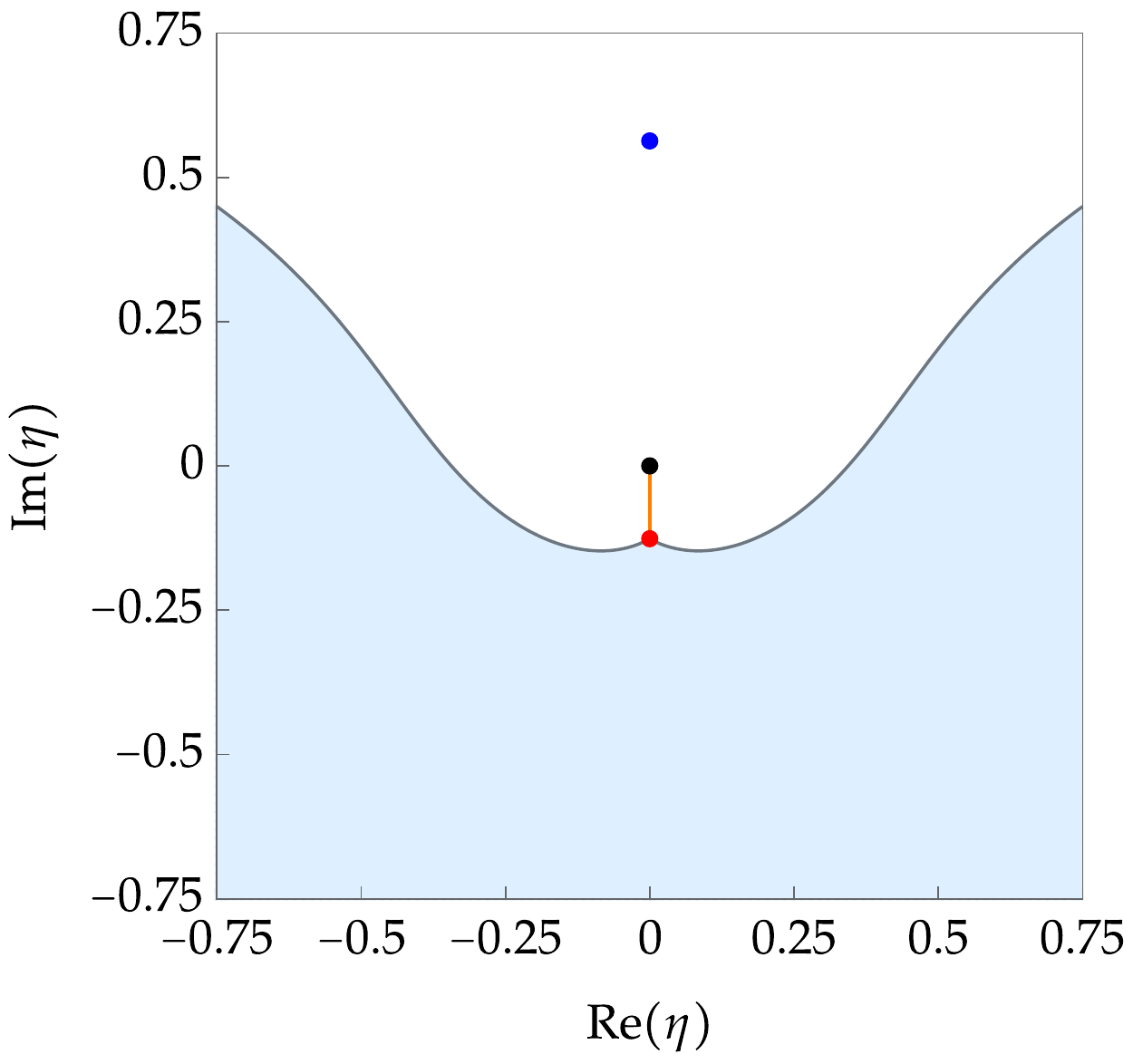}\hfill\includegraphics[width=0.32\linewidth]{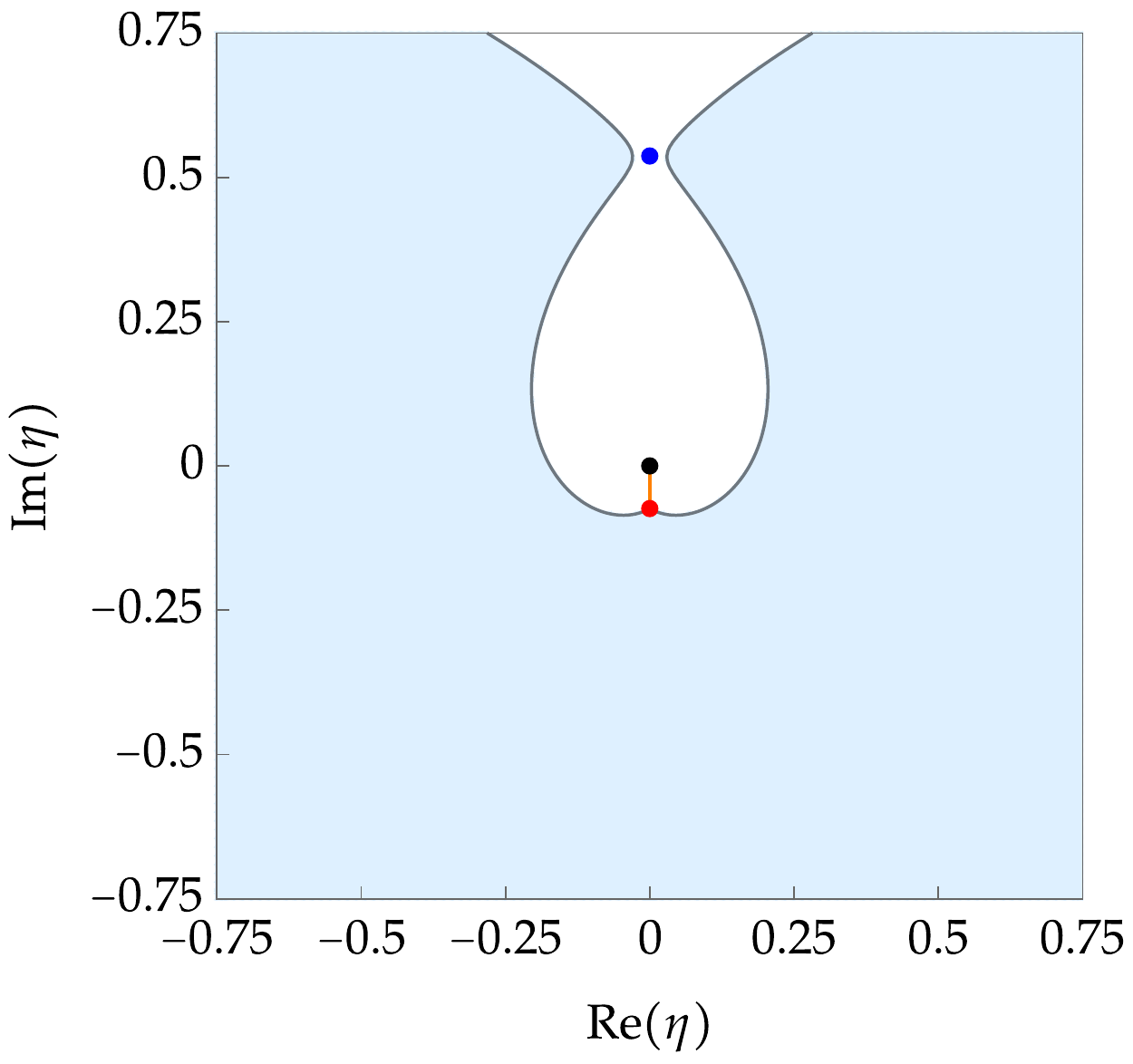}\hfill%
\includegraphics[width=0.32\linewidth]{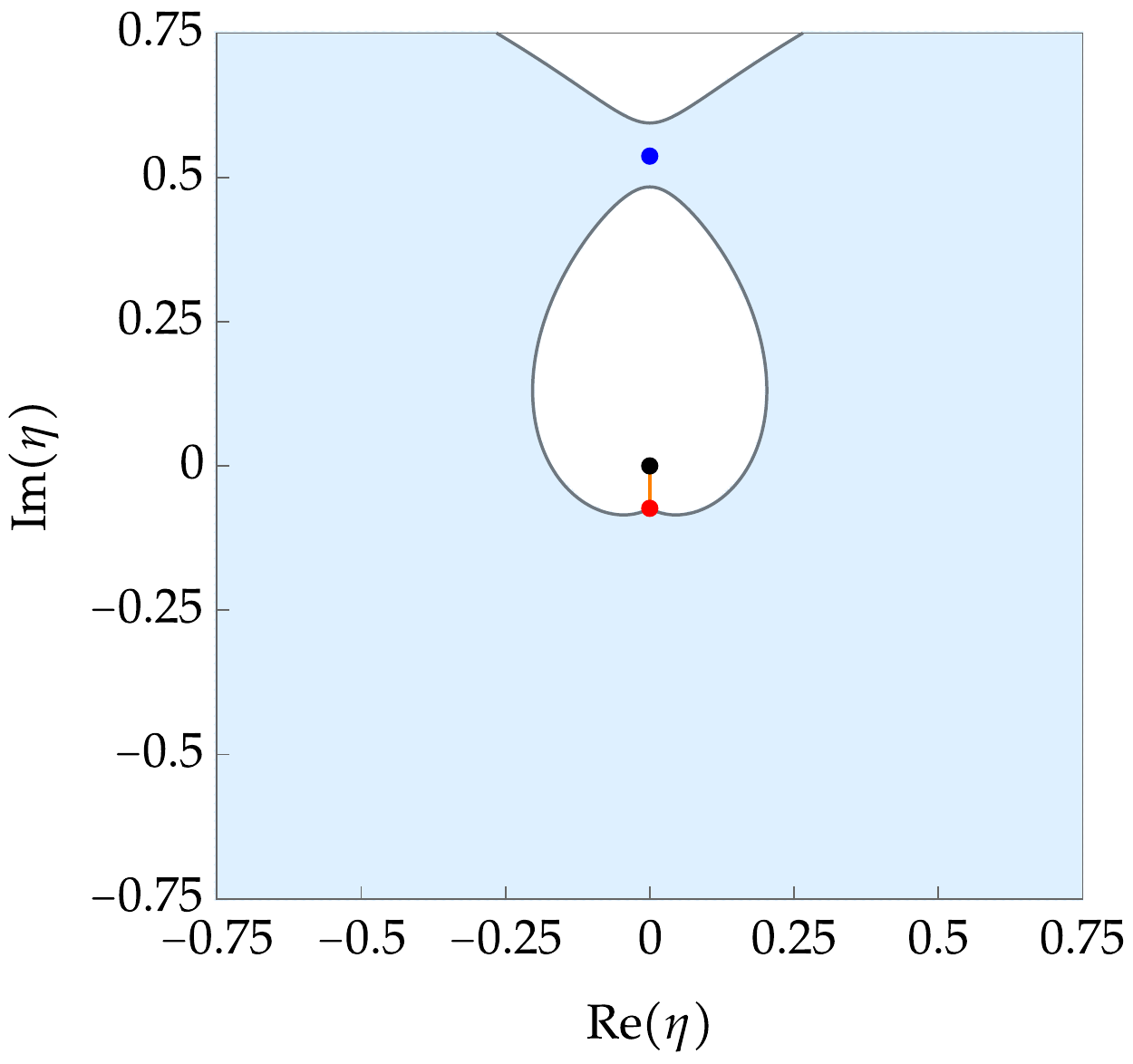}
\caption{Left:  the sign chart of $\mathrm{Re}(h(\eta,y))$ for $y=0.4$.  The branch cut of $h_\eta(\eta,y)$ is shown with an orange line, the points $\eta=\ii s,0,\ii d$ are shown with red, black, and blue dots respectively, and the region where $\mathrm{Re}(h(\eta,y))<0$ holds is indicated with light blue shading.  Center:  the same for $y=0.292$.  Right:  the same for $y=0.291$.  Note that $y=0.291<y_\mathrm{c}$, so Theorem \ref{thm:positive-exterior} does not apply here.}
\label{fig:sign-charts}
\end{figure}

\subsection{Analytic continuation for $0\le|\arg(y)|\le 3\pi$ and boundary curves}
 While Riemann-Hilbert Problem~\ref{rhp:algebraic} and the equivalent conditions for the matrix $\mathbf{Y}^{(n)}(\lambda,x)$ discussed in Section~\ref{sec:branch-cut} were originally formulated assuming that $x>0$, under a suitable deformation of the jump contours near the origin they remain a valid description of $u_n(x)$ for $\arg(x)\neq 0$.  The scalings $x=n^{3/2}y$ and $\lambda=n^{-1/2}y^{-1}\eta$ introduced in Section~\ref{sec:rescaling} can also be used, and hence it makes sense to analyze the matrix $\mathbf{Z}^{(n)}(\eta,y)$ for large $n$ when $\arg(y)\neq 0$.  For this problem, the steepest descent directions as $\eta\to\infty$ are independent of $\arg(y)$;  however the steepest descent directions into the singularity at $\eta=0$ that are vertical when $\arg(y)=0$ more generally lie tangent to the rays with angles $2\arg(y)\pmod\pi$.  So the tangents at the origin of the jump contour arcs $\Sigma_0^\pm$ have to rotate as $y$ moves off the positive real axis, in the same direction as $\arg(y)$ but twice as much. 
 
Degenerate spectral curves (for which the cubic $P(\mu,y)$ has a repeated root, see Section~\ref{sec:SpectralCurves}) exist for $\arg(y)\neq 0$.  Given the solution $s=s(y)<0$ and $d=d(y)>\frac{1}{2}$ of \eqref{eq:matching-coefficients} analytic for $y>0$, one simply continues the solution into the complex plane.  Although one cannot generally write $h(\eta,y)$ in terms of principal branches as in \eqref{eq:h-log}, the formula \eqref{eq:hprime} for $\partial h/\partial\eta$ remains valid under the interpretation that the right-hand side is analytic off an arc $\Sigma_0^-$ connecting the origin with $\eta=\ii s$, tends to $-\ii$ as $\eta\to\infty$, and its square is equal to the single-valued function $(\eta-\ii d)^2(-\ii\eta)^{-3}(-\ii\eta-s)$.  Using the condition $2d+s=1$ in \eqref{eq:matching-coefficients} to eliminate $d$ then shows that $\partial h/\partial\eta$ depends on $y$ only through $s$; the same condition then guarantees that 
\begin{equation}
L(s):=\mathrm{Re}\left(\int_{\ii s}^{\ii (1-s)/2}\frac{\partial h}{\partial\eta}\,\dd\eta\right)
\end{equation}
is well defined regardless of the path of integration taken in the multiply connected domain of analyticity of the integrand.  Topological changes in the level curve $\mathrm{Re}(h(\eta,y))=0$ occur when $y$ corresponds to a value of $s$ for which $L(s)=0$, since this condition detects the double root $\eta=\ii d=\ii(1-s)/2$ lying on the same level curve as $\eta=\ii s$.  For reliable numerical evaluation of $L(s)$ it is convenient to integrate by parts and obtain the equivalent formula
\begin{multline}
L(s)=\mathrm{Re}\left(\int_{\ii s}^0\left[2+\frac{\eta-\ii(1-s)/2}{\eta-\ii s}\right]\left(-\frac{\eta-\ii s}{\eta}\right)^{1/2}\,\dd\eta\right.\\
\left.{}-\ii
\int_{0}^{\ii (1-s)/2}\left[2+\frac{\eta-\ii(1-s)/2}{\eta-\ii s}\right]\left(\frac{\eta-\ii s}{\eta}\right)^{1/2}\,\dd\eta\right)
\end{multline}
where in each integral the path of integration is a straight line, and the fractional powers are principal branches, which are well-defined and analytic on the paths of integration.  The locus $L(s)=0$ in the $s$-plane is displayed in the left-hand panel of Figure~\ref{fig:boundary-curves}.  The point of self-intersection is exactly $s=\frac{1}{3}$ (corresponding to the coalescence of $\eta=\ii s$ and $\eta=\ii d$), and $L(s)=0$ for all $s\in[\frac{1}{3},1]$.

Since \eqref{eq:matching-coefficients} implies that $s$ satisfies the cubic equation $s(s-1)^2=-y^2$, and since we are taking the solution that for large $y>0$ satisfies $s=-y^{2/3}(1+o(1))$, upon introducing $Y=y^{1/3}$ it is straightforward to determine $s$ near $Y=\infty$ in the complex plane as an even analytic function of $Y$ with asymptotic behavior $s=-Y^2+\mathcal{O}(1)$ as $Y\to\infty$.  The discriminant of the cubic vanishes if $Y=0$ or if $Y^6=-\frac{4}{27}$, the latter giving six points on the circle centered at $Y=0$ of radius $|Y|=2^{1/3}/3^{1/2}$ at angles $\arg(Y)=\pm\pi/6, \pm\pi/2,\pm 5\pi/6$.  Defining $s$ as a function of $Y$ by analytic continuation in from $Y=\infty$ of the solution of $s(s-1)^2=-Y^6$ along radial paths we expect branch cuts to appear from some of these branch points, connecting them to the origin with straight lines.  However, this solution is positive real for large imaginary $Y$ and remains real upon continuation inwards along the imaginary $Y$-axis.  Moreover it is clear that the strict inequality $s>1$ holds for all nonzero imaginary $Y$.  Now, at a branch point we would also have $\dd (s(s-1)^2)/\dd s=0$ or $(3s-1)(s-1)=0$, but since this cannot vanish if $s>1$ it follows that the function $s=s(Y)$ of interest cannot be branched at the two purely imaginary branch points of the cubic $s(s-1)^2=-Y^6$.  It is, however, branched at the remaining four branch points, so the domain of analyticity of $s(Y)$ is the complement of the union of two crossing line segments, one with endpoints $\pm (2^{1/3}/3^{1/2})\ee^{\ii\pi/6}$ and the other with endpoints $\pm (2^{1/3}/3^{1/2})\ee^{5\pi \ii/6}$.  These two segments are shown in orange in the right-hand panel of Figure~\ref{fig:boundary-curves}.

We have already observed that $s(Y)>1$ holds for nonzero imaginary $Y$, so these points are not on the level curve $L(s(Y))=0$.  Approaching the origin along the imaginary axis one  necessarily arrives at the limiting value of $s=1$ consistent with $Y=0$.  From $s(s-1)^2=-Y^6$ one then sees from the double factor of $s-1$ on the left-hand side that as $\arg(Y)$ increases/decreases by $\pi/3$, $\arg(s-1)$ increases/decreases by $\pi$, so $s(Y)$ remains real as $Y$ moves outwards along the top/bottom edges of the branch cuts in the upper/lower half-plane, and decreases monotonically from $s=1$ to $s=\frac{1}{3}$ at the terminal branch points.  These ``outer'' edges of the branch cuts are therefore  points on the level curve $L(s(Y))=0$.  Continuing $s(Y)$ around any of the branch points to the ``inner'' edge of the branch cut, $s$ remains real but decreases further as $Y$ moves along the edge toward the origin, taking the value $s=0$ in the limit.  Since $s(Y)<\frac{1}{3}$ along the inner edges of the branch cuts, these edges are \emph{not} on the level curve $L(s(Y))=0$.   To summarize, the image of the interval $\frac{1}{3}\le s\le 1$ where $L(s)=0$ on the branch $s=s(Y)$ continued radially from $Y=\infty$ where $s(Y)=-Y^2+\mathcal{O}(1)$ consists of the top/bottom edges of the straight-line branch cuts in the upper/lower half-plane.  The images of the loop joining $s=\frac{1}{3}$ with itself and of the two non-real unbounded curves seen in the left-hand panel of Figure~\ref{fig:boundary-curves} are easy to compute numerically by evaluation of $L(s(Y))$.  The result is shown in the right-hand panel of Figure~\ref{fig:boundary-curves}, which should be compared with Figure~\ref{fig:ohyama} in which $\zeta=n^{1/2}Y$.  We denote the open subset of the $Y$-plane exterior to the bounded ``bow-tie'' component of the boundary curve by $\mathcal{E}$.  Note that $\mathcal{E}$ contains four unbounded arcs of the boundary curve; we will explain the significance of these below.

\begin{figure}[h]
\includegraphics[width=0.45\linewidth]{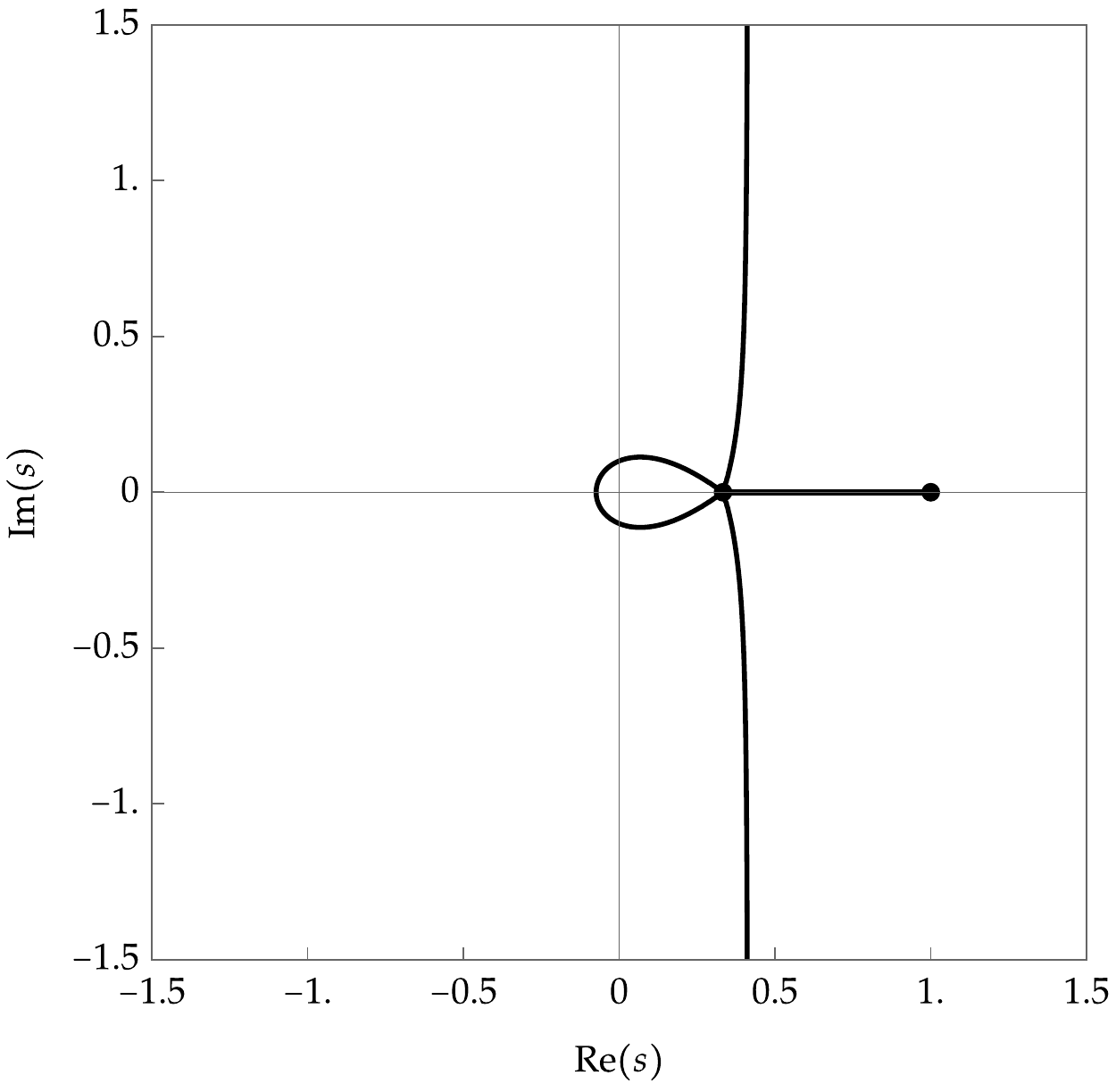}\hfill\includegraphics[width=0.45\linewidth]{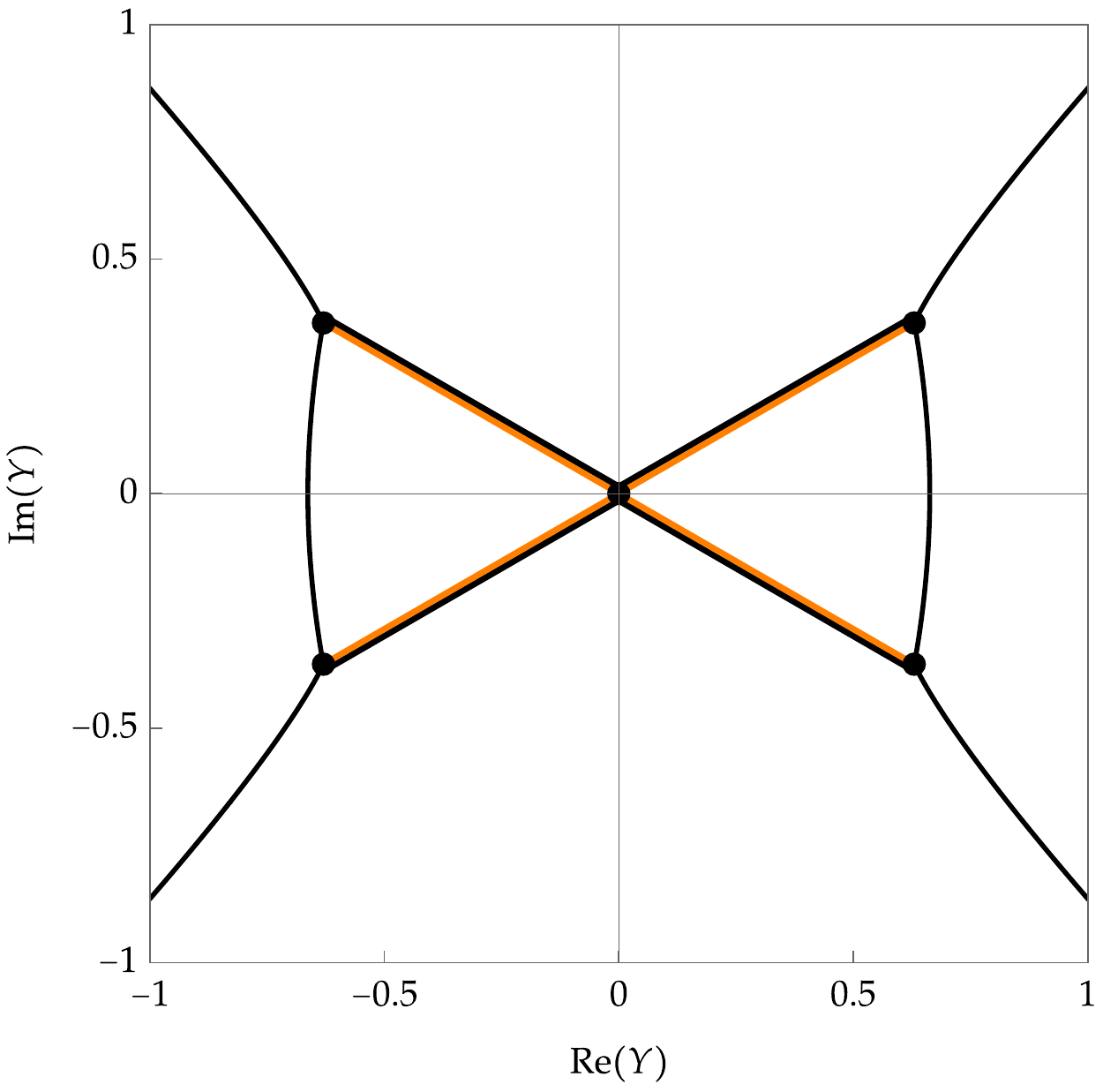}
\caption{Left:  the boundary curve in the $s$-plane.  Right:  the boundary curve in the $Y$-plane ($Y=y^{1/3}$); the crossing orange segments form the branch cut for the function $s(Y)$.}
\label{fig:boundary-curves}
\end{figure}

Given the symmetries in \eqref{eq:symmetries}, since $\zeta$ is proportional to $Y$, it is sufficient to assume that $0\le\arg(Y)\le\pi/2$.  To explain how to deform the jump contour for $\mathbf{Z}^{(n)}(\eta,y)$ as $\arg(Y)$ increases from zero for $Y\in\mathcal{E}$, we first observe that since the jump matrices on $\Sigma_0^+$ and $C^-$ are inverses we may reverse the orientation of $\Sigma_0^+$ and consider it joined with $C^-$ as a single contour from $\eta=0$ with a vertical tangent (when $\arg(Y)=0$) to the common endpoint of $C^-$ and $\Sigma_0^-$.  Then one sees that after reversing the orientation of $\Sigma_\infty^+$ it may be similarly combined with $C^+$ as a single contour from $\eta=\infty$ in the upper half-plane that terminates at the common endpoint of $C^+$ and $\Sigma_0^-$.  We call these combined contours $C^-$ and $C^+$ respectively.  The jump contour for $\mathbf{Z}^{(n)}(\eta,y)$ is therefore simplified to a union of four arcs:  $\Sigma_\infty^-\cup\Sigma_0^-\cup C^+\cup C^-$.  

For $\arg(Y)>0$ small with $Y\in\mathcal{E}$, there is an arc joining $\eta=0$ to $\eta=\ii s$ along which $h_\eta(\eta,y)^2\,\dd\eta^2<0$, and we choose this arc to be the contour $\Sigma_0^-$ which is also the branch cut for $h_\eta(\eta,y)$.  Since $h_\eta(\eta,y)$ has a residue of $-\frac{1}{2}$ at $\eta=\infty$, integration yields $h(\eta,y)$ as the function analytic for $\eta\in\mathbb{C}\setminus(\Sigma_0^-\cup\Sigma_\infty^-)$ and we choose the integration constant so that $h_+(\eta,y)+h_-(\eta,y)=0$ for $\eta\in\Sigma_0^-$.  This implies that also $g(0,y)=0$.  Moreover, $\mathrm{Re}(h(\eta,y))$ is harmonic except along $\Sigma_0^-$, which is an arc of its zero level curve albeit one across which it does not change sign.  The topological configuration of the zero level curve of $\mathrm{Re}(h(\eta,y))$ is common to all points $Y\in\mathcal{E}$ lying in the first quadrant and below the unbounded arc of the boundary curve emanating from the corner point $Y=2^{1/3}3^{-1/2}\ee^{\ii\pi/6}$.  The jump contour $\Sigma_\infty^-\cup\Sigma_0^-\cup C^+\cup C^-$ can be positioned relative to the level curve as illustrated in the representative plot shown in green (and orange, for $\Sigma_0^-$) in panel $\boxed{3}$ of Figure~\ref{fig:SignaturePlots}.
\begin{figure}[h]
\includegraphics[width=0.24\linewidth]{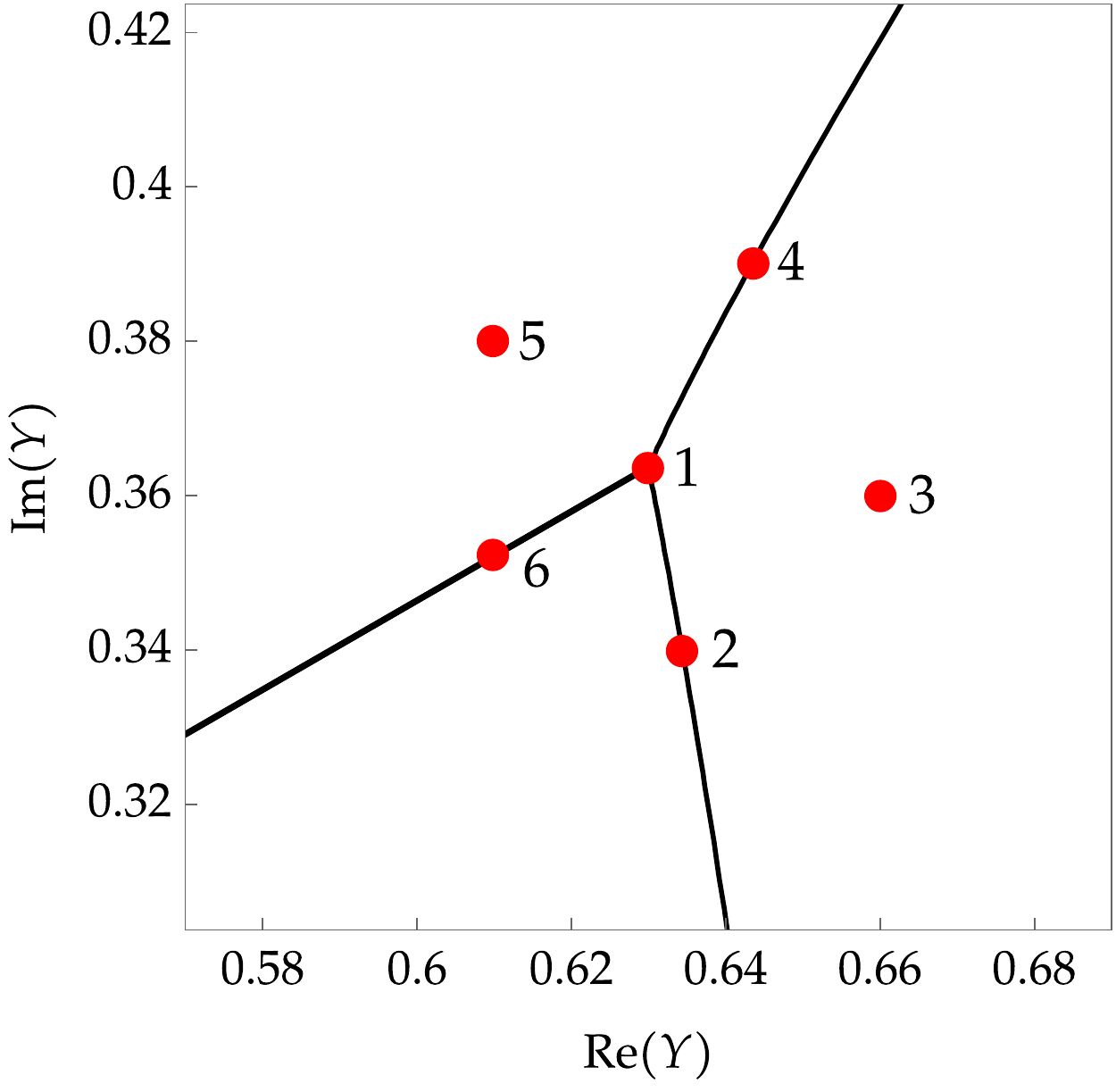}\hfill%
\includegraphics[width=0.24\linewidth]{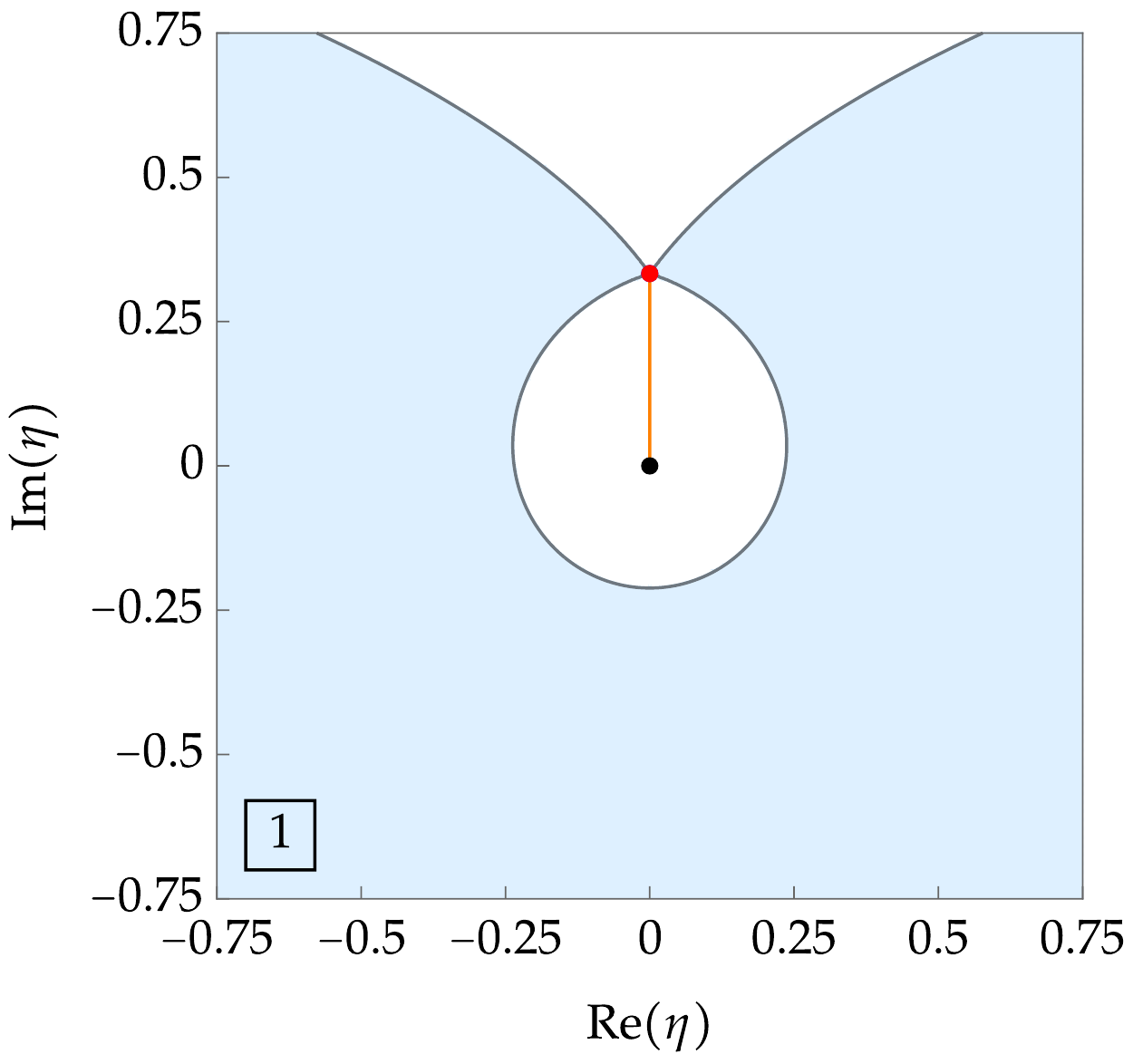}\hfill%
\includegraphics[width=0.24\linewidth]{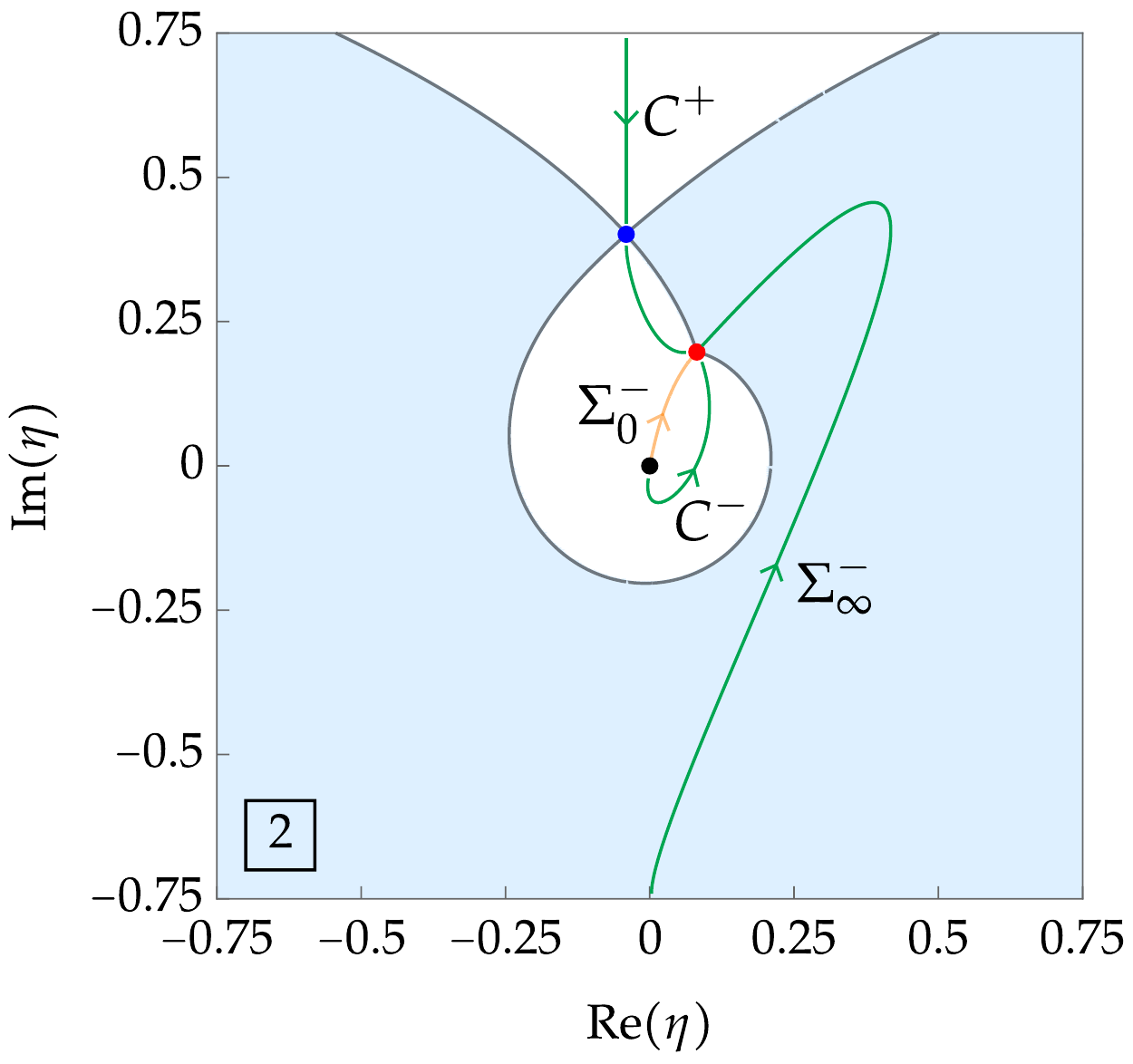}\hfill%
\includegraphics[width=0.24\linewidth]{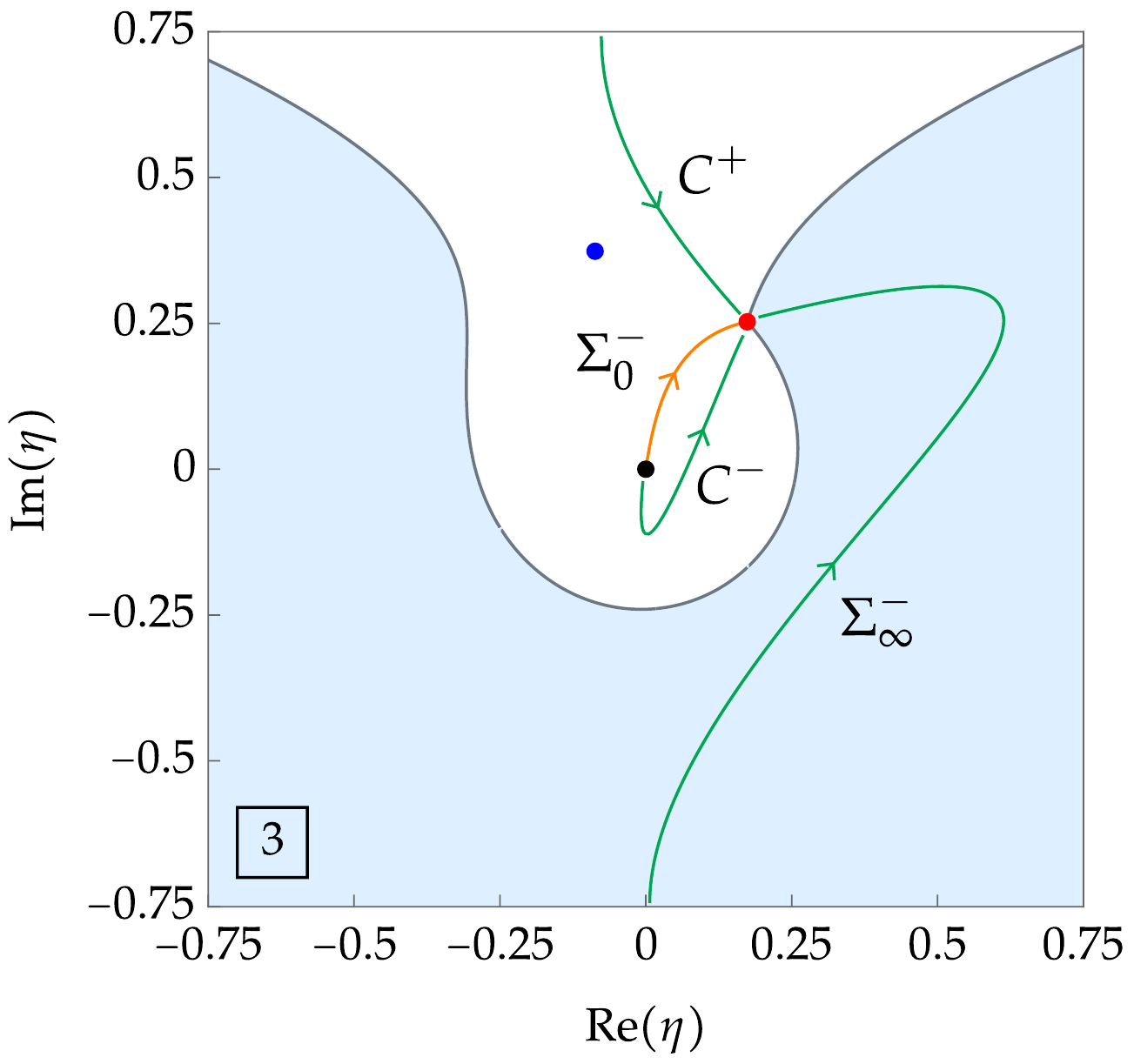}\\
\includegraphics[width=0.24\linewidth]{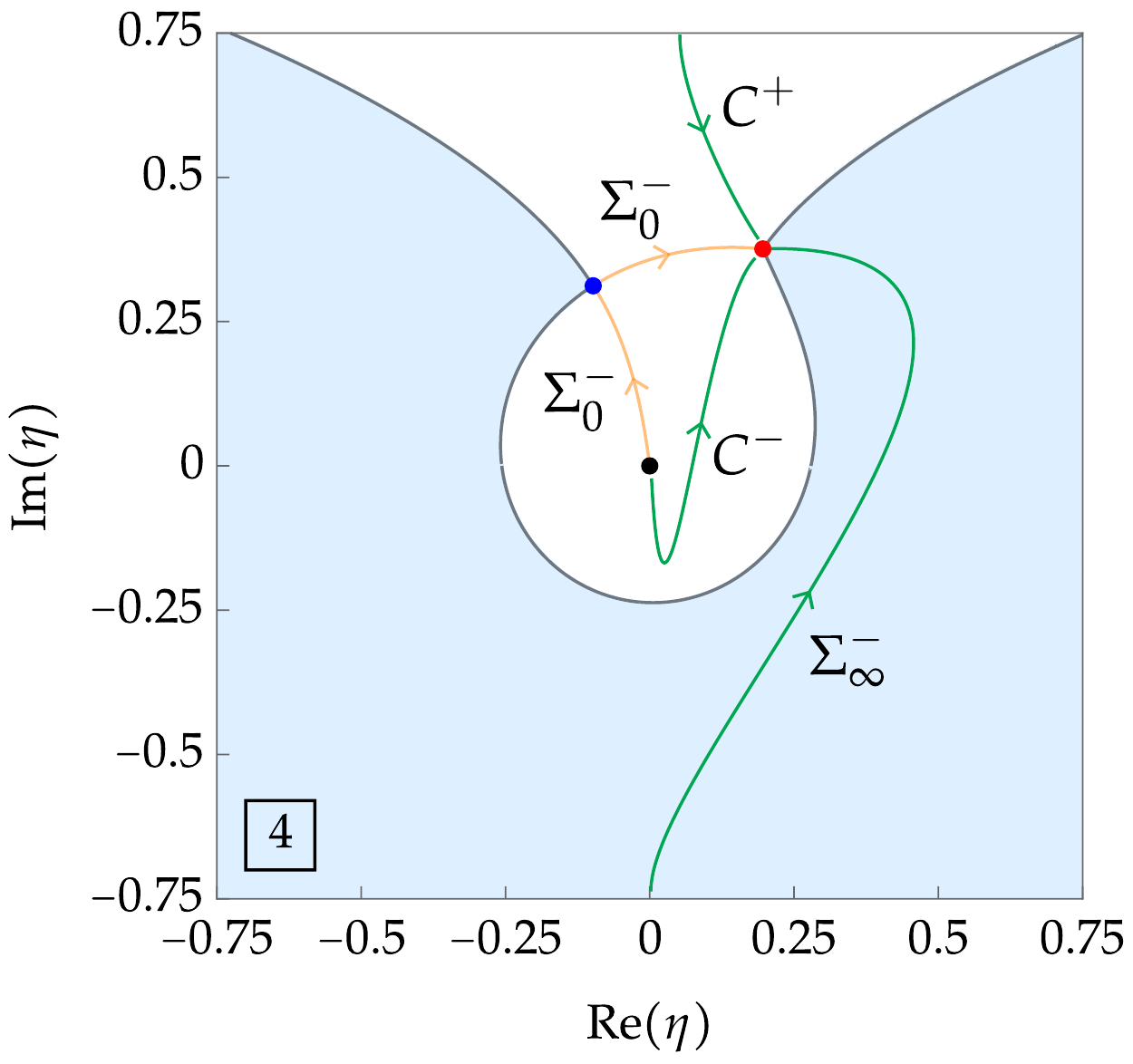}\hfill%
\includegraphics[width=0.24\linewidth]{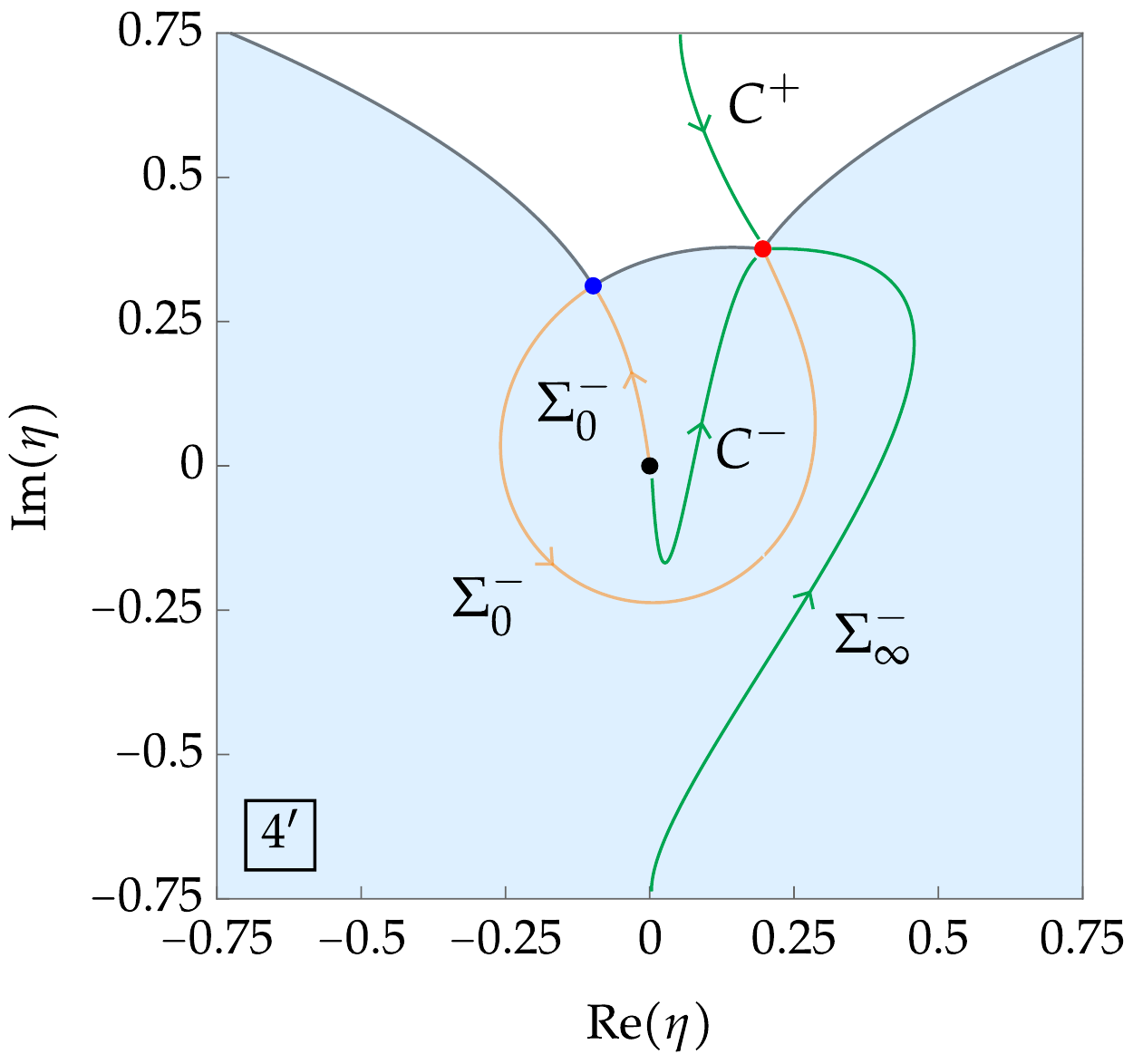}\hfill%
\includegraphics[width=0.24\linewidth]{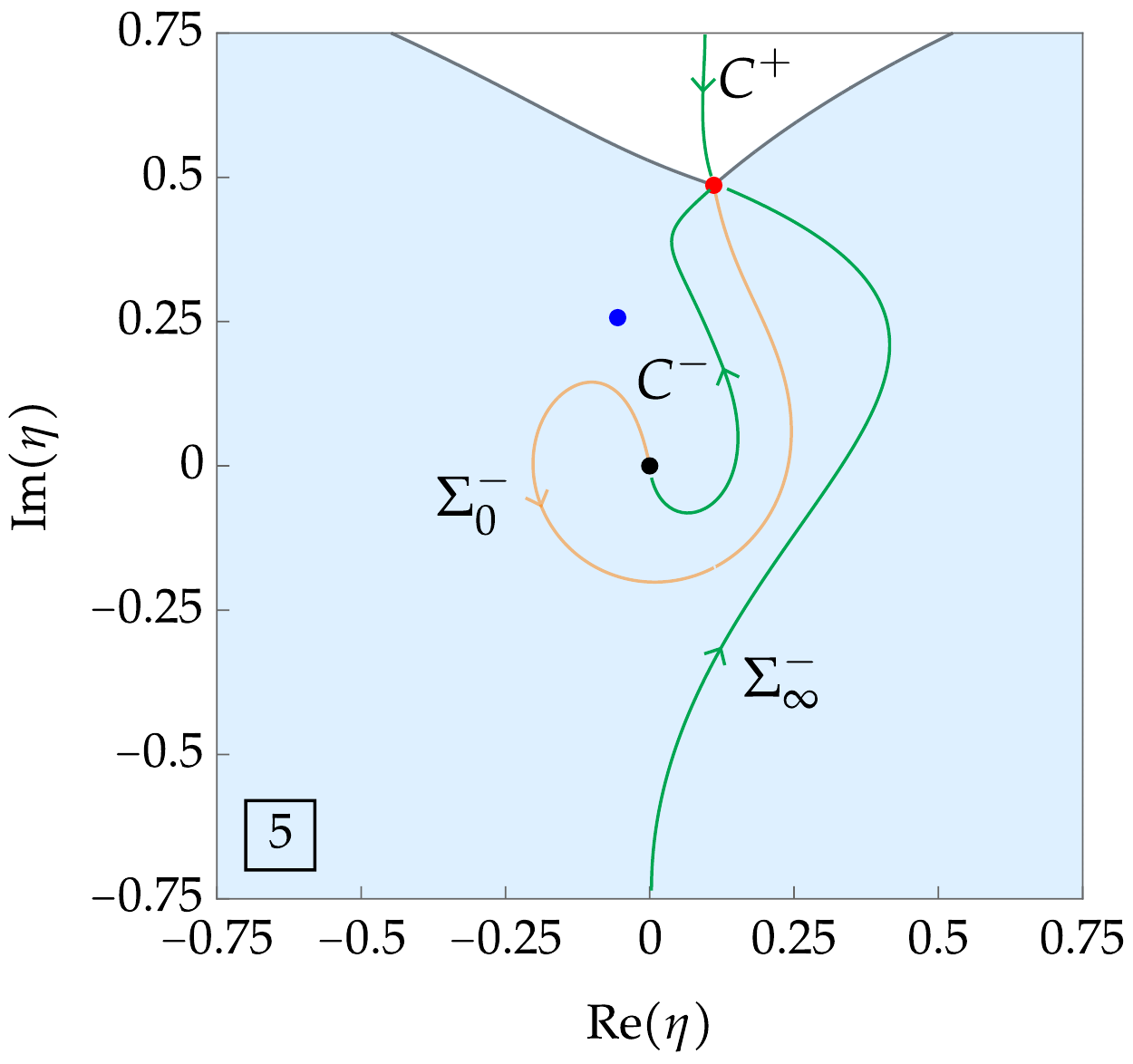}\hfill%
\includegraphics[width=0.24\linewidth]{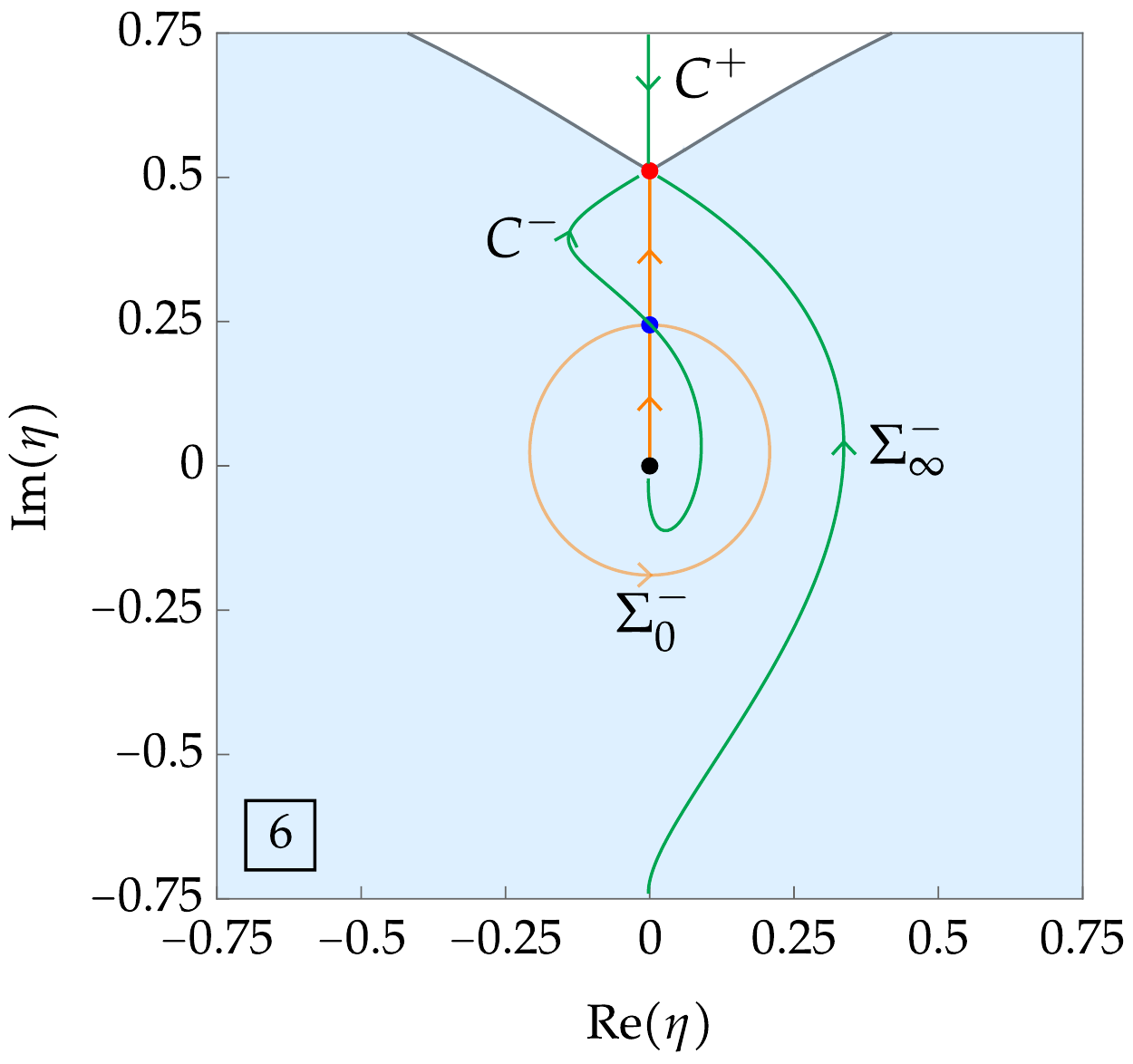}
\caption{Upper left-hand panel:  the $Y$-plane near the upper right-hand corner point indicating values of $Y=y^{1/3}$ corresponding to sign charts of $\mathrm{Re}(h(\eta,y))$ in the remaining panels (with the same annotation scheme as in Figure~\ref{fig:sign-charts}).  Superimposed on plots \framebox{2}--\framebox{6} are the arcs of the jump contour for $\mathbf{M}^{(n)}(\eta,y)$, with orange for $\Sigma_0^-$ (where $\mathrm{Re}(h(\eta,y))$ vanishes but fails to be harmonic) and green for the remaining arcs that are somewhat arbitrary.  Plots \framebox{4'}--\framebox{6} use a modified version of $h(\eta,y)$ and jump contour for $\mathbf{M}^{(n)}(\eta,y)$.}
\label{fig:SignaturePlots}
\end{figure}

When $Y$ moves onto the unbounded arc of the boundary curve from below, the point $\eta=\ii d$ descends from $\mathrm{Re}(h(\ii d,y))>0$ to the zero level curve $\mathrm{Re}(h(\ii d,y))=0$.  From the representative plot in panel $\boxed{4}$ of Figure~\ref{fig:SignaturePlots} one sees that the arc $\Sigma_0^-$ has developed a corner at $\eta=\ii d$.  However, this phenomenon has \emph{no effect} on the parametrix construction or error analysis, because the jump matrix for $\mathbf{M}^{(n)}(\eta,y)$ is independent of $\eta$ on $\Sigma_0^-$ because $h_+(\eta,y)+h_-(\eta,y)=0$.  On the other hand, to move $Y$ into the region above this ``phantom'' arc of the boundary curve, it is convenient to modify $\mathbf{M}^{(n)}(\eta,y)$ as follows.  In panel $\boxed{4}$ of Figure~\ref{fig:SignaturePlots} one can see that the arc of $\Sigma_0^-$ joining $\eta=\ii d$ with $\eta=\ii s$ forms part of the boundary of a ``bubble'' $B$ containing $\eta=0$ on which $\mathrm{Re}(h(\eta,y))\ge 0$ holds (strict inequality except on the other arc of $\Sigma_0^-$ joining the origin with $\eta=\ii d$).  For $\eta\in B$ we define $\widetilde{\mathbf{M}}^{(n)}(\eta,y):=\mathbf{M}^{(n)}(\eta,y)(-1)^n\ii\sigma_1$, while in the exterior of $B$ we set $\widetilde{\mathbf{M}}^{(n)}(\eta,y):=\mathbf{M}^{(n)}(\eta,y)$.    Dropping tildes, the jump contour for the modified $\mathbf{M}^{(n)}(\eta,y)$ is illustrated with green and orange arcs in panel $\boxed{4'}$ of Figure~\ref{fig:SignaturePlots}.  One can easily check that on both arcs of the new version of $\Sigma_0^-$, the original jump condition $\mathbf{M}^{(n)}_+(\eta,y)=\mathbf{M}^{(n)}_-(\eta,y)(-1)^n\ii\sigma_1$ holds.  The jump condition on $C^-$ is modified to read
\begin{equation}
\mathbf{M}^{(n)}_+(\eta,y)=\mathbf{M}^{(n)}_-(\eta,y)\ee^{-nh(\eta,y)\sigma_3}\begin{bmatrix}1& -\ii\\0 & 1\end{bmatrix}\ee^{nh(\eta,y)\sigma_3},\quad\eta\in C^-.
\end{equation}
Since the property that $\mathrm{Re}(h(\eta,y))$ is harmonic except on $\Sigma_0^-$ is a useful one to maintain, for convenience we redefine $h(\eta,y)$ by setting $\widetilde{h}(\eta,y):=-h(\eta,y)$ for $\eta\in B$ and $\widetilde{h}(\eta,y):=h(\eta,y)$ elsewhere.  The condition $h_+(\eta,y)+h_-(\eta,y)=0$ on the arc of $\Sigma_0^-$ joining $\eta=\ii d$ with $\eta=\ii s$ then shows that $\widetilde{h}(\eta,y)$ is analytic across that arc, and dropping tildes, the jump contour for $h(\eta,y)$ is once again a subset of that of $\mathbf{M}^{(n)}(\eta,y)$.  When we use this modified form of $\mathbf{M}^{(n)}(\eta,y)$, we need to account for the artificial change of sign of $h(\eta,y)$ by noting that $g(\eta,y)=-h(\eta,y)+\Phi(\eta,y)$ holds for $\eta\in B$.  In particular, the condition $g(0)=0$ still holds.  Panels $\boxed{4'}$, $\boxed{5}$, and $\boxed{6}$ of Figure~\ref{fig:SignaturePlots} show the jump contour for the modified matrix $\mathbf{M}^{(n)}(\eta,y)$ on the landscape of the modified $\mathrm{Re}(h(\eta,y))$.  With this modification, the same proof applicable for $Y$ below the unbounded arc of the boundary curve also works \emph{mutatis mutandis} for $Y$ on and above this curve, with the same resulting asymptotic formula \eqref{eq:un-asymp-formula} for $u_n(x)$.  Therefore, we have the following generalization of Theorem~\ref{thm:positive-exterior}.
\begin{theorem}
$u_n(n^{3/2}Y^3)=n^{1/2}U+\mathcal{O}(n^{-1/2})$ holds uniformly for $Y$ in compact subsets of $\mathcal{E}$, where $U=U(Y)$ satisfies $8U^3+2U-Y^3=0$ and is analytic for $Y\in \mathcal{E}$ with $U=\frac{1}{2}Y$ as $Y\to\infty$.  In particular, $u_n(n^{3/2}Y^3)$ is pole- and zero-free on $\mathcal{E}$ for $n$ large.  The compact set $\mathbb{C}\setminus\mathcal{E}$ has a ``bow-tie'' shape with boundary consisting of two straight line segments joining the pairs $\pm (2^{1/3}/3^{1/2})\ee^{\ii\pi/6}$ and $\pm (2^{1/3}/3^{1/2})\ee^{5\pi\ii/6}$ and two curved arcs satisfying $L(s(Y))=0$.
\label{thm:general-exterior}
\end{theorem}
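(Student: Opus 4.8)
The plan is to obtain the theorem by promoting the steepest-descent analysis behind Theorem~\ref{thm:positive-exterior} to all of $\mathcal{E}$, one angular sector of $\arg(Y)$ at a time. First I would use the symmetries \eqref{eq:symmetries}: since $\zeta$ is proportional to $Y$, the involutions $Y\mapsto\overline Y$ and $Y\mapsto -Y$ send $u_n$ to $\overline{u_n}$ and to $-u_n$ respectively, and both fix $\mathcal{E}$ and the cubic $8U^3+2U-Y^3=0$; hence it suffices to prove the asymptotic formula uniformly on a compact subset $K$ of $\mathcal{E}\cap\{0\le\arg(Y)\le\tfrac{\pi}{2}\}$ (for $Y$ real positive with $Y^3>y_{\mathrm c}$ this is already Theorem~\ref{thm:positive-exterior}, so the new content is $\arg(Y)>0$). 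On $K$ I would use the analytically continued degenerate spectral-curve data $s=s(Y)$, $d=d(Y)=(1-s)/2$, $c=c(Y)$ solving \eqref{eq:matching-coefficients}, the function $\partial h/\partial\eta$ prescribed by \eqref{eq:hprime} (analytic off the arc $\Sigma_0^-$ joining $0$ to $\ii s$, depending on $Y$ only through $s$), and its antiderivative $h(\eta,y)$ normalized so that $h_+(\eta,y)+h_-(\eta,y)\equiv 0$ on $\Sigma_0^-$, which forces $g(0,y)=0$; the remaining jump arcs $\Sigma_\infty^-\cup C^+\cup C^-$ are deformed to follow the steepest-descent landscape of $\mathrm{Re}(h)$ as in Figure~\ref{fig:SignaturePlots}.

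Next I would split $K$ according to whether $Y$ lies below (or on) the unbounded ``phantom'' arc of the boundary curve emanating from $2^{1/3}3^{-1/2}\ee^{\ii\pi/6}$, or on or above it. Below that arc the zero set of $\mathrm{Re}(h(\eta,y))$ has the same topology as for $y>y_{\mathrm c}$ real (panels $\boxed{3}$, $\boxed{4}$ of Figure~\ref{fig:SignaturePlots}): $\mathrm{Re}(h)>0$ on the parts of $\Sigma_0^+\cup\Sigma_\infty^+\cup C^\pm$ outside the Airy disk $D_{\ii s}$ and $\mathrm{Re}(h)<0$ on $\Sigma_\infty^-$ outside $D_{\ii s}$. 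Consequently every ingredient carries over verbatim: the outer parametrix \eqref{eq:M-outer}, the Airy parametrix \eqref{eq:M-inner} at $\eta=\ii s$, the global parametrix \eqref{eq:global-parametrix}, and the small-norm conclusion $\mathbf{D}^{(n)}(\eta,y)=\mathbb{I}+\mathcal{O}(n^{-1})$ with $\mathbf{D}^{(n)}$ continuous at $\eta=0$. The only change is that the tangent of $\Sigma_0^\pm$ at the origin rotates by $2\arg(Y)$, which is immaterial since $\eta=0$ stays in the analyticity region of the outer parametrix, and the corner that $\Sigma_0^-$ develops at $\eta=\ii d$ when $Y$ reaches the phantom arc is harmless because the jump there is the $\eta$-independent matrix $(-1)^n\ii\sigma_1$. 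To pass $Y$ on or above the phantom arc I would adopt the modified unknown $\widetilde{\mathbf{M}}^{(n)}(\eta,y):=\mathbf{M}^{(n)}(\eta,y)(-1)^n\ii\sigma_1$ inside the bubble $B$ about the origin, together with $\widetilde h:=-h$ in $B$; as explained above this restores harmonicity of $\mathrm{Re}(\widetilde h)$ off $\Sigma_0^-$ and the favorable sign pattern of panels $\boxed{4'}$, $\boxed{5}$, $\boxed{6}$, while the relation $g=-h+\Phi$ in $B$ keeps $g(0,y)=0$, so the same parametrix argument applies. Uniformity over $K$ follows because all geometric data entering the estimates (the location $\eta=\ii s$, the radius of $D_{\ii s}$, the angular separations of the level-curve branches, and the gap in the inequalities $\mathrm{Re}(h)\ge\delta>0$ and $\mathrm{Re}(h)\le-\delta<0$ away from $\Sigma_0^-$ and $D_{\ii s}$) depend continuously on $Y$ and remain non-degenerate on the compact set $K\subset\mathcal{E}$.

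Having the parametrix uniformly on $K$, I would re-run the extraction of Section~\ref{sec:application}: compute $\mathbf{B}_0^{(n)}(n^{3/2}y)$ from the limit \eqref{eq:Z-origin-limit} using \eqref{eq:M-Z}, \eqref{eq:global-parametrix}--\eqref{eq:discrepancy}, the origin value \eqref{eq:outer-limit-at-origin} of the outer parametrix, and the expansion \eqref{eq:g-expand-origin} of $g$, and feed the result into \eqref{eq:un-formula-Y}. Exactly as in the derivation of \eqref{eq:un-asymp-formula}, the diagonal prefactors and the term proportional to $n(1+3s)$ cancel in the relevant product of column entries, yielding $u_n(n^{3/2}Y^3)=\tfrac12 n^{1/2}(-s(Y^3))^{1/2}+\mathcal{O}(n^{-1/2})$ uniformly for $Y\in K$. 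Setting $U:=\tfrac12(-s(Y^3))^{1/2}$ and using \eqref{eq:matching-coefficients} (which give $\tfrac12(-s)^{1/2}=y/(4d)$ and $d^2(1-2d)=-\tfrac14 y^2$) produces $8U^3+2U-Y^3=0$, cf.\ \eqref{eq:equilibrium-cubic-from-RHP} and \eqref{eq:U-equilibrium}; the branch is pinned down by $s=-Y^2(1+o(1))$, hence $U=\tfrac12 Y(1+o(1))$, as $Y\to\infty$. Since $s(Y)$ is analytic and nonvanishing on $\mathcal{E}$ (its only branch cuts are the two crossing segments interior to the bow-tie, and $s=0$ would force $Y=0\notin\mathcal{E}$), $U=U(Y)$ is analytic on $\mathcal{E}$; and because $8U^3+2U=Y^3$ keeps $U$ bounded and bounded away from $0$ on compacts, the asymptotic formula shows $u_n(n^{3/2}Y^3)$ is pole- and zero-free on compact subsets of $\mathcal{E}$ for $n$ large. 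Finally, the stated geometry of $\mathbb{C}\setminus\mathcal{E}$ — two straight segments joining $\pm(2^{1/3}/3^{1/2})\ee^{\ii\pi/6}$ and $\pm(2^{1/3}/3^{1/2})\ee^{5\pi\ii/6}$, these being the images of $s\in[\tfrac13,1]$ where $L(s)=0$, together with two curved arcs $L(s(Y))=0$ — is precisely what the analysis of the boundary curves above and Figure~\ref{fig:boundary-curves} supply, after comparison with Figure~\ref{fig:ohyama} under $\zeta=n^{1/2}Y$.

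I expect the main obstacle to be the topological bookkeeping for the level set of $\mathrm{Re}(h(\eta,y))$ (and of the modified $\mathrm{Re}(\widetilde h)$ beyond the phantom arc): one must verify that, uniformly on compact subsets of $\mathcal{E}$, this level set retains a configuration into which $\Sigma_0^-$, $\Sigma_\infty^-$, $C^\pm$ can be deformed so that all jump matrices off $\Sigma_0^-$ and $D_{\ii s}$ are exponentially close to $\mathbb{I}$, and that every topological transition encountered as $Y$ ranges over $\mathcal{E}$ is either inert (detected by $L(s(Y))=0$ but harmless, as on the phantom arcs) or absorbed by the $(-1)^n\ii\sigma_1$ modification. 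This is where the Boutroux-type quantity $L(s)$ and the careful casework of Figure~\ref{fig:SignaturePlots} carry the weight of the argument; the remainder is a routine replay of the analysis for $y>y_{\mathrm c}$.
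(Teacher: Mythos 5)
Your proposal follows essentially the same route as the paper: symmetry reduction to $0\le\arg(Y)\le\pi/2$, analytic continuation of the degenerate spectral-curve data $s(Y)$, contour deformation tracking the level set of $\mathrm{Re}(h)$, the observation that the corner at $\eta=\ii d$ on the phantom arc is inert because the jump on $\Sigma_0^-$ is the constant matrix $(-1)^n\ii\sigma_1$, the sign-flip modification of $\mathbf{M}^{(n)}$ and $h$ inside the bubble $B$ beyond that arc, and then the unchanged parametrix/error analysis and extraction yielding $U=\tfrac12(-s)^{1/2}$ and the boundary description via $L(s(Y))=0$. This matches the paper's argument, so no further comparison is needed.
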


\end{document}